\definecolor{Bleu}{RGB}{0,0,204}
\newlength{\continueindent}
\newcommand*{\ALG@customparshape}{\parshape 2 \leftmargin \linewidth \dimexpr\ALG@tlm+\continueindent\relax \dimexpr\linewidth+\leftmargin-\ALG@tlm-\continueindent\relax}
\apptocmd{\ALG@beginblock}{\ALG@customparshape}{}{\errmessage{failed to patch}}
\algnewcommand\algorithmicstack{\textit{Stack:}}
\algnewcommand\Stack{\item[\algorithmicstack]}
\algnewcommand\algorithmicchoosetimegrid{\textit{Choose time grid:}}
\algnewcommand\Choosetimegrid{\item[\algorithmicchoosetimegrid]}
\algnewcommand\algorithmicchoosetimebasis{\textit{Choose time basis:}}
\algnewcommand\Choosetimebasis{\item[\algorithmicchoosetimebasis]}
\algnewcommand\algorithmicbuilddiscretedata{\textit{Build data at time $t_i$:}}
\algnewcommand\Builddiscretedata{\item[\algorithmicbuilddiscretedata]}
\algnewcommand\algorithmicfit{\textit{Fit:}}
\algnewcommand\Fit{\item[\algorithmicfit]}
\algnewcommand\algorithmicpredict{\textit{Predict:}}
\algnewcommand\Predict{\item[\algorithmicpredict]}
\providecommand{\keywords}[1]
{
  \small	
  \textbf{\textit{Keywords---}} #1
}
\DeclareMathOperator*{\argmin}{argmin}
\newtheorem{theorem}{Theorem}
\newtheorem{lemma}{Lemma}
\newtheorem{corollary}{Corollary}
\theoremstyle{definition}
\newtheorem{example}{Example} 
\newtheorem*{example*}{Example} 
\newcommand{\slightspacing}{\setstretch{1.175}}
\newtheorem{remark}{Remark}
\titleformat{\section}{\large\scshape\bfseries}{\thesection.}{1em}{}
\titleformat{\subsection}{\normalfont\bfseries}{\thesubsection.}{1em}{}
\title{Adaptive debiased machine learning\\ using data-driven model selection techniques}
\author{Lars van der Laan, Marco Carone,  Alex Luedtke, Mark van der Laan}
\begin{document}

\singlespacing
\maketitle
\begin{abstract}
Debiased machine learning estimators for smooth functionals in nonparametric models can exhibit substantial variability and instability, often leading practitioners to instead rely on parametric or semiparametric working models. Such models, however, may be misspecified and can therefore introduce bias. We study how data-driven model selection can be combined with debiased machine learning to construct estimators that adapt to structure in the data-generating distribution. To this end, we propose Adaptive Debiased Machine Learning (ADML), a nonparametric framework for constructing superefficient estimators of pathwise differentiable parameters. The framework unifies a broad class of previously proposed adaptive estimators, including methods based on variable selection, learned feature representations, and collaborative targeted learning. It requires only high-level conditions and approximate validity of the selection procedure, which are implied by lower-level conditions already assumed in important settings, including sieve-based selection, sparsity-based methods such as the Lasso, and data-adaptive feature representations. We show that ADML estimators yield regular and efficient root-\(n\) inference for an oracle projection parameter induced by a data-adaptive oracle submodel. This oracle parameter coincides with the target parameter at the true distribution but typically has a smaller efficiency bound, thereby yielding superefficiency for the target parameter. As a practical illustration, we introduce a broad class of automatic ADML estimators for continuous linear functionals of the outcome regression, in which model selection is performed directly on the regression itself. Motivated by overlap challenges in causal inference, we develop new superefficient plug-in estimators for the average treatment effect based on calibration in semiparametric regression models.

\end{abstract}

\keywords{debiased machine learning, semiparametric inference, model selection, superefficiency, oracle efficiency, causal inference}

\doublespacing

\slightspacing

\section{Introduction}


For many scientific applications, including treatment effect estimation and policy learning, it is important to infer real-valued summaries, that is, functionals, of probability distributions. Several debiased machine learning (DML) frameworks are available for this purpose, including one-step estimation \citep{levit1975efficiency, pfanzagl1985contributions, hasminskii1979nonparametric, bickel1993efficient}, estimating equations and double machine learning \citep{robins1995analysis, robinsCausal, vanderlaanunified, DoubleML}, targeted maximum likelihood estimation \citep{laan_rubin_2006, vanderLaanRose2011}, and sieve-based plug-in estimation \citep{shen1997methods, spnpsieve, sieveOneStepPlugin, SieveQiu, discussionMLEMark,undersmoothedHAL}. These frameworks typically involve two stages: a preliminary estimation step, in which flexible machine learning methods are used to estimate the data-generating distribution, and a debiasing step, which enables valid uncertainty quantification under a prespecified statistical model. When the model is correctly specified, these methods yield regular, asymptotically linear estimators that are root-\(n\) consistent and efficient among all regular estimators \citep{bickel1993efficient, vandervaart2000asymptotic}. Moreover, such efficient estimators are locally asymptotically minimax over all estimators, including irregular ones, relative to the statistical model \citep{vandervaart2000asymptotic}. That is, asymptotically, they minimize the worst-case mean squared error over local perturbations of the data-generating distribution that remain within the model.

While debiasing methods are effective for constructing efficient and locally asymptotically minimax estimators, they have an important limitation: valid debiasing and uncertainty quantification require \textit{a priori} specification of a correct statistical model and therefore do not adapt to the complexity of the true data-generating distribution. In particular, the truth may lie in an unknown but learnable sparse, smooth, or otherwise structured submodel of the prespecified model, which we call the \emph{oracle submodel}. Standard debiasing methods do not exploit this structure: their limiting variance does not decrease simply because the truth lies in a simpler submodel. This lack of adaptivity is a consequence of their regularity and local asymptotic minimaxity over the full prespecified model, including distributions that may be substantially more complex than the truth \citep{hajek1972local, vandervaart2000asymptotic}. In practice, however, inference based on a smaller model, whether specified \textit{a priori} \citep{imbensOverlapEstimand2006} or learned from the data \citep{moosavi2023costs}, can yield substantial gains in stability and efficiency. This is particularly relevant in settings such as causal effect estimation under limited overlap, where fully nonparametric regular estimators may perform poorly \citep{CTMLE, ATEsupereff, d2021deconfounding, moosavi2023costs}.

A common strategy is to impose a simpler parametric or semiparametric model \textit{a priori} \citep{imbensOverlapEstimand2006, li2019overlapWeights, vansteelandt2022assumption, vansteelandt2024assumption}. Such models may be used as working models for estimation of the original parameter, while acknowledging possible misspecification \citep{white1982MLErobust, buja2019models, buja2019models2}. They may also induce an alternative target parameter defined through projection onto the simpler model, such as the overlap-weighted ATE under a homogeneous partially linear model \citep{robinson1988root, imbensOverlapEstimand2006, li2019overlapWeights}. The appeal is that simpler models can yield more stable and efficient estimators, as reflected in the semiparametric efficiency bound for the projection parameter. However, these restrictions are often motivated by convenience or subjective judgments about the complexity of the data-generating distribution, and they may induce bias under misspecification. Moreover, when the target is redefined through projection, the resulting estimand may be less interpretable and may differ substantially from the original target \citep{whitney2020comment}. Data-adaptive approaches that learn a simpler model from the data may therefore offer a more favorable bias--variance tradeoff \citep{CTMLE, moosavi2023costs, van2024adaptive, li2025regularized}. At the same time, inference after data-driven model selection is delicate, since standard theoretical guarantees need not remain valid \citep{bauer1988model, potscher1991effects, buhlmann1999efficient, hjort2003frequentist, bunea2004consistent, LeebModelSelect2005, claeskens2007asymptotic}.

A growing body of work has shown, both empirically and, in some cases, theoretically, that incorporating data-adaptive model selection into DML---for example, through variable selection or learned feature representations for confounding adjustment---can substantially improve efficiency and stability relative to nonadaptive approaches \citep{brookhart2006variable, CTMLE, outcomeadaptivelasso, cui2019selective, ATEsupereff, d2021deconfounding, assaad2021counterfactual, moosavi2023costs}. Broadly, these methods aim to calibrate model complexity, such as the variables used for confounding adjustment, jointly with the strength of the debiasing step, so that any increase in variance is offset by a meaningful reduction in bias. For example, \citet{moosavi2023costs} showed that automatic confounder selection using the Lasso can substantially improve the efficiency and stability of DML estimators, particularly under limited overlap; one reason is that adjusting for instrumental variables can inflate variance without reducing bias, so such variables may be screened out based on their association with the outcome. More broadly, there is a rich literature on outcome-adaptive variable selection for propensity score models and covariate adjustment \citep{brookhart2006variable, westreich2011role, wyss2013variable, zhu2015variable, dukes2020obtain, outcomeadaptivelasso, CTMLE, CTMLELasso, HALoutcomeAdapt}, including the outcome-adaptive Lasso \citep{outcomeadaptivelasso} and outcome-adaptive Highly Adaptive Lasso \citep{HALoutcomeAdapt}.  Collaborative targeted learning (CTMLE) \citep{CTMLE} similarly improves performance by adaptively tuning the debiasing step and selecting increasingly complex models for orthogonal nuisance parameters, such as the propensity score \citep{gruber2010application, pirracchio2018collaborative, schnitzer2018collaborative, CTMLELasso, ju2018collaborative, ju2019scalable, adaptiveTruncationCTMLE, ATEsupereff}. Augmented minimax linear estimation \citep{augmentedMinimax} adjusts for features that predict the residual error in the outcome regression. More generally, variable selection can be viewed as a special case of learning low-dimensional representations for confounding adjustment. In this vein, \citet{d2021deconfounding} proposed deconfounding scores as a data-adaptive dimension reduction of the covariate space that can improve efficiency and stability when overlap is weak, while \citet{ATEsupereff} introduced a superefficient TMLE for the average treatment effect based on a one-dimensional deconfounding score derived from the outcome regression, and \citet{wang2023super} considered dimension reductions constructed from multiple estimated regression models in a Super Learner ensemble.

\subsection{Contributions of this work}

We propose a nonparametric framework for adaptive inference on pathwise differentiable parameters that combines data-driven model selection with debiased machine learning, which we call \textit{adaptive debiased machine learning} (ADML). A central contribution of this work is conceptual: ADML provides a unified perspective on how data-driven model selection can be combined with semiparametric efficiency theory to construct asymptotically linear, superefficient estimators that adapt to learnable structure in the data-generating distribution, without requiring restrictive parametric or semiparametric assumptions \textit{a priori}. The framework unifies a broad class of previously proposed adaptive estimators, including methods based on selection from sieves of increasing complexity \citep{CTMLE, ju2019scalable}, variable- or basis-selection procedures such as the (outcome-adaptive) Lasso \citep{moosavi2023costs, outcomeadaptivelasso, HALoutcomeAdapt, apfel2024learning}, learned feature representations \citep{ATEsupereff, d2021deconfounding, assaad2021counterfactual, wang2023super}, augmented minimax linear estimation \citep{augmentedMinimax}, and collaborative targeted learning \citep{CTMLE, ATEsupereff}. We further develop high-level conditions on the model-selection procedure and show that they can be verified under lower-level assumptions in several important settings.

Our main contributions are as follows:
\begin{enumerate}
    \item We propose a unified framework for adaptive inference on pathwise differentiable functionals that integrates data-driven model selection with debiased machine learning. Under the nonparametric model, we establish asymptotic linearity, superefficiency, and regularity properties of ADML estimators.

    \item We formalize the working and oracle targets of ADML as projection parameters induced by optimization over the working and oracle models, and we derive their efficient influence functions and semiparametric efficiency bounds.

\item We develop a new decomposition of the model approximation error induced by data-driven model selection for projection parameters, and characterize its second-order structure.

 \item We propose an adaptive extension of the autoDML framework of \cite{chernozhukov2022automatic} for regression functionals and introduce new superefficient calibrated plug-in estimators.
\end{enumerate}

Taken together, these results suggest a general principle for adaptive inference: if a data-driven working model approximates an oracle submodel sufficiently well, then the effect of learning that model is higher order. Thus, a broad class of adaptive estimators can be viewed as nonparametrically valid procedures for an oracle projection target, with efficiency determined by simpler oracle structure.

Related work includes post-model-selection inference, inference for data-adaptive target parameters, and calibration-based methods in causal inference; see Appendix~\ref{section::AMLEGeneral3Lit} for a broader review. A large literature has shown that model selection can improve efficiency at selected distributions, often at the cost of regularity and uniform validity \citep{LeebModelSelect2005, claeskens2007asymptotic, wu2019hodges, moosavi2023costs}. Our results complement this literature by showing that, for pathwise differentiable parameters, replacing an oracle submodel with a learned working model induces only second-order error under suitable conditions. Unlike approaches based on consistent model selection or oracle properties, our framework requires only that the selected working model approximate a fixed oracle submodel at the relevant sample size, analogously to approximate sparsity conditions in post-Lasso inference \citep{belloni2012sparse, belloni2014inference}. Our work also extends the literature on data-adaptive target parameters \citep{van2013AdaptTarget, rinaldo2019bootstrapping} by characterizing when inference for a data-adaptive projected target remains valid for a fixed oracle target. Finally, our examples connect ADML to recent calibration-based methods in causal inference and debiased machine learning \citep{van2024automatic, van2024stabilized}.

This paper is organized as follows. Section~\ref{section::AMLEGeneral} introduces the ADML framework, presents the main results, illustrates the approach for causal inference with linear functionals of the outcome regression, and reviews related literature. Section~\ref{section::oracleparameter} defines a class of oracle projection parameters and examines their role in the construction of superefficient estimators. Section~\ref{section::modelapprox} studies the approximation error induced by data-driven model selection and provides conditions under which it is asymptotically negligible. Section~\ref{section::theory} presents the main theoretical results for ADML. Finally, Section~\ref{adml::linearfunc} specializes the theory to linear functionals of a regression function. We provide a summary of notation in Appendix~\ref{app:notation}.

\section{Adaptive debiased machine learning}
\label{section::AMLEGeneral}

\subsection{Preliminaries}

\label{sec::prelim}

We review core concepts in semiparametric efficiency theory \citep{bickel1993efficient}. Let $\mathcal{M}$ be a statistical model, a collection of probability distributions dominated by a common sigma-finite measure $\mu$. For a given $P\in\mathcal{M}$, we denote the $L^2(P)$ norm as $\|\cdot\|_P$. A one-dimensional submodel $\{P_t: t \in \mathbb{R}\} \subseteq \mathcal{M}$ through $P$ at $t=0$ is called \textit{regular} if it is differentiable in quadratic mean at $t=0$. The tangent space $T_\mathcal{M}(P)$ of $\mathcal{M}$ at $P$ is the $L^2(P)$-closure of scores generated by regular one-dimensional submodels of $\mathcal{M}$ through $P$. We assume that $\mathcal{M}$ (and all models considered in this work) are smooth, meaning that for every $P \in \mathcal{M}$, the tangent space $T_\mathcal{M}(P)$ is a nonempty linear space. A statistical model $\mathcal{M}_{\mathrm{np}}$ is nonparametric if $T_{\mathcal{M}_{\mathrm{np}}}(P) = L^2_0(P)$ for every $P \in \mathcal{M}_{\mathrm{np}}$. 

A parameter (or functional) $\Psi: \mathcal{M} \rightarrow \mathbb{R}$ is pathwise differentiable at $P$ if there exists a bounded linear operator $d\Psi(P): T_\mathcal{M}(P) \rightarrow \mathbb{R}$ such that $d\Psi(P)[s] = \frac{d}{dt} \Psi(P_t) \big|_{t=0}$ for all regular submodels with score $s \in T_\mathcal{M}(P)$ at $P$. By the Riesz representation theorem, $d\Psi(P)$ can be expressed in terms of an inner product as $s\mapsto \langle s,  D_P \rangle_{L^2(P)}$ for some (mean-zero) element $D_P \in L^2_0(P_0)$ referred to as a gradient. There exists a unique canonical gradient $D_P^*\in T_\mathcal{M}(P)$, referred to as the efficient influence function as its squared $L^2(P)$-norm is the generalized Cramer-Rao (CR) lower bound for estimating $\Psi(P)$ relative to $\mathcal{M}$ \citep{bickel1993efficient}.

For $h \in \mathbb{R}$ and a regular submodel $\{P_t: t \in \mathbb{R}\} \subseteq \mathcal{M}_{\mathrm{np}}$ through $P$ at $t=0$, $P_{0,hn^{-1/2}}$ is a local perturbation (or local alternative) of $P_0$. An estimator $\widehat{\psi}_n$ is regular for a parameter $\Psi$ with respect to the local perturbation $P_{0,hn^{-1/2}}$ if $\sqrt{n}\,\{\widehat{\psi}_n - \Psi(P_{0,hn^{-1/2}})\}$ converges in distribution when sampling from $P_{0,hn^{-1/2}}$ with a limit that does not depend on  $h$. An estimator $\widehat{\psi}_n$ is $P_0$--regular for $\Psi$ over $\mathcal{M}$ if it is regular with respect to all local perturbations of $P_0$ in $\mathcal{M}$, and it is regular for $\Psi$ over $\mathcal{M}$ if it is $P$--regular for each $P \in \mathcal{M}$. An estimator $\widehat{\psi}_n$ is $P_0$--asymptotically linear for a parameter $\Psi$ with influence function $\phi_0$ if $\widehat{\psi}_n = \psi_0 + P_n \phi_0 + o_p(n^{-1/2})$ under sampling from $P_0$, and it is asymptotically linear for $\Psi$ over $\mathcal{M}$ if it is $P$--asymptotically linear under sampling from each $P\in\mathcal{M}$. A $P_0$--asymptotically linear estimator $\widehat{\psi}_n$ is $P_0$--efficient for $\Psi$ with respect to model $\mathcal{M}$ if its influence function, under sampling from $P_0$, equals the efficient influence function of $\Psi$ at $P_0$.  An estimator is efficient for $\Psi$ with respect to $\mathcal{M}$ if it is $P$--efficient for each $P \in \mathcal{M}$. Similarly, an estimator $\widehat{\psi}_n$ is $P_0$--superefficient for $\Psi$ relative to $\mathcal{M}$ if its limiting variance is, under sampling from $P_0$, smaller than the corresponding CR lower bound of $\Psi$ at $P_0$.

\subsection{General framework and overview of results}
\label{section::AMLEGeneral1}

In this section, we introduce our general framework and summarize the main results developed in this paper. In the next section, we apply this framework to adaptive inference on linear functionals of a regression function and introduce the concrete estimators studied throughout the paper.

We observe \(n\) independent and identically distributed observations \(O_1, O_2, \dots, O_n\) drawn from a distribution \(P_0\) known to lie in a convex nonparametric model \(\mathcal{M}_{\mathrm{np}}\). We assume that \(P_0 \in \mathcal{M} \subseteq \mathcal{M}_{\mathrm{np}}\), where \(\mathcal{M}\) is a prespecified submodel encoding any prior knowledge about the data-generating process. Our goal is to perform inference on the feature \(\psi_0 := \Psi(P_0)\), where \(\Psi : \mathcal{M}_{\mathrm{np}} \to \mathbb{R}\) is a given real-valued target parameter defined on the nonparametric model. For notational convenience, we write \(S_0\) for any summary \(S_{P_0}\) of \(P_0\).

Let \(\mathcal{M}_n \subseteq \mathcal{M}\) be a working statistical model learned through a data-driven model-selection procedure. A key component of our framework is the assumption that there exists an unknown, fixed submodel \(\mathcal{M}_0 \subseteq \mathcal{M}\) such that \(\mathcal{M}_n\) approximates \(\mathcal{M}_0\), with vanishing error in an appropriate sense, as \(n \to \infty\). Intuitively, \(\mathcal{M}_0\) may be viewed as the model that would be learned with infinite data under the selection procedure at \(P_0\). We assume that \(P_0 \in \mathcal{M}_0\), although \(P_0\) need not belong to \(\mathcal{M}_n\) for any finite \(n\), and that \(\mathcal{M}_0\) is smooth in the sense of Section~\ref{sec::prelim}, thereby excluding degenerate cases such as \(\mathcal{M}_0 = \{P_0\}\). Since \(\mathcal{M}_0\) generally depends on unknown structural features of \(P_0\), such as sparsity, smoothness, or a low-dimensional representation, we refer to \(\mathcal{M}_0\) as the \emph{oracle submodel}. Our theory does not require \(\mathcal{M}_0\) to be known explicitly, and the precise sense in which \(\mathcal{M}_n\) approximates \(\mathcal{M}_0\) is formalized through the high-level conditions summarized below. For example, if \(\mathcal{M}_n\) is determined by an estimated dimension reduction \(\phi_n(Z)\), then \(\mathcal{M}_0\) may correspond to the model induced by a population-level reduction \(\phi_0(Z)\). If the procedure performs variable or basis-function selection, such as the Lasso, then \(\mathcal{M}_0\) may correspond to the limiting selected subset. Similarly, if \(\mathcal{M}_n = \mathcal{M}_{k(n)}\) is chosen by cross-validation from a sieve \(\mathcal{M}_1 \subseteq \mathcal{M}_2 \subseteq \cdots \subseteq \mathcal{M}_\infty := \mathcal{M}\), then the oracle model may be taken to be \(\mathcal{M}_{k_0}\), where \(k_0 := \lim_{n \to \infty} k(n)\), provided this limit exists and is nonrandom.

\begin{remark}
Our framework does not require the oracle submodel \(\mathcal{M}_0\) to be minimal in any sense. In particular, when \(\mathcal{M}_n\) is based on variable selection, \(\mathcal{M}_0\) need not coincide with the exact support or the set of truly active variables. More generally, smaller working and oracle submodels correspond to greater potential efficiency gains.
\end{remark}

The target of inference in the ADML framework is an \emph{oracle projection parameter} induced by an unknown oracle submodel \(\mathcal{M}_0 \subseteq \mathcal{M}\) containing \(P_0\). Specifically, ADML estimators implicitly target the oracle parameter \(\Psi_0 : \mathcal{M}_{\mathrm{np}} \to \mathbb{R}\) defined by
\begin{equation}
\Psi_0 := \Psi \circ \Pi_0,
\label{eqn::oracleParam}
\end{equation}
where \(\Pi_0\) denotes a loss-based projection onto \(\mathcal{M}_0\), that is, \(P \mapsto \Pi_0 P \in \arg\min_{Q \in \mathcal{M}_0} P\ell(\cdot,Q)\). For example, \(\ell\) may be taken as the negative log-likelihood loss, in which case \(\Pi_0\) is the log-likelihood projection \(P \mapsto \Pi_0 P := \arg\max_{Q \in \mathcal{M}_0} P\!\left(\log \frac{dQ}{d\mu}\right)\). Under mild conditions, \(\Psi_0\) is pathwise differentiable at \(P_0\) with nonparametric efficient influence function (EIF) \(D_{0,P_0}\), and is therefore amenable to \(\sqrt{n}\)-consistent estimation \citep{bickel1993efficient}. Section~\ref{section::oracleparameter} formally defines \(\Psi_0\) and characterizes its EIF.

A key property of the oracle parameter \(\Psi_0\) is that it agrees with the original parameter \(\Psi\) at the truth $P_0$ while generally having a smaller efficiency bound. Specifically, \(\Psi_0(P)=\Psi(P)\) for all \(P \in \mathcal{M}_0\), and hence \(\Psi_0(P_0)=\Psi(P_0)\) since \(P_0 \in \mathcal{M}_0\). At the same time, when the tangent space of \(\mathcal{M}_0\) at \(P_0\) is smaller than that of \(\mathcal{M}_{\mathrm{np}}\), the nonparametric efficiency bound \(\operatorname{var}_0\{D_{0,P_0}(O)\}\) for \(\Psi_0\) at \(P_0\) is often smaller than that of \(\Psi\). Consequently, a \(P_0\)-efficient estimator of \(\Psi_0\) is typically \(P_0\)-superefficient for \(\Psi\) relative to the prespecified model \(\mathcal{M}\) and nonparametric model \(\mathcal{M}_{\mathrm{np}}\). Its limiting variance therefore depends on the complexity of the oracle submodel \(\mathcal{M}_0\), allowing it to adapt to the structure of \(P_0\). For log-likelihood loss, this bound coincides with the semiparametric efficiency bound for \(\Psi\) under the oracle model \(\mathcal{M}_0\). For other losses, the two bounds need not coincide exactly, although the bound for \(\Psi_0\) is still typically smaller than that for \(\Psi\).

 Our proposed ADML framework  suggests obtaining $P_0$--efficient inference for $\Psi_0$ --- and thereby $P_0$--superefficient inference for $\Psi$ --- by constructing debiased estimators for the (data-adaptive) working projection parameter $\Psi_n: \mathcal{M}_{\mathrm{np}} \rightarrow \mathbb{R}$ defined as the composition 
\begin{equation}
   \Psi_n:= \Psi \circ \Pi_n
   \label{eqn::dataadaptParam}
\end{equation}evaluated pointwise as $P\mapsto \Psi(\Pi_n P)$, 
where $\Pi_n P \in \argmin_{Q \in \mathcal{M}_n} P \ell(\cdot, Q)$ is a projection of $P \in \mathcal{M}_{\mathrm{np}}$ onto the data-dependent working model $\mathcal{M}_n$. Formally, an ADML estimator \(\widehat{\psi}_n\) is an estimator that satisfies the asymptotic expansion
\[
\widehat{\psi}_n = \Psi_n(P_0) + P_n D_{n,P_0} + o_p(n^{-1/2})\ ,
\]
where \(D_{n,P_0}\) is the nonparametric efficient influence function for \(\Psi_n\). Such an estimator of $\Psi_n(P_0)$ can be constructed using standard debiasing techniques, as shown in \cite{dataAdaptTargetParam}. In particular, given an estimator \(\widehat{P}_n \in \mathcal{M}_n\) of \(P_0\), the one-step debiased estimator \(\Psi(\widehat{P}_n) + P_n D_{n,\widehat{P}_n}\) of $\Psi_n(P_0)$ satisfies this expansion under mild conditions \citep{bickel1993efficient}. If \(\Psi_n\) were fixed across \(n\), then this expansion would simply require \(\widehat{\psi}_n\) to be a nonparametric efficient estimator of \(\Psi_n\).

\begin{example}[Adaptive plug-in M-estimation]
\label{example::mestimator}

Suppose that the learned model \(\mathcal{M}_n\) is finite-dimensional. Then the plug-in M-estimator \(\widehat{\psi}_n := \Psi(\widehat{P}_n)\) for \(\Psi_n(P_0)\), where $\widehat{P}_n \in \argmin_{Q \in \mathcal{M}_n} \sum_{i=1}^n \ell(Z_i, Q),$ is an example of an ADML estimator. In particular, the first-order optimality conditions for the M-estimator imply that \(P_n D_{n,\widehat{P}_n} = 0\), so no explicit bias correction is needed for \(\Psi_n\).
\end{example}

The validity of ADML relies on the key result of Section~\ref{section::modelapprox}, which characterizes the model approximation error \(\Psi_n(P_0)-\Psi_0(P_0)\) and gives conditions under which it is \(o_p(n^{-1/2})\). In particular, replacing the oracle model \(\mathcal{M}_0\) with the working model \(\mathcal{M}_n\) induces only second-order error. At a high level, this error is negligible when (i) the loss-based projection \(P_{n,0}:=\Pi_n P_0\) of \(P_0\) onto \(\mathcal{M}_n\) approximates \(P_0\) well, and (ii) \(\mathcal{M}_n\) has sufficiently rich local tangent structure to approximate that of \(\mathcal{M}_0\). Specifically, we show that, to leading order,
\begin{align*}
    \left|\Psi_n(P_0)-\Psi_0(P_0)\right|
    &\lesssim
    H(P_{n,0}, P_0)\,
    \left\|\bar{D}_{n,0,P_{n,0}} - D_{n,0,P_{n,0}}\right\|_{P_{n,0}+P_0},
\end{align*}
where \(H(P_{n,0}, P_0)\) is the Hellinger distance between \(P_{n,0}\) and \(P_0\), \(D_{n,0,P_{n,0}} \in L_0^2(P_{n,0})\) is an appropriate influence function, and \(\bar D_{n,0,P_{n,0}} \in L_0^2(P_{n,0})\) is its projection onto an approximating subspace of the tangent space \(T_{\mathcal{M}_{\mathrm{np}}}(P_{n,0}) = L_0^2(P_{n,0})\). Thus, although ADML explicitly debiases only the working estimand \(\Psi_n(P_0)\), this suffices to remove the model-selection bias to first order, even when \(\mathcal{M}_n\) converges slowly to \(\mathcal{M}_0\).

In this paper, we establish that, under appropriate conditions, ADML estimators are regular, asymptotically linear, and nonparametrically efficient for the oracle parameter \(\Psi_0\) at \(P_0\). Consequently, they provide locally uniformly valid inference for \(\Psi_0\) in the sense of \cite{buhlmann1999efficient}, even under sampling from locally misspecified perturbations of \(P_0\) lying outside the oracle submodel \(\mathcal{M}_0\). Thus, in a local asymptotic sense, there is no penalty for data-driven model selection relative to knowing the oracle submodel or oracle parameter in advance. Furthermore, these properties of the oracle parameter \(\Psi_0\) imply the following desirable properties for the target parameter \(\Psi\):
\begin{enumerate}[label=\roman*., ref=\roman*]
\item asymptotic linearity for \(\Psi\), with influence function at \(P_0\) given by the \(P_0\)-efficient influence function of \(\Psi_0\);
\item \(P_0\)-regularity for \(\Psi\) with respect to any local perturbation of \(P_0\) within the oracle submodel \(\mathcal{M}_0\);
\item asymptotic \(P_0\)-efficiency for \(\Psi\) relative to the oracle submodel \(\mathcal{M}_0\), for suitable \(\Pi_0\).
\end{enumerate}
Consequently, in practice, if \(\widehat P_n\) is a sufficiently regular estimator of \(P_0\), the ADML estimator \(\widehat{\psi}_n\) is approximately normal under \(P_0\) with mean \(\psi_0\) and variance \(n^{-1}\sigma_n^2\), where \(\sigma_n^2 := \frac{1}{n}\sum_{i=1}^n D_{n,\widehat P_n}(O_i)^2\) is an influence-function-based variance estimator. When \(\widehat{\psi}_n\) is the adaptive plug-in M-estimator of Example \ref{example::mestimator}, \(\sigma_n^2\) reduces to the standard model-robust sandwich variance estimator. An approximate \((1-\alpha)\%\) confidence interval for \(\psi_0\) is therefore
\[
\mathcal{I}_n(\alpha)
:=
\left(
\widehat{\psi}_n - q_\alpha \sigma_n n^{-1/2},
\widehat{\psi}_n + q_\alpha \sigma_n n^{-1/2}
\right),
\]
where \(q_\alpha\) is the \((1-\alpha/2)\)-quantile of the standard normal distribution. This interval is \(P_0\)-locally uniformly valid and nonparametric \(P_0\)-efficient for the projection oracle parameter \(\Psi_0\), uniformly over local perturbations \(P_{0,hn^{-1/2}} \in \mathcal{M}_{\mathrm{np}}\). In particular, it is asymptotically valid for \(\Psi(P_{0,hn^{-1/2}})\) uniformly over local perturbations \(P_{0,hn^{-1/2}} \in \mathcal{M}_0\). Moreover, when \(\Pi_0\) is the Kullback--Leibler or Hellinger projection, the width of \(\mathcal{I}_n(\alpha)\) is \(P_0\)-efficient and locally asymptotically minimax over \(\mathcal{M}_0\) for the target parameter \(\Psi\).

 \subsection{Specialization to linear functionals of the outcome regression}
\label{section::ADMLlinear}

We now specialize the ADML framework to continuous linear functionals of the outcome regression, a broad class that includes many causal estimands, such as the average treatment effect. As discussed in the introduction, adaptivity is particularly important in these settings because standard nonparametric estimators can be highly variable and unstable under limited overlap or positivity violations \citep{petersen2012diagnosing, d2021overlap}, whereas prespecified working models may improve stability at the cost of misspecification bias \citep{imbensOverlapEstimand2006, li2019overlapWeights}. Our goal in this section is to show how ADML learns this structure from the data and to relate the resulting estimators to several existing approaches based on adaptive confounder selection and data-adaptive feature representations.

Consider observations \(O=(W,A,Y)\), where \(W \in \mathbb{R}^d\) denotes covariates, \(A \in \{0,1\}\) a binary treatment, and \(Y \in \mathbb{R}\) a bounded outcome. Our target parameter is the linear functional \(\Psi(P):=E_P[m(W,\mu_P)]\) of the outcome regression \(\mu_P(a,w):=E_P(Y \mid A=a,W=w)\), where \(m(w,\mu)\) is a known map that is linear in \(\mu\). For example, the average treatment effect corresponds to \(m(w,\mu)=\mu(1,w)-\mu(0,w)\).

To construct an ADML estimator of \(\psi_0 = E_0[m(W,\mu_0)]\), let \(\mathcal{H}_n \subseteq \mathcal{H} := L^2(P_{0,A,W})\) be a working linear submodel for \(\mu_0\) that approximates an unknown oracle linear submodel \(\mathcal{H}_0 \subseteq \mathcal{H}\) containing \(\mu_0\). Define the associated working and oracle statistical models by
\[
\mathcal{M}_n := \{P \in \mathcal{M} : \mu_P \in \mathcal{H}_n\}, \qquad
\mathcal{M}_0 := \{P \in \mathcal{M} : \mu_P \in \mathcal{H}_0\}.
\]
We consider the working parameter \(\Psi_n(P) := E_P[m(W,\mu_{P,\mathcal{H}_n})]\) and the corresponding oracle parameter \(\Psi_0(P) := E_P[m(W,\mu_{P,\mathcal{H}_0})]\), where \(\mu_{P,\mathcal{H}'}\) denotes the \(L^2(P)\)-projection of \(\mu_P\) onto \(\mathcal{H}'\). For simplicity, we assume that all linear models considered above are closed subspaces of \(L^2(P_0)\) and that \(\|\cdot\|_P\) and \(\|\cdot\|_{P_0}\) are equivalent for all \(P \in \mathcal{M}_{\mathrm{np}}\), so that these projections are well defined.

An ADML estimator of \(\psi_0\) is obtained by constructing a DML estimator of \(\Psi_n(P_0)\), for example using autoDML \citep{chernozhukov2022automatic}, as described in Section~\ref{adml::linearfunc}. When \(\mathcal{H}_n\) is finite-dimensional, a simple example is the adaptive plug-in regression estimator
\begin{equation}
\label{eqn::pluginadml}
\widehat{\psi}_n=\frac{1}{n}\sum_{i=1}^n m(W_i,\mu_n), \qquad
\mu_n \in \arg\min_{\mu \in \mathcal{H}_n}\sum_{i=1}^n \{Y_i-\mu(A_i,W_i)\}^2,
\end{equation}
a special case of Example \ref{example::mestimator}. We now present three examples of ADML estimators that serve as running examples throughout the paper; their theory is developed in Section~\ref{adml::linearfunc}.

Our first example considers model selection for the CATE within a partially linear regression (PLR) model \citep{robinson1988root, imbensOverlapEstimand2006, li2019overlapWeights}. Such models are attractive because they can yield substantially smaller semiparametric efficiency bounds than the fully nonparametric model and may be more robust under limited overlap. In particular, \cite{imbensOverlapEstimand2006} proposed a constant-CATE working model, and \cite{li2019overlapWeights} argued that such structure can weaken, and in some settings effectively eliminate, standard overlap requirements. We argue that this model can often be improved by learning a simple CATE model from the data, thereby retaining favorable efficiency properties while reducing misspecification bias.  In what follows, we write the conditional average treatment effect (CATE) as \(\tau_P(w):=\mu_P(1,w)-\mu_P(0,w)\), the propensity score as \(\pi_P(w):=P(A=1 \mid W=w)\), and the marginalized regression as \(m_P(w):=E_P(Y \mid W=w)\). We also recall Robinson's parameterization of the outcome regression, $\mu_P(a,w) = m_P(w) + (a - \pi_P(w))\tau_P(w),$
from \cite{robinson1988root}.

\begin{example}[Semiparametric model selection for the CATE]
\label{example::CATE}
The ATE depends on the outcome regression \(\mu_0\) only through the conditional average treatment effect (CATE) \(\tau_0 := \mu_0(1,\cdot)-\mu_0(0,\cdot)\). This motivates ADML estimators that model the CATE directly through PLR models. Specifically, suppose
\[
\mathcal{H}_n
=
\{(a,w) \mapsto \mu(w)+a\tau(w) : \mu \in L^2(P_{0,W}), \ \tau \in \mathcal{T}_n\},
\]
where \(\mathcal{T}_n\) is a finite-dimensional linear model for the CATE learned from the data. Given nuisance estimators \(m_n\) and \(\pi_n\) of \(m_0\) and \(\pi_0\), respectively, a semiparametric ADML estimator of the ATE is
\[
\widehat{\psi}_n := \frac{1}{n}\sum_{i=1}^n \tau_n(W_i),
\qquad
\tau_n
:=
\argmin_{\tau \in \mathcal{T}_n}
\sum_{i=1}^n
\left[
Y_i - m_n(W_i) - \{A_i-\pi_n(W_i)\}\tau(W_i)
\right]^2,
\]
where \(\tau_n\) is the well-studied \(R\)-learner for the CATE \citep{nie2021quasi}, based on Robinson's parameterization of the outcome regression \citep{robinson1988root}. For example, \(\mathcal{T}_n\) may be selected by cross-validation from a nested sequence of models of increasing complexity, \(\mathcal{T}_1 \subseteq \mathcal{T}_2 \subseteq \cdots \subseteq \mathcal{T}\), where \(\mathcal{T}_1\) is the intercept model and \(\mathcal{T}\) is the nonparametric model \citep{shen1997methods, spnpsieve, CTMLE}. Alternatively, \(\mathcal{T}_n\) may be obtained via variable selection using the Lasso \citep{Tibshirani94regressionshrinkage, AdaptLassoOracle, moosavi2023costs} or via nonparametric basis selection using the Highly Adaptive Lasso (HAL) \citep{vanderlaanGenerlaTMLEFIRST, HAL2016, HALoutcomeAdapt}. Consistent variable and model selection in related settings has been studied by \cite{bunea2004consistent, su2014variable, li2008variable, claeskens2007asymptotic, zhao2017selective}. \qed
\end{example}

Another important class of ADML estimators uses learned low-dimensional feature representations to address limited overlap in high-dimensional causal inference problems \citep{d2021overlap}. Specifically, let \(\phi_n:\mathcal{W}\times\mathcal{A}\to\mathbb{R}^m\) be a data-adaptive feature map, and define \(\mathcal{H}_n:=\mathcal{H}_{\phi_n}\) and \(\mathcal{H}_0:=\mathcal{H}_{\phi_0}\), where \(\mathcal{H}_{\phi}:=\{f\circ\phi:f:\mathbb{R}^m\to\mathbb{R}\}\). In causal effect estimation, \citet{d2021deconfounding} proposed learning deconfounding-score representations from the data to adjust for confounding when covariate overlap is limited. More generally, \(\phi_n\) may be constructed, for example, by one-hot encoding random forest leaves \citep[Section~3.1]{he2014practical}, using intermediate neural network features such as autoencoder embeddings \citep{ng2011sparse}, or combining multiple estimated regression models in a Super Learner ensemble \citep{wang2023super}. A particularly important special case arises when the feature map is taken to be the estimated outcome regression itself, that is, \(\phi_n=\mu_n\) and \(\phi_0=\mu_0\). In this case, \(\mathcal{H}_n\) and \(\mathcal{H}_0\) are the classes of all transformations of \(\mu_n\) and \(\mu_0\), respectively. Superefficient estimators based on this representation were proposed and studied by \citet{ATEsupereff}.

Our second example extends the superefficient estimator of \citet{ATEsupereff} for the ATE to general linear functionals. Rather than using the TMLE-style update from that work, we build on recent work combining calibration with DML \citep{van2024automatic,van2025automatic} and use isotonic regression for debiasing. Isotonic regression \citep{barlow1972isotonic} is a tuning-free method for monotone-constrained estimation that is widely used to calibrate predictive models in machine learning \citep{zadrozny2001obtaining,niculescu2005predicting,van2024self,van2025generalized}. We show that, in the DML setting, calibration can also be used to construct superefficient plug-in estimators.

\begin{example}[Plug-in regression with isotonic calibration]
\label{example::map}
We consider the isotonic-calibrated plug-in estimator
\[
\widehat{\psi}_n:=\frac{1}{n}\sum_{i=1}^n m(W_i,\mu_n^*)  \quad \text{where} \quad\mu_n^*:=f_n\circ\mu_n,
\]
where \(\mu_n\) is a preliminary estimator of the outcome regression and \(f_n\) is obtained by isotonic regression:
\[
f_n:=\argmin_{f\in\mathcal{F}_{iso}}\sum_{i=1}^n\{Y_i-f(\mu_n(A_i,W_i))\}^2,
\]
with \(\mathcal{F}_{iso}\) denoting the class of monotone nondecreasing functions. We show that this is an ADML estimator with working model \(\mathcal{H}_n=\{f\circ\mu_n^*: f:\mathbb{R}\to\mathbb{R}\}\) and oracle model \(\mathcal{H}_0=\{f\circ\mu_0: f:\mathbb{R}\to\mathbb{R}\}\). As shown by \citet{ATEsupereff}, a key advantage of this approach is that identification, estimator variance, and stable estimation depend primarily on treatment overlap across strata of \(\{\mu_0(a,W): a \in \mathcal{A}\}\), rather than across strata of the potentially high-dimensional covariate vector \(W\).\qed

\end{example}

Our final example combines the previous two ideas to construct a novel ADML estimator of the ATE by calibrating the CATE within a semiparametric regression model. It extends the estimators in Example~\ref{example::map} and \citet{ATEsupereff} to partially linear models.

\begin{example}[Plug-in R-learner with isotonic calibration]
\label{example::catecal}
Let \(\tau_n\) be a preliminary user-supplied CATE estimator, such as $\tau_n(w) := \mu_n(1,w) - \mu_n(0,w)$. Define the ADML estimator by $\widehat{\psi}_n:=\frac{1}{n}\sum_{i=1}^n \tau_n^*(W_i),$
where \(\tau_n^*\) is the calibrated \(R\)-learner obtained via \emph{causal isotonic calibration} \citep{van2023causal}:
\[
\tau_n^* = f_n \circ \tau_n, \qquad
f_n := \argmin_{f \in \mathcal{F}_{iso}} \sum_{i=1}^n \{Y_i - m_n(W_i) - (A_i - \pi_n(W_i))f(\tau_n(W_i))\}^2.
\]
This estimator corresponds to the partially linear models \(\mathcal{H}_n := \mathcal{H}_{\tau_n^*}\) and \(\mathcal{H}_0 := \mathcal{H}_{\tau_0}\), where
\[
\mathcal{H}_{\tau} := \{(a,w) \mapsto \mu(w) + a\,f(\tau(w)) : \mu \in L^2(P_{0,W}),\ f:\mathbb{R}\to\mathbb{R}\}.
\]
Thus, \(\mathcal{H}_n\) uses \(\tau_n^*(W)\) as a data-adaptive low-dimensional summary of how the CATE \(\tau_0\) varies with \(W\).

As we will show in Section \ref{adml::linearfunc}, this estimator is nonparametrically valid, adapts to treatment effect heterogeneity, and retains the favorable efficiency properties of semiparametric estimators based on constant or parametric CATE models \citep{robinson1988root, imbensOverlapEstimand2006, li2019overlapWeights}. In particular, identification and stable estimation depend primarily on treatment overlap across strata of the CATE rather than across strata of the covariates. Moreover, if the true CATE is constant, then \(\mathcal{H}_0\) reduces to the homogeneous partially linear model and \(\Psi_0\) reduces to the overlap-weighted ATE. In that case, the ADML estimator is asymptotically equivalent to the prespecified estimators of \cite{imbensOverlapEstimand2006, li2019overlapWeights}. It therefore improves on these estimators by remaining valid more generally while retaining their asymptotic behavior when the homogeneous model is correct. \qed 
\end{example}



\section{Oracle projection parameter and statistical efficiency}
\label{section::oracleparameter}

\subsection{Pathwise differentiability and the efficient influence function: general case}

In this section, we formally define the target of inference in ADML, namely the oracle parameter \(\Psi_0 = \Psi \circ \Pi_0\), based on the loss-based projection operator \(\Pi_0\), and characterize its efficient influence function. We then apply this theory to the linear functional settings introduced in Section~\ref{section::ADMLlinear}. Although we focus on the projection parameter induced by the oracle model \(\mathcal{M}_0\), the results continue to hold for projection parameters onto any model \(\mathcal{M}'\) in place of \(\mathcal{M}_0\).

Let $\ell:\mathbb{R}^d \times \mathcal{M}_{\mathrm{np}}\rightarrow\mathbb{R}$ be a loss function satisfying $P \in \argmin_{ Q \in \mathcal{M}_{\mathrm{np}}} P \ell(\cdot, Q)$ for each $P \in \mathcal{M}_{\mathrm{np}}$. We define the (possibly non-unique) loss-based projection operator $\Pi_0: \mathcal{M}_{\mathrm{np}}\rightarrow \mathcal{M}_0$ as any map $P \mapsto \Pi_0P$ whose image $\Pi_0P$ is contained in the solution set $\argmin_{Q \in \mathcal{M}_0} P \ell(\cdot, Q)$. Informally, the projection $\Pi_0  $ maps a given distribution $P \in \mathcal{M}_{\mathrm{np}}$ to one of its best approximations in $\mathcal{M}_0$ under the risk $Q \mapsto P \ell(\cdot, Q)$. The oracle projection parameter $\Psi_0: \mathcal{M}_{\mathrm{np}} \rightarrow \mathbb{R}$, formally defined as $\Psi_0:=\Psi \circ \Pi_0$, applies the oracle projection operator $\Pi_0$ before evaluating the target parameter mapping $\Psi$.  If $\Pi_0$ is the loglikelihood projection and $\mathcal{M}_0$ is a fixed parametric model, $\Psi_0(P)$ corresponds to the $P$-limit that a maximum likelihood estimator (MLE) would converge to, even if the MLE is computed under an incorrectly specified model \citep{white1982MLErobust, freedman2006MLErobust}.

The following theorem characterizes the efficient influence function \(D_{0,P_0}\) in terms of the oracle model and the loss function. To state it, we impose high-level smoothness conditions on \(\ell\) and \(\Psi\) over \(\mathcal{M}_0\). Unless otherwise noted, these conditions are assumed to hold at an arbitrary \(P \in \mathcal{M}\), specified in the theorem.

\begin{enumerate}[label=(A\arabic*), ref=A\arabic*,series=cond]
\item \textit{Risk function is twice differentiable:} \label{cond::losssmooth}
For all \(P \in \mathcal{M}_{\mathrm{np}}\), all \(Q \in \mathcal{M}_0\), and all regular paths \(\{Q_t : t \in \mathbb{R}\} \subset \mathcal{M}_0\) through \(Q\) with score \(s \in T_{\mathcal{M}_0}(Q)\), there exist:
\begin{enumerate}
    \item[(i)] a first G\^{a}teaux derivative \(\dot{\ell}_{Q} : \mathcal{O} \times T_{\mathcal{M}_0}(Q) \to L^2(P)\) such that $\dot{\ell}_{Q}(\cdot,s) = \frac{d}{dt} \ell(\cdot, Q_t)\Big|_{t=0};$
    \item[(ii)] a second G\^{a}teaux derivative \(P\ddot{\ell}_{Q} : T_{\mathcal{M}_0}(Q) \times T_{\mathcal{M}_0}(Q) \to \mathbb{R}\) such that
    $P\ddot{\ell}_{Q}(s,v) = \frac{d}{dt} P\dot{\ell}_{Q_t}(v)\Big|_{t=0}.$
\end{enumerate}
\item \textit{Hessian of risk is nonnegative:} \label{cond::innerproductloss} $P \ddot{\ell}_{\Pi_0 P}$ defines a positive semidefinite inner product on $T_{\mathcal{M}_0}(\Pi_0 P)$.
\item \textit{Hessian-based pathwise differentiability under oracle submodel:} \label{cond::oraclepathwise} 
$\Psi$ is pathwise differentiable at $\Pi_0 P$ on $\mathcal{M}_0$ in the sense that 
$d\Psi(\Pi_0 P): T_{\mathcal{M}}(\Pi_0 P) \rightarrow \mathbb{R}$ is a bounded linear operator with respect to the Hessian inner product $P\ddot{\ell}_{\Pi_0 P}$.
\item \textit{Existence, invariance, and smoothness of the projection:} \label{cond::invariance}    \label{cond::smoothprojection}
For all \(P \in \mathcal{M}_{\mathrm{np}}\), the following hold:
\begin{enumerate}[label=(\roman*), ref=\ref{cond::invariance}.\roman* ]
    \item \label{cond::invariance-i}
    The set \(\argmin_{Q \in \mathcal{M}_0} P\ell(\cdot, Q)\) is nonempty, and \(\Psi(Q) = \Psi(Q')\) for all
    \(Q, Q' \in \argmin_{Q \in \mathcal{M}_0} P\ell(\cdot, Q)\).

    \item \label{cond::invariance-ii}
    For each \(Q, Q' \in \argmin_{Q \in \mathcal{M}_0} P\ell(\cdot, Q)\), we have
    \(\dot{\ell}_{Q} = \dot{\ell}_{Q'}\) and \(P\ddot{\ell}_{Q} = P\ddot{\ell}_{Q'}\) \(P\)-almost everywhere.

    \item \label{cond::smoothprojection2}
    For any regular submodel \(t \mapsto P_t\) with \(P_t|_{t=0} = P\), there exists a regular submodel
    \(t \mapsto Q_{P_t} \in \mathcal{M}_0\) such that \(Q_{P_t} \in \argmin_{Q \in \mathcal{M}_0} P_t\ell(\cdot, Q)\)
    for all sufficiently small \(t\).
\end{enumerate}

\end{enumerate}

The efficient influence function (EIF) in the following theorem is expressed in terms of a so-called \textit{Hessian Riesz representer} \(s_{0,P}\) of the pathwise derivative of \(\Psi\) along the oracle submodel \(\mathcal{M}_0\). We refer to the semidefinite inner product \(P\ddot{\ell}_{\Pi_0 P}\) as the \textit{Hessian} inner product; it is also known as the generalized Fisher inner product (Section 4.1.1 of \citealp{chen2014sieve}; \citealp{sieveOneStepPlugin}; and \citealp{van2025automatic2}). We show that there exists a potentially nonunique element \(s_{0,P}\) in the completion $\overline{T}_{\mathcal{M}_0}(\Pi_0 P)$ of \(T_{\mathcal{M}_0}(\Pi_0 P)\) under \(P\ddot{\ell}_{\Pi_0 P}\) such that, for any smooth submodel \((Q_t:t)\subset \mathcal{M}_0\) with \(Q_t|_{t=0}=\Pi_0 P\) and score \(s\),
\begin{equation}
\frac{d}{dt}\Psi(Q_t)\Big|_{t=0}=P\ddot{\ell}_{\Pi_0 P}(s,s_{0,P}).
\label{eqn::autodml}
\end{equation}
In principle, \(s_{0,P}\) can be directly learned by solving $ s_{0,P} \in \argmin_{s \in \overline{T}_{\mathcal{M}_0}(\Pi_0 P)}
P\ddot{\ell}_{\Pi_0 P}(s,s)-2\,d\Psi(\Pi_0 P)(s),$
analogously to Riesz regression in autoDML \citep{chernozhukov2022automatic, van2025automatic2}.

\begin{theorem}[EIF of oracle parameter]
    \label{theorem::EIFmain} 
    Suppose that Conditions~\ref{cond::losssmooth}--\ref{cond::smoothprojection} hold at \(P \in \mathcal{M}_{\mathrm{np}}\). Then:
    \begin{enumerate}
        \item[(i)] The functional \(\Psi_0 : \mathcal{M}_{\mathrm{np}} \to \mathbb{R}\) is pathwise differentiable at \(P\) with EIF \(D_{0,P}\) given by $D_{0,P} = -\dot{\ell}_{\Pi_0 P}(s_{0,P}),$
        where \(s_{0,P}\) is any Hessian Riesz representer satisfying \eqref{eqn::autodml}.

        \item[(ii)] The EIF \(D_{0,P}\) is unique and does not depend on the choice of representer \(s_{0,P}\).

\item[(iii)] If \(\ell\) is the negative log-likelihood loss, then \(D_{0,P}\) equals the EIF of \(\Psi\) restricted to the oracle submodel \(\mathcal{M}_0\), and \(\|D_{0,P}\|_{P}^2\) is the corresponding \(P\)-efficiency (minimum variance) bound under \(\mathcal{M}_0\).

    \end{enumerate}
\end{theorem}

Theorem~\ref{theorem::EIFmain} shows that the efficiency bound for \(\Psi_0\), given by \(\|D_{0,P}\|_{P}^2\), depends on both the oracle model \(\mathcal{M}_0\) and the loss-based projection \(\Pi_0\). Specifically, the EIF of \(\Psi_0\) is \(-\dot{\ell}_{\Pi_0 P}(s_{0,P})\), the derivative of the negative loss in the direction of the Hessian Riesz representer \(s_{0,P}\). Under log-likelihood loss, the EIF of \(\Psi_0\) at \(P_0 \in \mathcal{M}_0\) is \(s_{0,P_0}\), which also equals the EIF of \(\Psi:\mathcal{M}_0 \to \mathbb{R}\) at \(P_0\). Thus, a nonparametric efficient estimator of \(\Psi_0\) at \(P_0\) is also semiparametric efficient in the oracle model \(\mathcal{M}_0\).

The conditions of Theorem~\ref{theorem::EIFmain} are mild. The main requirement is \ref{cond::oraclepathwise}, which assumes that \(\Psi\) is pathwise differentiable at \(\Pi_0 P\) with respect to the Hessian inner product \(P\ddot{\ell}_{\Pi_0 P}\). Notably, this requires differentiability only along the oracle submodel \(\mathcal{M}_0\), not over the full nonparametric model \(\mathcal{M}_{\mathrm{np}}\); for log-likelihood loss, it reduces to the standard \(L^2\) pathwise differentiability condition on \(\mathcal{M}_0\). The remaining conditions are typically satisfied for smooth losses. Condition~\ref{cond::losssmooth} requires first- and second-order Gâteaux differentiability along regular paths. Condition~\ref{cond::innerproductloss} requires the Hessian to define a positive semidefinite inner product on the tangent space. Conditions~\ref{cond::invariance-i}-\ref{cond::invariance-ii} addresses possible nonuniqueness of the loss-based projection by requiring that \(\Psi_0\), \(\dot{\ell}_Q\), and \(P\ddot{\ell}_Q\) be invariant across solutions. Finally, \ref{cond::smoothprojection2} ensures that \(\Pi_0\) is smooth in the sense that it maps regular submodels to regular submodels.

\subsection{Specialization to linear functionals of the outcome regression}
\label{section::EIFlinear}

We now apply Theorem   \ref{theorem::EIFmain}  to the linear functional setup introduced in Section~\ref{section::ADMLlinear}. Recall that the oracle parameter is \(\Psi_0(P) := E_P[m(W, \mu_{P,\mathcal{H}_0})]\), where \(\mu_{P,\mathcal{H}_0} := \argmin_{\mu \in \mathcal{H}_0} E_P[\{\mu_P(A,W) - \mu(A,W)\}^2]\) denotes the \(L^2(P)\)-projection of \(\mu_P\) onto \(\mathcal{H}_0\). Equivalently, \(\Psi_0\) can be written as \(\Psi_0 = \Psi \circ \Pi_0\), where \(P \mapsto \Pi_0 P\) is any loss-based projection satisfying \(\Pi_0 P \in \argmin_{Q \in \mathcal{M}_0} P\ell(\cdot,Q)\), with \(\ell(O,Q) := \frac{1}{2}\{Y - \mu_Q(A,W)\}^2 - \log \frac{dQ_X}{d\mu_X}(X)\) almost surely.

The conditions of Theorem~\ref{theorem::EIFmain} are straightforward to verify in this setting. Conditions~\ref{cond::invariance-i}--\ref{cond::invariance-ii} hold because both \(\Psi(P)\) and \(\ell(\cdot,P)\) depend on \(P\) only through the covariate distribution \(P_W\) and the outcome regression \(\mu_P\). Conditions~\ref{cond::losssmooth}, \ref{cond::innerproductloss}, and \ref{cond::smoothprojection2} follow from the smoothness of the least-squares loss and the Hellinger differentiability of the map \(P \mapsto \mu_{P,\mathcal{H}_0}\) \citep{luedtke2024one}. Finally, \ref{cond::oraclepathwise} holds whenever \(\mu \mapsto E_P[m(W,\mu)]\) is a bounded linear functional on \(\mathcal{H}_0\). Theorem~\ref{theorem::EIFmain} therefore yields the following corollary.

\begin{enumerate}[label=(B\arabic*), ref=B\arabic*,series=condlinear]
\item  \textit{Linear functional is continuous:} \label{cond::boundedlinearfun} $\sup_{\mu \in \mathcal{H}_0 \backslash \{0\}} \frac{E_P[m(W,\mu)]}{\|\mu\|_P} < \infty$.
\end{enumerate}

By the Riesz representation theorem, there exists an \(L^2(P)\)-Riesz representer \(\alpha_{P,\mathcal{H}_0}\) of the linear functional \(\mu \mapsto E_P[m(W,\mu)]\) on \(\mathcal{H}_0\), such that \(E_P[m(W,\mu)] = E_P[\alpha_{P,\mathcal{H}_0}(A,W)\mu(A,W)]\) for all \(\mu \in \mathcal{H}_0\). It follows that \(\alpha_{P,\mathcal{H}_0} = \argmin_{\alpha \in \mathcal{H}_0} E_P[\alpha(A,W)^2 - 2m(W,\alpha)]\) \citep{chernozhukov2022automatic}.

\begin{theorem}[Pathwise differentiability of oracle linear functional]
    \label{theorem::boundedlinearEIF} 
    Suppose \ref{cond::boundedlinearfun} holds at $P \in \mathcal{M}_{\mathrm{np}}$. Then, $\Psi_0: \mathcal{M}_{\mathrm{np}} \rightarrow \mathbb{R}$ is $P$-pathwise differentiable with EIF 
    $$D_{0,P}: o \mapsto \alpha_{P,\mathcal{H}_0}(a,w)\{y - \mu_{P,\mathcal{H}_0}(a,w)\} + m(w,\mu_{P,\mathcal{H}_0}) - \Psi_0(P).$$ Moreover, \(D_{0,P} = -\dot{\ell}_{\Pi_0 P}(s_{0,P})\) is of the form given in Theorem~\ref{theorem::EIFmain}, where \(s_{0,P}(z) := v_{P,\mathcal{H}_0}^{-1}(a,w)\alpha_{P,\mathcal{H}_0}(a,w)\{y-\mu_{P,\mathcal{H}_0}(a,w)\} + m(w,\mu_{P,\mathcal{H}_0}) - \Psi_0(P)\) and \(v_{P,\mathcal{H}_0}(a,w) := E_P[\{Y-\mu_{P,\mathcal{H}_0}(A,W)\}^2 \mid A=a, W=w]\).

\end{theorem}

Theorem~\ref{theorem::boundedlinearEIF} recovers known results on pathwise differentiability and EIFs for continuous linear functionals of regression functions \citep{chernoRegRiesz,van2024automatic,van2025automatic}. Here, however, these results follow directly from our general projection-based framework and Theorem~\ref{theorem::EIFmain}, rather than from a separate analysis specialized to linear functionals. We now study the EIFs and efficiency bounds of the oracle parameters in our examples.

\setcounter{example}{1}
\begin{example}[continued]
Recall the partially linear oracle model \(\mathcal{H}_0\) implied by the CATE model \(\mathcal{T}_0\). In Appendix~\ref{app:ADMLpartiallylinear}, we show that, for the ATE parameter,
\[
\alpha_{P,\mathcal{H}_0}(a,w)=\gamma_{P,\mathcal{H}_0}(w)\{a-\pi_P(w)\}, \qquad
\gamma_{P,\mathcal{H}_0}:=\argmin_{\gamma \in \mathcal{T}_0} E_P\!\left[\pi_P(W)\{1-\pi_P(W)\}\gamma(W)^2-2\gamma(W)\right].
\]
Under \ref{cond::boundedlinearfun}, the oracle parameter \(\Psi_0\) is pathwise differentiable at \(P\) with \(P\)-EIF $D_{0,P}(z)=\tau_{\Pi_0 P}(w)-E_P\{\tau_{\Pi_0 P}(W)\}
+\gamma_{P,\mathcal{H}_0}(w)\{a-\pi_P(w)\}\{y-\mu_{P,\mathcal{H}_0}(a,w)\}.$
When \(\pi_P(W)\{1-\pi_P(W)\}>0\) almost surely, \(\gamma_{P,\mathcal{H}_0}\) is the overlap-weighted \(L^2(P)\) projection of \(\{\pi_P(1-\pi_P)\}^{-1}\) onto the working model \(\mathcal{T}_0\). Simpler working models \(\mathcal{T}_0\) typically yield lower efficiency bounds because the inverse propensity weights are replaced by their projection \(\gamma_{P,\mathcal{H}_0}\). At one extreme, if \(\mathcal{T}_0=L^2(P_{0,W})\), then \(\Psi_0=\Psi\) and \(\gamma_{P,\mathcal{H}_0}=\{\pi_P(1-\pi_P)\}^{-1}\), so we recover the nonparametric EIF for the ATE. At the other extreme, if \(\mathcal{T}_0\) is the class of constant functions, then \(\gamma_{P,\mathcal{H}_0}\) reduces to the marginalized weight, $\gamma_{P,\mathcal{H}_0}=\{E_0[\pi_P(W)\{1-\pi_P(W)\}]\}^{-1},$ recovering the EIF for the overlap-weighted ATE \citep{imbensOverlapEstimand2006, li2019overlapWeights}.
\end{example}

\begin{example}[continued]
For the oracle model \(\mathcal{H}_0\) using \(\mu_0(A,W)\) as a dimension reduction of \((A,W)\), \(\alpha_{P,\mathcal{H}_0}(a,w) = E_0[\alpha_{P,\mathcal{H}}(A,W) \mid \mu_0(A,W) = \mu_0(a,w)]\). For the counterfactual mean \(m(w,\mu) := \mu(1,w)\), one can use \((A,\mu_0(1,W))\) as a bivariate reduction of \((A,W)\). The \(P_0\)-EIF of \(\Psi_0\) is then given pointwise by \(\mu_0(1,w) - \psi_0 +\frac{1(a=1)}{\pi_{0,\mu_0}(w)}\{y-\mu_0(1,w)\}\), where \(\pi_{0,\mu_0}(w)=P_0\{A=1 \mid \mu_0(1,W)=\mu_0(1,w)\}\) is the dimension-reduced propensity score. This is precisely the influence function of the nonparametric superefficient estimator proposed by \cite{ATEsupereff}, which uses \(\mu_n\) as a data-adaptive dimension reduction. Our results therefore show that this estimator is nonparametrically efficient for the oracle parameter \(\Psi_0\).
\end{example}

\begin{example}[continued]
Recall that \(\mathcal{H}_0\) is the PLR model induced by the CATE model \(\mathcal{T}_0 = \{f \circ \tau_0 : f\}\), which uses \(\tau_0(W)\) as a dimension reduction. For pathwise differentiability of \(\Psi_0\), it suffices to assume that the reciprocal of \(w \mapsto E_0[(A-\pi_0(W))^2 \mid \tau_0(W)=\tau_0(w)]\) has finite second moment, a mild overlap condition within strata of the CATE. In this case, the \(P_0\)-EIF of \(\Psi_0\) for the ATE parameter is pointwise
\[
\tau_0(w)- E_0[\tau_0(W)]+\gamma_{0,\mathcal{H}_0}(w)\{a-\pi_0(w)\}\{y-\mu_0(a,w)\}, \quad
\gamma_{0,\mathcal{H}_0}(w)=\{E_0[(A-\pi_0(W))^2 \mid \tau_0(W)=\tau_0(w)]\}^{-1}.
\]
Thus, overlap affects the efficiency bound only through this aggregated inverse-weighting factor. When the true CATE \(\tau_0\) is almost surely constant, we recover the overlap-weighted ATE and its EIF.
\end{example}

\section{Model approximation error}
\label{section::modelapprox}

\subsection{Decomposition and overview}

Recall the setup of Section~\ref{section::AMLEGeneral1}. ADML seeks to obtain efficient inference for the oracle parameter \(\Psi_0\) by targeting the data-adaptive parameter \(\Psi_n\), induced by a working model \(\mathcal{M}_n\) learned from the data to approximate the oracle model \(\mathcal{M}_0\). By definition, an ADML estimator is debiased for the data-adaptive target \(\Psi_n(P_0)\). Our analysis begins with the decomposition
\[
\widehat{\psi}_n - \Psi_0(P_0)
=
\underbrace{\widehat{\psi}_n - \Psi_n(P_0)}_{\text{(I)}}
+
\underbrace{\Psi_n(P_0) - \Psi_0(P_0)}_{\text{(II)}}.
\]
Term (I) is the estimation error for the data-adaptive target \(\Psi_n\), whereas term (II) is the model approximation error, which quantifies how well the working model \(\mathcal{M}_n\) approximates the oracle model \(\mathcal{M}_0\). Term (I) concerns inference for a data-adaptive target parameter, as studied by \cite{van2013AdaptTarget, dataAdaptTargetParam}. Under appropriate conditions, given in Section~\ref{section::dataAdaptParam}, term (I) is \(\sqrt{n}\{\widehat{\psi}_n - \Psi_n(P_0)\}\) is asymptotically mean-zero normal and determines the asymptotic distribution of \(\widehat{\psi}_n\).

The key step in establishing the validity of ADML estimators is to show that the model approximation error is negligible, that is, \(\Psi_n(P_0) - \Psi_0(P_0) = o_p(n^{-1/2})\), so that \(\sqrt{n}\{\widehat{\psi}_n - \Psi_0(P_0)\} = \sqrt{n}\{\widehat{\psi}_n - \Psi_n(P_0)\} + o_p(1)\), and hence the ADML estimator is \(n^{1/2}\)-consistent and asymptotically normal. At first glance, one might expect this error to depend at first order on how well the loss-based projection \(\Pi_n P_0 \in \mathcal{M}_n\) approximates the true distribution \(P_0 \in \mathcal{M}_0\), for example in Hellinger distance, which could preclude negligibility. The next subsection shows, however, that the model approximation error is second order. Its rate is instead governed by both how well \(\Pi_n P_0\) approximates \(P_0\) and how well the working model \(\mathcal{M}_n\) locally approximates the oracle model \(\mathcal{M}_0\) around \(P_0\), in a sense made precise below. We then derive a simpler expression for this error in the case of linear functionals of the outcome regression.

\subsection{Second-order expansion of the model approximation error: general case}

To state our main result, we embed the working model \(\mathcal{M}_n\) and the oracle model \(\mathcal{M}_0\) in a common union model \(\mathcal{M}_{n,0}\). For example, in Section~\ref{section::ADMLlinear}, we may take \(\mathcal{M}_{n,0} = \mathcal{M}_{\mathcal{H}_{n,0}}\), where \(\mathcal{H}_{n,0} := \mathcal{H}_n \oplus \mathcal{H}_0\) is the orthogonal direct sum of the regression models \(\mathcal{H}_n\) and \(\mathcal{H}_0\). This construction lets us define an extended projection parameter \(\Psi_{n,0}\) and its EIF, which together characterize the model approximation error. Specifically, let \(\mathcal{M}_{n,0}\) be a smooth union model satisfying
\begin{enumerate}
    \item[(i)] \(\mathcal{M}_n \cup \mathcal{M}_0 \subseteq \mathcal{M}_{n,0}\);
    \item[(ii)] \(T_{\mathcal{M}_{n,0}}(P) = T_{\mathcal{M}_n}(P) \oplus T_{\mathcal{M}_0}(P)\) for all \(P \in \mathcal{M}_{n,0}\).
\end{enumerate}
Define the extended projection parameter \(\Psi_{n,0} := \Psi \circ \Pi_{n,0}\) induced by \(\mathcal{M}_{n,0}\), where \(\Pi_{n,0}(P) \in \arg\min_{Q \in \mathcal{M}_{n,0}} P\ell(\cdot,Q)\), and let \(D_{n,0,P_{n,0}}\) denote its EIF at the loss-based projection \(P_{n,0} := \Pi_n P_0\). For comparison, let \(D_{n,P_{n,0}}\) denote the \(P_{n,0}\)-EIF of \(\Psi_n\) induced by \(\mathcal{M}_n\).

Our main result requires specifying an approximation, derived from the working model \(\mathcal{M}_n\), to the EIF \(D_{n,0,P_{n,0}}\) of the extended projection parameter \(\Psi_{n,0}\). The relevant approximation space is the \textit{loss-based tangent space}
\[
\mathcal{S}_{\mathcal{M}_n}(P_{n,0}) := \overline{\mathrm{span}}\{\dot{\ell}_{P_{n,0}}(s) : s \in T_{\mathcal{M}_n}(P_{n,0})\},
\]
defined as the closure of the linear span of the \(P_{n,0}\)-weak G\^{a}teaux derivatives of \(\ell\). By Theorem~\ref{theorem::EIFmain}, the EIF \(D_{n,0,P_{n,0}}\) of \(\Psi_{n,0}\) lies in the loss-based tangent space \(\mathcal{S}_{\mathcal{M}_{n,0}}(P_{n,0})\) of \(\mathcal{M}_{n,0}\). We therefore compare \(D_{n,0,P_{n,0}}\) with an element \(\bar D_{n,0,P_{n,0}} \in \mathcal{S}_{\mathcal{M}_n}(P_{n,0})\), viewing \(\mathcal{S}_{\mathcal{M}_n}(P_{n,0}) \subseteq \mathcal{S}_{\mathcal{M}_{n,0}}(P_{n,0})\) as the approximating subspace. For example, \(\bar D_{n,0,P_{n,0}}\) may be chosen either as \(D_{n,P_{n,0}}\), the \(P_{n,0}\)-EIF of \(\Psi_n\), or as the \(L^2(P_{n,0})\)-projection of \(D_{n,0,P_{n,0}}\) onto \(\mathcal{S}_{\mathcal{M}_n}(P_{n,0})\); for the log-likelihood loss, these two choices coincide. The former choice can also be viewed as a projection with respect to the Hessian inner product introduced above Theorem~\ref{theorem::EIFmain}; see Appendix~\ref{appendix::modelapproxproj} for details. The following theorem presents our main result, along with a brief sketch of its proof.



\begin{enumerate}[label=A\arabic*), ref=A\arabic*, resume = cond]
\item \textit{Pathwise differentiability:} $\Psi_n$, $\Psi_0$, and $\Psi_{n,0}$ are pathwise differentiable at $P_{n,0}$ and $P_0$. \label{cond::oracleProjInModelFinal}
\end{enumerate}

\begin{theorem}[Representation for model approximation error]
  
    Suppose that Conditions \ref{cond::lossProjIdent}-\ref{cond::lossProjDeriv} hold and that \ref{cond::oracleProjInModelFinal} holds.  Let $\bar D_{n,0,P_{n,0}}$ be any element of $ \mathcal{S}_{\mathcal{M}_n}(P_{n,0})$, viewed as an  approximation of $D_{n,0,P_{n,0}}$. Then, the model approximation error can be decomposed as 
    \begin{align*}
        \Psi_n(P_0) - \Psi_0(P_0) = B_{n,0} + R_{n,0} 
    \end{align*} with $B_{n,0}:=(P_{n,0} - P_0)\left( D_{n,0,P_{n,0}} - \bar{D}_{n,0,P_{n,0}} \right)$ and 
   $R_{n,0} :=  \Psi_{n,0}(P_{n,0}) - \Psi_{n,0}(P_0) +  P_0 D_{n,0,P_{n,0}}$.
      \label{theorem::exactBiasOracle}
\end{theorem}
\begin{proof}[Proof sketch]
The proof leverages orthogonality properties of loss-based projections and the EIF characterization of Theorem \ref{theorem::EIFmain}. A pathwise Taylor expansion of $\Psi_{n,0}$ around $P_{n,0}$ yields $\Psi_n(P_0)-\Psi_0(P_0) = \Psi_{n,0}(P_{n,0})-\Psi_{n,0}(P_0) = (P_{n,0}-P_0)D_{n,0,P_{n,0}} + R_{n,0}. $ Because \(\bar{D}_{n,0,P_{n,0}} \in \mathcal{S}_{\mathcal{M}_n}(P_{n,0})\) by Theorem \ref{theorem::EIFmain}, the
first-order optimality conditions for the projection \(P_{n,0}\) imply that
\(P_0\bar{D}_{n,0,P_{n,0}}=0\), while the mean-zero property of scores in the tangent-space
implies that \(P_{n,0}\bar{D}_{n,0,P_{n,0}}=0\). Therefore, $(P_{n,0}-P_0)\bar{D}_{n,0,P_{n,0}}=0,$
and hence $(P_{n,0}-P_0)D_{n,0,P_{n,0}}
=
(P_{n,0}-P_0)\{D_{n,0,P_{n,0}}-\bar{D}_{n,0,P_{n,0}}\}
=
B_{n,0}.$
Substituting this identity into the expansion completes the proof.
\end{proof}
The key term in the bias expansion of Theorem~\ref{theorem::exactBiasOracle} is \(B_{n,0}\), which, by the Cauchy--Schwarz inequality, is bounded as
\begin{equation}
\label{eqn::approxerrorcauchy}
|B_{n,0}|
\lesssim
H(P_{n,0}, P_0)\,
\left\|\bar{D}_{n,0,P_{n,0}} - D_{n,0,P_{n,0}}\right\|_{P_{n,0}+P_0},
\end{equation}
where $H(P_{n,0}, P_0)
:=
\{\int (\sqrt{dP_{n,0}} - \sqrt{dP_0})^2 \}^{1/2} $ denotes the Hellinger distance. Thus, \(B_{n,0}\) is second order: the first factor measures how well the projection \(P_{n,0}\) approximates \(P_0\), while the second measures how well the loss-based tangent space \(\mathcal{S}_{\mathcal{M}_n}(P_{n,0})\) approximates the influence function \(D_{n,0,P_{n,0}}\), which lies in the larger space \(\mathcal{S}_{\mathcal{M}_{n,0}}(P_{n,0})\). Informally, the second factor is small if the working model \(\mathcal{M}_n\) provides a sufficiently accurate local approximation to \(\mathcal{M}_0\), and hence to the union model \(\mathcal{M}_{n,0}\), around \(P_{n,0}\). For variable and basis function selection, error bounds for the terms in \eqref{eqn::approxerrorcauchy} can be obtained under smoothness assumptions on \(P_0\) and \(D_{n, 0,P_{n,0}}\) using approximation theory for sieves \citep{shen1997methods,spnpsieve, BelloniSieveApprox} and for lasso-selected models under approximate sparsity \citep{belloni2013least, chernoapproxSparse2019}. Finally, the term \(R_{n,0}\) is a Taylor remainder and is typically second order in the error with which \(P_{n,0}\) approximates \(P_0\).

Theorem~\ref{theorem::exactBiasOracle} is deliberately stated in general form. Its interpretation is simpler, however, in the important special case where the model-selection procedure satisfies \(\mathcal{M}_n \subseteq \mathcal{M}_0\) with probability tending to one. Intuitively, such a procedure learns increasingly rich models as more data become available, but eventually remains within a fixed oracle model \(\mathcal{M}_0\). For example, a variable-selection procedure may include a covariate only once its signal is sufficiently large relative to the noise; the oracle model \(\mathcal{M}_0\) then corresponds to the collection of variables eventually included by \(\mathcal{M}_n\) in the infinite-data limit. In that case, \(\mathcal{M}_{n,0} = \mathcal{M}_0\) and $\mathcal{S}_{\mathcal{M}_{n,0}}(P_{n,0}) = \mathcal{S}_{\mathcal{M}_{0}}(P_{n,0})$, and hence \(\Psi_{n,0} = \Psi_0\) and \(D_{n,0,P_{n,0}} = D_{0,P_{n,0}}\). In this setting, the rate of the model approximation error is governed by how quickly the pair \((\mathcal{M}_n, T_{\mathcal{M}_n}(P_{n,0}))\) approaches \((\mathcal{M}_0, T_{\mathcal{M}_0}(P_{n,0}))\) from below, as well as by the smoothness of \(P_0\) and \(D_{0,P_0}\), closely paralleling approximation conditions for nested sieves \citep{shen1997methods,spnpsieve,SieveQiu}.

The inclusion condition \(P(\mathcal{M}_n \subseteq \mathcal{M}_0) \to 1\) holds under suitable conditions for a variety of model-selection procedures, including cross-validation over a finite collection or sieve of models \citep{shao1993linear, van2003unified, SieveQiu}, sparsity-based procedures such as the Lasso \citep{Tibshirani94regressionshrinkage}, adaptive Lasso \citep{AdaptLassoOracle}, and SCAD \citep{oracleSelectSCAD}, as well as various semiparametric model-selection procedures \citep{bunea2004consistent, li2008variable, claeskens2007asymptotic, ravikumar2009sparse, huang2010variable, su2014variable, xu2016faithful, amato2022wavelet}. Many existing results instead establish the stronger property of consistent model selection, such as exact support recovery, under which \(P(\mathcal{M}_n = \mathcal{M}_0) \to 1\) for an ``optimal'' oracle model \(\mathcal{M}_0\), for example, one containing only signal variables \citep{bauer1988model, potscher1991effects, buhlmann1999efficient, zhao2006model, wainwright2009sharp, belloni2013least}. By contrast, Theorem~\ref{theorem::exactBiasOracle} applies more broadly: it continues to hold when the selected models asymptotically contain some irrelevant variables or recover \(\mathcal{M}_0\) only approximately. Consequently, the model approximation error may vanish under weaker conditions than those required for exact support recovery, providing a route to valid post-selection inference for smooth functionals under weaker assumptions than those commonly imposed in the literature.

\subsection{Specialization to linear functionals of the outcome regression}

\label{adml::modelselectlinearfunc}

In this section, we apply Theorem~\ref{theorem::exactBiasOracle} to the special case of linear functionals of the outcome regression under lower-level conditions. To state the result, for a given model \(\mathcal{H}\), let \(\alpha_{0,\mathcal{H}}\) denote the Riesz representer in \(L^2(P_0)\) of the linear functional \(\mu \mapsto E_0[m(W,\mu)]\). Define the union regression model as the orthogonal direct sum \(\mathcal{H}_{n,0} := \mathcal{H}_n \oplus \mathcal{H}_0\), and let \(\operatorname{Proj}_n : \mathcal{H}_{n,0} \to \mathcal{H}_n\) denote the orthogonal projection onto \(\mathcal{H}_n\) in \(L^2(P_0)\).

\begin{enumerate}[label=(B\arabic*), ref=B\arabic*,resume=condlinear]
\item  \textit{Linear functional is continuous under union model:} \label{cond::boundedlinearfun} $\sup_{\mu \in \mathcal{H}_{n,0} \backslash \{0\} } \frac{E_0[m(W,\mu)]}{\|\mu\|} < \infty$.
\end{enumerate}

\begin{theorem}[Model approximation error for linear functionals]
\label{theorem::oraclebiaslinear}
    Under \ref{cond::boundedlinearfun} and $\mu_0 \in \mathcal{H}_0$, we have 
    $$\Psi_n(P_0) - \Psi(P_0) = - \langle \operatorname{Proj}_n(\alpha_{0,\mathcal{H}_{n,0}}) - \alpha_{0, \mathcal{H}_{n,0}},\, \operatorname{Proj}_n(\mu_{0}) - \mu_0  \rangle_{P_0}.$$
    Hence, by Cauchy-Schwarz, $|\Psi_n(P_0) - \Psi(P_0)| \leq \|\operatorname{Proj}_n (\alpha_{0,\mathcal{H}_{n,0}})  - \alpha_{0,\mathcal{H}_{n,0}}\|_{P_0} \|  \operatorname{Proj}_n(\mu_0) - \mu_0 \|_{P_0}$.
\end{theorem}
Theorem~\ref{theorem::oraclebiaslinear} follows from Theorem~\ref{theorem::exactBiasOracle} by taking \(\overline{D}_{n,0,P_0}\) to be the EIF of \(\Psi_n\), namely \(D_{n,P_{n,0}}\), though it may also be proved directly using orthogonality properties of projections. It generalizes existing formulas for omitted-variable bias \citep{chernozhukov2021long} and sieve approximation error \citep{shen1997methods, sieveOneStepPlugin, spnpsieve}, which correspond to the nonadaptive case in which \(\mathcal{H}_0 = \mathcal{H}_{n,0}= \mathcal{H}\).

For the approximation error \(\Psi_n(P_0)-\Psi_0(P_0)\) to vanish, the working model \(\mathcal H_n\) must approximate the oracle model \(\mathcal H_0\) sufficiently well. In particular, both \(\alpha_{0,\mathcal H_{n,0}}\) and \(\mu_0\) must be well approximated in \(L^2(P_0)\) by their projections onto \(\mathcal H_n\). For sieve-type selection procedures satisfying \(P(\mathcal H_n \subseteq \mathcal H_0)\to 1\), we have \(\alpha_{0,\mathcal H_{n,0}}=\alpha_{0,\mathcal H_0}\) with probability tending to one, so it suffices that \(\mathcal H_n\) grow sufficiently quickly toward a dense limit in \(\mathcal H_0\). More generally, when this inclusion condition fails, convergence requires that basis functions in \(\mathcal H_0\) orthogonal to \(\mathcal H_n\) contribute asymptotically negligibly to the basis expansion of the union-model representer \(\alpha_{0,\mathcal H_{n,0}}\).


The next example illustrates how the model approximation error vanishes in an approximately sparse setting in which basis functions are selected by marginal correlation screening under an orthonormal design. The same general mechanism extends to other screening procedures, such as the lasso under suitable design conditions; see, for example, \cite{donoho2005stable}, \cite{AdaptLassoOracle}, \cite{huang2008adaptive}, and \cite{wainwright2019high}.

\begin{example}[Marginal correlation screening with a truncated dictionary]
\label{example::screening}
Let \(\Phi=\{\varphi_j:j\in\mathbb N\}\) be an orthonormal basis for \(L^2(P_0)\), and write \(\mu_0=\sum_{j=1}^\infty \beta_j\varphi_j\) and \(\alpha_{0,\mathcal H}=\sum_{j=1}^\infty \gamma_j\varphi_j\). Define the active set \(S_0:=\{j\in\mathbb N:\beta_j\neq 0\}\) and the oracle model \(\mathcal H_0:=\mathrm{span}\{\varphi_j:j\in S_0\}\). Then \(\alpha_{0,\mathcal H_0}=\sum_{j\in S_0}\gamma_j\varphi_j\), so \(\mathcal H_0\) is the span of the support of \(\mu_0\), and \(\mu_0\) and \(\alpha_{0,\mathcal H_0}\) share the same support by construction. Thus, the main requirements are that \(\mathcal H_n \subseteq \mathcal H_0\) with probability tending to one and that both \(\mu_0\) and \(\alpha_{0,\mathcal H_0}\) are well approximated by elements of \(\mathcal H_n\).

Now let \(\Phi_{k(n)}:=\{\varphi_1,\ldots,\varphi_{k(n)}\}\) be a candidate dictionary of size \(k(n)\), where \(\log k(n)=O(\log n)\), and write \(S_{0,k}:=S_0\cap\{1,\ldots,k(n)\}\). Since \(\beta_j=\langle \varphi_j,\mu_0\rangle_{P_0}\), a natural selection rule is marginal correlation screening:
\[
\widehat S_n
:=
\left\{
j \le k(n):
\left|
\frac{1}{n}\sum_{i=1}^n \varphi_j(A_i,W_i)Y_i
\right| > (\log n)c_n
\right\},
\qquad
c_n \asymp \sqrt{\frac{\log k(n)}{n}},
\]
and we set \(\mathcal H_n:=\mathrm{span}\{\varphi_j:j\in \widehat S_n\}\). Under mild conditions, \(\widehat S_n \subseteq S_{0,k}\) with probability tending to one, and hence \(\mathcal H_n \subseteq \mathcal H_0\) asymptotically; see Appendix~\ref{appendix:example:proof}.

The approximation error for each of \(\mu_0\) and \(\alpha_{0,\mathcal H_0}\) has two components: a \emph{screening} term, due to relevant basis functions in \(\Phi_{k(n)}\) that are missed by the selection rule, and a \emph{truncation} term, due to basis functions outside \(\Phi_{k(n)}\). To control the screening term, assume \(\sum_{j=1}^\infty |\beta_j|^{q_\mu}<\infty\) for some \(q_\mu\in(0,2)\), and \(|\gamma_j|\lesssim |\beta_j|^a\) for all \(j\in S_0\) and some \(a>q_\mu/2\). Appendix~\ref{appendix:example:proof} shows that, with probability tending to one,
\[
\|\operatorname{Proj}_n(\mu_0)-\mu_0\|_{P_0}
\,
\|\operatorname{Proj}_n(\alpha_{0,\mathcal H_0})-\alpha_{0,\mathcal H_0}\|_{P_0}
\lesssim
n^{-(1+a-q_\mu)/2},
\]
up to logarithmic factors and second-order truncation terms.
\qed
\end{example}

To illustrate the generality of Theorem~\ref{theorem::oraclebiaslinear} (and Theorem~\ref{theorem::exactBiasOracle}) in allowing \(\mathcal{M}_n \not\subseteq \mathcal{M}_0\), consider the setting of Example~\ref{example::map}, in which \(\mathcal{H}_n\) is induced by a data-adaptive feature representation and \(\mathcal{H}_0\) by its oracle limit. Specifically, let \(\mathcal{H}_n := \mathcal{H}_{\phi_n}\) and \(\mathcal{H}_0 := \mathcal{H}_{\phi_0}\), where, for any feature representation \(\phi:\mathcal{W}\times\mathcal{A}\to\mathbb{R}^m\), we define \(\mathcal{H}_{\phi} := \{f\circ\phi : f:\mathbb{R}^m\to\mathbb{R}\}\). The combined model is then \(\mathcal{H}_{n,0} = \mathcal{H}_{(\phi_n,\phi_0)}\), where \((\phi_n,\phi_0)\) denotes the stacked feature representation. For any \(\phi\), let \(\alpha_{0,\phi}\) and \(\mu_{0,\phi}\) denote the \(L^2(P_0)\)-projections of \(\alpha_{0,\mathcal H}\) and \(\mu_0\) onto \(\mathcal{H}_{\phi}\), respectively. Then Theorem~\ref{theorem::oraclebiaslinear} implies that the approximation error is bounded by $\|\alpha_{0,\phi_n}-\alpha_{0,(\phi_n,\phi_0)}\|_{P_0}\,\|\mu_{0,\phi_n}-\mu_0\|_{P_0}.$
This bias vanishes if either \(\mu_0\) is well approximated by a function of \(\phi_n\), or the Riesz representer associated with \(\phi_n\) converges to that associated with the stacked representation \((\phi_n,\phi_0)\). Heuristically, the latter requires that conditioning on \((\phi_n,\phi_0)\) provides asymptotically no more information than conditioning on \(\phi_n\) alone.

The following lemma characterizes the feature-estimation bias for $\psi_0$ in terms of the quality of the estimated representation \(\phi_n\). We measure the quality of the learned representation by its mean integrated squared error,
\[
\|\phi_n-\phi_0\|_{P_0,2}
:=
\left\{\int \|\phi_n(a,w)-\phi_0(a,w)\|_{\mathbb{R}^d}^2\,P_0(dw,da)\right\}^{1/2}.
\]
Our result assumes Lipschitz continuity of certain bivariate conditional expectation functions.

\begin{enumerate}[label=(B\arabic*), ref=B\arabic*, resume=condlinear]
    \item \textit{Lipschitz continuity:} The bivariate mappings
$(t_1,t_2)\mapsto E_0[Y\mid \phi_n(A,W)=t_1,\ \phi_0(A,W)=t_2,\ \mathcal{D}_n]$ and $ (t_1,t_2)\mapsto E_0[\alpha_{0,\mathcal H}(A,W)\mid \phi_n(A,W)=t_1,\ \phi_0(A,W)=t_2,\ \mathcal{D}_n]$    are almost surely \(L\)-Lipschitz continuous for a fixed $L \in (0,\infty)$, where \(\mathcal H\) is any linear space satisfying \(\mathcal{H}_{n,0}\subseteq\mathcal H\). \label{cond::lipschitzFeature}
\end{enumerate}
When \(\alpha_{0,\mathcal H}(A,W)\) is a conditional density ratio for \(A\) given \(W\), such as inverse propensity weights, and \(\phi_n(A,W) = (A, \widetilde{\phi}_n(W))\), the quantity $E_0[\alpha_{0,\mathcal H}(A,W)\mid \phi_n(A,W),\ \phi_0(A,W),\ \mathcal{D}_n]$ typically reduces to the corresponding conditional density ratio given \((\widetilde{\phi}_n(W),\widetilde{\phi}_0(W))\) \citep{ATEsupereff}.

\begin{lemma}[Estimation error for feature representations]
\label{lemma::lipschitzdependent}
Under Conditions~\ref{cond::boundedlinearfun} and \ref{cond::lipschitzFeature},
\[
\|\operatorname{Proj}_n(\alpha_{0,\mathcal{H}_{n,0}})-\alpha_{0,\mathcal{H}_{n,0}}\|_{P_0}
+
\|\operatorname{Proj}_n(\mu_0)-\mu_0\|_{P_0}
\lesssim
L \|\phi_n-\phi_0\|_{P_0,2}.
\]
Hence, $|\Psi_n(P_0)-\Psi(P_0)|
\lesssim
L^2 \|\phi_n-\phi_0\|_{P_0,2}^2$.
\end{lemma}

Lemma~\ref{lemma::lipschitzdependent} shows that the model approximation error vanishes when the estimated feature representation \(\phi_n\) converges to the oracle representation \(\phi_0\) in mean integrated squared error. The key requirement is \ref{cond::lipschitzFeature}, which transfers this convergence to the nuisance functions appearing in Theorem~\ref{theorem::oraclebiaslinear}. Related bounds are implicit in prior work on debiased machine learning with estimated features, where either \(\|\alpha_{0,\phi_n}-\alpha_{0,(\phi_n,\phi_0)}\|_{P_0}\|\mu_{0,\phi_n}-\mu_0\|_{P_0}\) is directly assumed negligible (Theorem 1 of \cite{ATEsupereff}; Assumption 3 of \cite{dukes2024doubly}), or conditions similar to \ref{cond::lipschitzFeature} are imposed (Condition (v)  of \cite{ATEsupereff}; Lemma I.1 of \cite{wang2023super}; and Section 3.1 of \cite{bonvini2024doubly}).

 \section{Large-sample theory: general case}
\label{section::theory}

\subsection{Asymptotic linearity and efficiency for oracle parameter}
\label{section::dataAdaptParam}

\label{section::oracleParamInference}

With Theorem~\ref{theorem::exactBiasOracle} in hand, we are now in a position to show that the ADML estimator is regular, asymptotically linear, and nonparametrically efficient for the oracle parameter $\Psi_0$ at $P_0$. Our main result assumes the following high-level conditions.

 \begin{enumerate}[label=(C\arabic*), ref=C\arabic*,series=condB]

\item \textit{First order estimator expansion:}  $\widehat{\psi}_n = \Psi_n(P_0) + (P_n-P_0) D_{n,P_0} + o_p(n^{-1/2})$ with $D_{n,P_0}$ the EIF of $\Psi_n:\mathcal{M}_{\mathrm{np}}\rightarrow\mathbb{R}$;\label{cond::debiased}
\item \textit{Local consistency of model:}  $ \norm{D_{n,P_0}- D_{0,P_{0}}}_{P_0}  = o_p(1)$;\label{cond::consDn}
\item \textit{Negligible empirical process remainder:}  $(P_n - P_0) \left(D_{n,P_0} - D_{0,P_0}\right) =  o_p(n^{-1/2})$. 
\label{cond::Donsker}
\item \textit{Negligible model approximation error:}  $\Psi_n(P_0) - \Psi_0(P_0)  =o_p(n^{-1/2})$.\label{cond::doubleRemrootnFinal}
 
\end{enumerate}

Condition~\ref{cond::debiased} is the defining property of the ADML estimator. Such an expansion can typically be verified for debiased machine learning estimators of the data-adaptive working parameter \(\Psi_n\), as shown by \cite{dataAdaptTargetParam}. In particular, the one-step debiased estimator of \(\Psi_n\), given by \(\Psi(\widehat{P}_n) + P_n D_{n,\widehat{P}_n}\) for an estimator \(\widehat{P}_n \in \mathcal{M}_n\) of \(P_0\), satisfies this expansion provided that the second-order and empirical-process remainders are negligible; namely, \(\Psi_n(\widehat{P}_n) - \Psi_n(P_0) + P_0 D_{n,\widehat{P}_n} = o_p(n^{-1/2})\) and \((P_n - P_0)(D_{n,\widehat{P}_n} - D_{n,P_0}) = o_p(n^{-1/2})\) \citep{bickel1993efficient, dataAdaptTargetParam}. Condition~\ref{cond::consDn} requires the \(P_0\)-EIF of \(\Psi_n\) to converge to the \(P_0\)-EIF of \(\Psi_0\), and may be interpreted as a local consistency requirement on the working model \(\mathcal{M}_n\) relative to \(\mathcal{M}_0\). In Appendix~\ref{appendix::stable}, we show that this condition holds when \(T_{\mathcal{M}_n}(P_{n,0})\) approximates the union tangent space \(T_{\mathcal{M}_n}(P_{n,0}) \oplus T_{\mathcal{M}_0}(P_{n,0})\) in the sense that \(\inf_{s \in T_{\mathcal{M}_n}(P_{n,0})} P_0 \ddot{\ell}_{P_{n,0}}(s_{n,0,P_{n,0}} - s,\; s_{n,0,P_{n,0}} - s) = o_p(1)\), together with other mild conditions. Condition~\ref{cond::Donsker} follows from \ref{cond::consDn} if \(D_{n,P_0}\) belongs to a \(P_0\)-Donsker class, or alternatively if suitable sample-splitting techniques are used \citep{CVTMLEworking2010, DoubleML}. Finally, \ref{cond::doubleRemrootnFinal} requires the model approximation error \(\Psi_n(P_0) - \Psi_0(P_0)\) to be asymptotically negligible. By Theorem~\ref{theorem::exactBiasOracle}, this holds if \(B_{n,0} = o_p(n^{-1/2})\) and \(R_{n,0} = o_p(n^{-1/2})\).

\begin{theorem}[Regularity and efficiency for oracle parameter]
\label{theorem::limitDataAdaptOracle}
\label{theorem::oracleEff}

Suppose Conditions~\ref{cond::lossProjIdent}--\ref{cond::oracleProjInModelFinal} and \ref{cond::debiased}--\ref{cond::Donsker} hold.
\begin{enumerate}
    \item[(i)] The ADML estimator \(\widehat{\psi}_n\) is \(P_0\)-asymptotically linear for \(\Psi_n(P_0)\), satisfying
    \[
    \widehat{\psi}_n = \Psi_n(P_0) + (P_n-P_0)D_{0,P_0} + o_p(n^{-1/2}).
    \]

 \item[(ii)] If, in addition, \ref{cond::doubleRemrootnFinal} holds, then \(\widehat{\psi}_n\) is \(P_0\)-asymptotically linear for \(\Psi_0\), with influence function equal to the EIF of \(\Psi_0:\mathcal{M}_{\mathrm{np}}\to\mathbb{R}\), and satisfies
\[
\widehat{\psi}_n = \Psi_0(P_0) + P_n D_{0,P_0} + o_p(n^{-1/2}).
\]
Consequently, \(\sqrt{n}\{\widehat{\psi}_n-\Psi_0(P_0)\} \rightsquigarrow N\!\left(0,\,P_0D_{0,P_0}^2\right)\), and \(\widehat{\psi}_n\) is \(P_0\)-regular and efficient for \(\Psi_0\).
\end{enumerate}
\end{theorem}

Theorem~\ref{theorem::oracleEff} shows that the ADML estimator is \(P_0\)-regular and nonparametrically efficient for \(\Psi_0\). This regularity implies that inference based on the asymptotic normality of the estimator is valid for $\Psi_0$ both pointwise and locally uniformly over local perturbations of \(P_0\) in the nonparametric model. Thus, at least in the local asymptotic sense, there is no loss from learning \(\mathcal{M}_0\) empirically rather than knowing \(\mathcal{M}_0\) or \(\Psi_0\) in advance. We next show how these oracle properties extend to the original target parameter \(\Psi\).


\begin{remark}
Asymptotic normality, though not necessarily asymptotic linearity or efficiency, can hold without \ref{cond::consDn}. This is particularly relevant for unstable model-selection procedures, for which \ref{cond::consDn} may fail because \(\mathcal{M}_n\) does not converge, for example when it alternates between competing models. One remedy is sample splitting, in which the working model is learned on an independent sample \citep{van2013AdaptTarget, rinaldo2019bootstrapping}. Although this may reduce efficiency, it allows inference for the data-adaptive target parameter \(\Psi_n(P_0)\) under conditions similar to those used in standard nonadaptive DML. Another approach is to assume that \(\mathcal{M}_n\) is asymptotically deterministic, while still varying with \(n\). In that case, asymptotic normality follows from a suitable triangular-array central limit theorem applied to \(\sqrt{n/\sigma_n^2}\,(P_n-P_0)D_{n,P_0}\), where \(\sigma_n^2\) is an appropriate scaling constant. For example, \cite{danielleWittenLassoWorks} established asymptotic determinism for Lasso-based working models under relatively mild conditions.
\end{remark}

\subsection{Superefficiency and regularity properties for the original target parameter}
\label{section::oracleParamInference::original}

We now show that, when the efficiency bound for \(\Psi_0\) is smaller than that for \(\Psi\), the ADML estimator is \(P_0\)-superefficient for \(\Psi\) in the nonparametric model. This follows directly from Theorem~\ref{theorem::oracleEff}, since \(\Psi(P)=\Psi_0(P)\) for all \(P\in\mathcal{M}_0\), and in particular \(\Psi(P_0)=\Psi_0(P_0)\) when \(P_0 \in \mathcal{M}_0\).

\begin{theorem}[Superefficiency and regularity for the original parameter]
\label{theorem::oracleRegularity}
Suppose that \(P_0 \in \mathcal{M}_0\) and that the conditions of Theorem~\ref{theorem::oracleEff} hold. Then the ADML estimator \(\widehat{\psi}_n\) has the following properties:
\begin{enumerate}
    \item[(i)] it satisfies the asymptotically linear expansion
    \[
    \widehat{\psi}_n = \psi_0 + (P_n - P_0) D_{0,P_0} + o_p(n^{-1/2})
    \]
    at \(P_0\), where \(D_{0,P_0}\) is the \(P_0\)-EIF of \(\Psi_0\);

    \item[(ii)] it is \(P_0\)-regular for \(\Psi\) over all local perturbations \(P_{0,hn^{-1/2}}\) in the oracle submodel \(\mathcal{M}_0\);

    \item[(iii)] if, in addition, \(\ell\) is the negative log-likelihood loss, then \(\widehat{\psi}_n\) is asymptotically \(P_0\)-efficient for \(\Psi\) relative to the oracle submodel \(\mathcal{M}_0\).
\end{enumerate}
Consequently, $\sqrt{n}\,(\widehat{\psi}_n - \psi_0) \rightsquigarrow N\!\left(0,\,P_0D_{0,P_0}^2\right),$
including under sampling from local perturbations \(P_{0,hn^{-1/2}}\) of \(P_0\) that remain in \(\mathcal{M}_0\).
\end{theorem}

By Theorem \ref{theorem::oracleRegularity}, adaptive Wald-type inference for \(\Psi(P_0)\) can be conducted using a consistent estimator of the limiting variance \(\sigma_0^2\). Under \ref{cond::consDn}, \(\sigma_0^2\) can be consistently estimated by the empirical variance estimator \(\sigma_n^2 := n^{-1}\sum_{i=1}^{n} D_{n,\widehat P_n}(O_i)^2\), based on the influence function for \(\Psi_n(P_0)\), provided that \(\widehat P_n\) is a suitably regular estimator of \(P_0\).  For the M-estimator in Example~\ref{example::mestimator}, the limiting variance \(\sigma_0^2\) can be estimated by the model-robust sandwich variance estimator for the finite-dimensional model \(\mathcal{M}_n\).

The ADML estimator \(\widehat{\psi}_n\) is \(P_0\)-superefficient for \(\Psi\) when the efficiency bound for \(\Psi_0\), and hence its limiting variance, is smaller than the efficiency bound for \(\Psi\) in the nonparametric model; superefficiency is maximized under the log-likelihood loss. This typically occurs when the oracle tangent space \(T_{\mathcal{M}_0}(P_0)\) is strictly smaller than \(T_{\mathcal{M}_{\mathrm{np}}}(P_0)\), since the EIF for \(\Psi_0\) may then differ from that for \(\Psi\). Any such estimator is necessarily irregular for \(\Psi\) at \(P_0\) relative to \(\mathcal{M}_{\mathrm{np}}\). Nevertheless, Theorem~\ref{theorem::oracleRegularity} shows that inference for \(\Psi\) remains pointwise valid and locally uniformly valid over \(\mathcal{M}_0\), while Theorem~\ref{theorem::limitDataAdaptOracle} shows that inference for the oracle parameter \(\Psi_0\) is locally uniformly valid over the nonparametric model \(\mathcal{M}_{\mathrm{np}}\). Thus, ADML estimators lie on a continuum between regularity and superefficiency, determined by the complexity of the oracle model. Sacrificing some regularity may be worthwhile to gain efficiency, especially when regular nonparametric estimators of \(\Psi\) are unavailable, such as when the ATE is not nonparametrically identifiable or overlap is insufficient, as discussed by \cite{CTMLE} , \citet{ATEsupereff}, \citet{moosavi2023costs}, and \citet{dukes2024doubly}.

To understand the impact of irregularity on inference for \(\Psi\), the following theorem characterizes the limiting bias of the ADML estimator under local perturbations \(P_{0,hn^{-1/2}}\) of \(P_0\) within the prespecified model \(\mathcal{M}\). Instead of examining the least-favorable asymptotic mean squared error as \(h \to \infty\), which recovers the local asymptotic minimax bounds of \cite{hajek1972local} but is necessarily infinite for nonregular estimators, we focus on the behavior at a fixed perturbation size \(h\), which is more informative for superefficient estimators.

\begin{theorem}[Limiting distribution under local perturbations]
    Suppose that the conditions of Theorem \ref{theorem::oracleEff} hold, $P_0 \in \mathcal{M}_0$, and that $\Psi$ is pathwise differentiable at $P_0$ relative to the prespecified statistical model $\mathcal{M}$ with EIF $D_{\mathcal{M}, P_0} \in T_{\mathcal M} (P_0)$. Then, under sampling from any local perturbation $P_{0,hn^{-1/2}} \in  \mathcal{M}$ of $P_0$ with $h \in \mathbb{R}$ and score $s \in T_{\mathcal M} (P_0)$, the ADML estimator $\widehat{\psi}_n$ satisfies that
    $$\sqrt{n}\,\{\widehat{\psi}_n - \Psi(P_{0,hn^{-1/2}})\}\xrightarrow[ ]{\;\;d\;\;}  N(b_0(h;s), \sigma_0^2)\ ,$$ 
    where $b_{0}(h;s) := h\langle s, D_{0,P_0} -  D_{\mathcal{M}, P_0}\rangle_{P_0} $ and $\sigma_0^2 := \mathrm{var}_0\{D_{0,P_0}(O)\}$.  
    \label{theorem::irreg}
    
\end{theorem}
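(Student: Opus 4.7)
The plan is to combine the $P_0$--asymptotic linear expansion of $\widehat{\psi}_n$ established in Theorem \ref{theorem::oracleRegularity} with Le Cam's third lemma to transfer the limiting distribution from sampling under $P_0$ to sampling under the contiguous local alternative $P_{0,hn^{-1/2}}$, and then subtract off the deterministic drift in $\Psi(P_{0,hn^{-1/2}})$ arising from the pathwise differentiability of $\Psi$ at $P_0$ relative to $\mathcal{M}$.

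First, I would invoke Theorem \ref{theorem::oracleRegularity} to write $\widehat{\psi}_n - \psi_0 = (P_n - P_0)D_{0,P_0} + o_{P_0}(n^{-1/2})$. Since the submodel through $P_0$ with score $s \in T_{\mathcal{M}}(P_0)$ is regular, the log-likelihood ratio of the product measures admits the LAN expansion
\begin{equation*}
\log \prod_{i=1}^{n} \frac{dP_{0,hn^{-1/2}}}{dP_0}(O_i) = h\sqrt{n}\,P_n s - \tfrac{1}{2}h^2 \|s\|_{P_0}^2 + o_{P_0}(1).
\end{equation*}
By Le Cam's first lemma, $P_0^n$ and $P_{0,hn^{-1/2}}^n$ are then mutually contiguous, so the $o_{P_0}(n^{-1/2})$ remainder in the linear expansion of $\widehat{\psi}_n$ is also $o_{P_{0,hn^{-1/2}}}(n^{-1/2})$.

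Next, the classical multivariate central limit theorem applied under $P_0^n$ yields joint convergence of $\bigl(\sqrt{n}(P_n-P_0)D_{0,P_0},\; h\sqrt{n}\,P_n s\bigr)$ to a centered bivariate normal with marginal variances $\sigma_0^2$ and $h^2\|s\|_{P_0}^2$ and cross-covariance $h\langle s, D_{0,P_0}\rangle_{P_0}$. Le Cam's third lemma then delivers, under sampling from $P_{0,hn^{-1/2}}^n$,
\begin{equation*}
\sqrt{n}\,(\widehat{\psi}_n - \psi_0) \xrightarrow{\;d\;} N\!\left(h\langle s, D_{0,P_0}\rangle_{P_0},\; \sigma_0^2\right).
\end{equation*}

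Finally, the pathwise differentiability of $\Psi$ at $P_0$ relative to $\mathcal{M}$, applied along the chosen parametric path at $t = hn^{-1/2}$, gives the deterministic expansion $\Psi(P_{0,hn^{-1/2}}) - \psi_0 = hn^{-1/2}\langle s, D_{\mathcal{M},P_0}\rangle_{P_0} + o(n^{-1/2})$. Subtracting this from $\sqrt{n}(\widehat{\psi}_n - \psi_0)$ shifts the mean of the limiting normal by $-h\langle s, D_{\mathcal{M},P_0}\rangle_{P_0}$, producing the claimed shift $b_0(h;s) = h\langle s, D_{0,P_0} - D_{\mathcal{M},P_0}\rangle_{P_0}$ and leaving the variance $\sigma_0^2$ unchanged. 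The only conceptually delicate step is the transfer of the asymptotic linear expansion across the contiguous alternative; this is handled cleanly by the LAN property and Le Cam's lemmas, so no genuinely new probabilistic obstacles arise beyond those already addressed in Theorem \ref{theorem::oracleRegularity}.
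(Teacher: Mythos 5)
Your proof is correct. The underlying mathematics is the same as the paper's, but the packaging differs in one respect worth noting. The paper decomposes $\widehat{\psi}_n - \Psi(P_{0,hn^{-1/2}})$ as $\{\widehat{\psi}_n - \Psi_0(P_{0,hn^{-1/2}})\} + \{\Psi_0(P_{0,hn^{-1/2}}) - \Psi(P_{0,hn^{-1/2}})\}$: the first piece tends to $N(0,\sigma_0^2)$ under the alternative because Theorem \ref{theorem::oracleEff} already established that $\widehat{\psi}_n$ is efficient, hence regular, for $\Psi_0$ over $\mathcal{M}_{np}$, and the second piece is a deterministic drift computed from the pathwise differentiability of both $\Psi_0$ (over $\mathcal{M}_{np}$) and $\Psi$ (over $\mathcal{M}$). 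You instead center at $\psi_0$ and run the LAN/contiguity/Le Cam-third-lemma computation explicitly on the asymptotically linear expansion, then subtract the drift of $\Psi$ alone. These are equivalent: the regularity the paper invokes as a black box is itself a consequence of exactly the third-lemma argument you spell out, and the cross-covariance term $h\langle s, D_{0,P_0}\rangle_{P_0}$ you obtain is precisely the drift of $\Psi_0$ along the path that the paper extracts from $d\Psi_0(P_0)$. Your version is more self-contained and makes the contiguity step visible; the paper's version is shorter because it reuses the already-proved regularity for $\Psi_0$ and, by differencing $\Psi_0$ against $\Psi$ directly, isolates the bias as the gap between the two derivative operators. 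Both correctly identify that the limit variance is unchanged and that the mean shift is $b_0(h;s) = h\langle s, D_{0,P_0} - D_{\mathcal{M},P_0}\rangle_{P_0}$.
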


Theorem~\ref{theorem::oracleRegularity} characterizes the local asymptotic bias of the ADML estimator outside the oracle model \(\mathcal{M}_0\). To interpret \(h\) as a local distance, suppose \(P_{0,hn^{-1/2}}\) has unit-norm score, that is, \(\|s\|_{P_0}=1\). Then \(h\) is asymptotically equal to the $\sqrt{n}$-scaled Hellinger distance between \(P_{0,hn^{-1/2}}\) and \(P_0\), since $n^{1/2}\|\sqrt{p_{0,hn^{-1/2}}}-\sqrt{p_0}\|_{\mu}
=
h+o(1)$
as \(n\to\infty\), where \(p_0 := dP_0/d\mu\) denotes the \(\mu\)-density of \(P_0\). Within the oracle model, the bias vanishes: \(b_0(h;s)=0\) for every score \(s \in T_{\mathcal{M}_0}(P_0)\), corresponding to local perturbations of \(P_0\) that remain in \(\mathcal{M}_0\) to first order. More generally, by the Cauchy--Schwarz inequality, among local perturbations with \(\|s\|_{P_0}=1\), the asymptotic bias \(b_0(h;s)\) is maximized when \(s\) is proportional to \(D_{0,P_0}-D_{\mathcal{M},P_0}\); the resulting maximal absolute bias is \(h\|D_{0,P_0}-D_{\mathcal{M},P_0}\|_{P_0}\).

The theorem reveals a local asymptotic bias--variance tradeoff for inference on \(\Psi\), governed by the complexity of \(\mathcal{M}_0\): the bias under local perturbations is determined by the degree of irregularity, while the variance is determined by the degree of superefficiency. In particular, ADML estimators can outperform regular estimators in mean squared error under local perturbations of \(P_0\) that remain near the oracle submodel \(\mathcal{M}_0\), that is, for small \(|h|\). For large \(|h|\), however, they become suboptimal under perturbations outside \(\mathcal{M}_0\), and their mean squared error diverges as \(|h| \to \infty\). To make this concrete, suppose \(\ell\) is the log-likelihood loss, so that \(S_{\mathcal{M}_0}(P_0) \subseteq T_{\mathcal{M}_0}(P_0)\). Let \(\sigma^2(\mathcal{M}) := \mathrm{var}_0\{D_{\mathcal{M},P_0}(O)\}\) denote the efficiency bound at \(P_0\) relative to \(\mathcal{M}\). Then \(\|D_{0,P_0} - D_{\mathcal{M},P_0}\|_{P_0}^2\) equals the absolute efficiency gain \(\Delta_0^2 := \sigma^2(\mathcal{M}) - \sigma_0^2\) achieved by working under \(\mathcal{M}_0\) rather than \(\mathcal{M}\). In this case, Theorem~\ref{theorem::irreg} shows that the asymptotic mean squared error of the ADML estimator under a least-favorable local perturbation in \(\mathcal{M}\) with unit-norm score is \(h^2\Delta_0^2 + \sigma_0^2\). This least-favorable mean squared error is strictly smaller than the local asymptotic minimax bound over \(\mathcal{M}\) when \(|h|<1\), and equals that bound when \(|h|=1\), since the latter is precisely \(\sigma^2(\mathcal{M}) = \Delta_0^2 + \sigma_0^2\) \citep{hajek1972local}. Thus, for perturbations near \(\mathcal{M}_0\), in the sense that \(|h|\leq 1\), the ADML estimator has mean squared error no larger than that of a prespecified efficient estimator for \(\mathcal{M}\), despite potentially having larger bias. A related observation was made by \cite{lumley2017robustness} for estimation of the average treatment effect under nearly true models.

An important implication is that a prespecified estimator for \(\Psi\) that is \(P_0\)-efficient relative to a known model \(\mathcal{M}' \subseteq \mathcal{M}_0\) generally has larger least-favorable local asymptotic bias than the ADML estimator based on \(\mathcal{M}_0 \subseteq \mathcal{M}\). Thus, if the working model \(\mathcal{M}_n\) is learned under the constraint \(\mathcal{M}' \subseteq \mathcal{M}_n\), then the resulting ADML estimator is asymptotically no more biased, under sampling from any distribution in \(\mathcal{M}\), than the prespecified estimator based on \(\mathcal{M}'\). In particular, under negative log-likelihood loss with \(P_0 \in \mathcal{M}'\), the least-favorable local asymptotic bias over \(\mathcal{M}\) of the prespecified estimator is \(\|D_{\mathcal{M},P_0} - D_{\mathcal{M}',P_0}\|_{P_0}\), which is no smaller than the corresponding least-favorable bias of the ADML estimator, \(\|D_{\mathcal{M},P_0} - D_{\mathcal{M}_0,P_0}\|_{P_0}\), whenever \(\mathcal{M}' \subseteq \mathcal{M}_0\). For example, the partially linear ADML estimators in Examples~\ref{example::CATE} and~\ref{example::catecal} always include the homogeneous partially linear model, which assumes a constant CATE, as a special case. Consequently, these ADML estimators yield locally valid inference over a broader class of distributions than the prespecified estimator based on the homogeneous partially linear model.
 
\section{Adaptive debiased machine learning for linear functionals}
\label{adml::linearfunc}
\subsection{Proposed estimator}
In this section, we introduce automatic ADML estimators for linear functionals of the outcome regression and revisit the examples from Section~\ref{section::ADMLlinear}. We then develop the asymptotic theory for this class by building on the results of Section~\ref{section::oracleParamInference}.

Recall that \(\mathcal{H}_n\) is a working model used to approximate an oracle model \(\mathcal{H}_0\). Let \(\mu_n \in \mathcal{H}_n\) be an estimator of \(\mu_0\), and let \(\alpha_n \in \mathcal{H}_n\) be an estimator of the Riesz representer \(\alpha_{0,\mathcal{H}_n}\) of the linear functional over \(\mathcal{H}_n\), as defined above Theorem~\ref{theorem::boundedlinearEIF}. The ADML estimator of \(\psi_0 = E_0[m(W,\mu_0)]\) is the automatic DML estimator \citep{chernozhukov2022automatic} for \(\Psi_n(P_0)\):
\begin{equation}
    \widehat{\psi}_n := \frac{1}{n}\sum_{i=1}^n m(W_i, \mu_n) + \frac{1}{n}\sum_{i=1}^n \alpha_n(A_i, W_i)\{Y_i - \mu_n(A_i, W_i)\}. \label{eqn::linearfuncestimator}
\end{equation}
The second term is a one-step bias correction of the plug-in estimator \(\frac{1}{n} \sum_{i=1}^n m(W_i, \mu_n)\) using the EIF of \(\Psi_n\), so that the first-order expansion in \ref{cond::debiased} holds under weak conditions \citep{bickel1993efficient}. Following the autoDML framework of \cite{chernozhukov2022automatic}, one can estimate \(\mu_n\) by least-squares regression, that is, by minimizing \(\sum_{i=1}^n \{Y_i-\theta(A_i,W_i)\}^2\) over \(\theta \in \mathcal{H}_n\), and estimate \(\alpha_n\) analogously by Riesz regression, that is, by minimizing \(\sum_{i=1}^n \bigl[\alpha(A_i,W_i)^2 - 2\,m(W_i,\alpha)\bigr]\) over \(\alpha \in \mathcal{H}_n\).

Below, we revisit the examples in Section~\ref{section::ADMLlinear} and show that the estimators are of the above form.

\setcounter{example}{1}
\begin{example}[continued]
Recall the partially linear ADML estimator \(\widehat{\psi}_n = \frac{1}{n} \sum_{i=1}^n \tau_n(W_i)\) for the ATE, where \(\tau_n\) is the R-learner with working CATE model \(\mathcal{T}_n\) \citep{nie2021quasi}. This estimator is a special case of \eqref{eqn::linearfuncestimator}, with \(\mu_n(a,w) := m_n(w) + (a-\pi_n(w))\tau_n(w)\) a Robinson-parameterized regression estimator \citep{robinson1988root}, \(\alpha_n(a,w) := \gamma_{0,\mathcal{T}_n}(w)\{a-\pi_n(w)\}\), and \(m:(w,\mu) \mapsto \mu(1,w)-\mu(0,w)\). For these choices, we first observe that \(\widehat{\psi}_n = \frac{1}{n} \sum_{i=1}^n m(W_i,\mu_n) = \frac{1}{n} \sum_{i=1}^n \{\mu_n(1,W_i)-\mu_n(0,W_i)\}\), so \(\widehat{\psi}_n\) is the plug-in estimator. Moreover, the correction term in \eqref{eqn::linearfuncestimator} vanishes exactly by the first-order optimality conditions for the empirical risk minimizer \(\tau_n\): for each \(\gamma \in \mathcal{T}_n\), \(\frac{1}{n} \sum_{i=1}^n \gamma(W_i)\{A_i-\pi_n(W_i)\}\{Y_i-\mu_n(A_i,W_i)\}=0\). In particular, setting \(\gamma=\gamma_{0,\mathcal{T}_n}\) yields \(\frac{1}{n} \sum_{i=1}^n \tau_n(W_i) = \frac{1}{n} \sum_{i=1}^n m(W_i,\mu_n) + \frac{1}{n} \sum_{i=1}^n \alpha_n(A_i,W_i)\{Y_i-\mu_n(A_i,W_i)\}\).
\end{example}

\begin{example}[continued]
Recall that the calibrated plug-in regression estimator \(\widehat{\psi}_n := \frac{1}{n} \sum_{i=1}^n m(W_i,\mu_n^*)\) is an ADML estimator with \(\mathcal{H}_n := \{f \circ \mu_n^* : f\}\), where \(\mu_n^* = f_n \circ \mu_n\) denotes the isotonic-calibrated regression estimator. We show that \(\widehat{\psi}_n\) admits the representation \eqref{eqn::linearfuncestimator} with $\mu_n := \mu_n^*$ and \(\alpha_n = \alpha_{0,\mathcal{H}_n}\), the corresponding true representer. The key observation is that the KKT conditions for the isotonic regression solution imply the orthogonality relations \citep{van2024automatic}
\begin{equation*}
\frac{1}{n} \sum_{i=1}^n f(\mu_n^*(A_i,W_i))\{Y_i - \mu_n^*(A_i,W_i)\} = 0, \quad \forall f:\mathbb{R}\to\mathbb{R}.
\end{equation*}
Hence, by the definition of \(\mathcal{H}_n\), this orthogonality condition implies that the debiasing term vanishes: $\frac{1}{n}\sum_{i=1}^n \alpha_{0,\mathcal{H}_n}(A_i,W_i)\{Y_i - \mu_n^*(A_i,W_i)\} = 0.$
Thus, the plug-in estimator \(\widehat{\psi}_n\) satisfies \eqref{eqn::linearfuncestimator} for our choice of nuisance estimators. In the special case of the counterfactual mean \(E_0[\mu(1,W)]\), the superefficient TMLE of \citet{ATEsupereff} instead enforces $\frac{1}{n}\sum_{i=1}^n A_i \widehat{\alpha}_n(W_i)\{Y_i - \mu_n^*(A_i,W_i)\} = 0 $ for an estimator \(\widehat{\alpha}_n(W)\) of $1/\pi_{0,\mu_0}(W)$. Their approach is asymptotically equivalent to a modified version of \(\widehat{\psi}_n\) that calibrates only on the subset of observations with \(A_i=1\). A key advantage of our approach is that it is automatic: isotonic calibration simultaneously debiases all linear functionals without requiring explicit adjustment based on the corresponding Riesz representers.
 
\end{example}

\begin{example}[continued]
Recall the calibrated plug-in R-learner \(\widehat{\psi}_n := \frac{1}{n}\sum_{i=1}^n \tau_n^*(W_i)\), where \(\tau_n^* = f_n \circ \tau_n\) is obtained via isotonic calibration with the R-learner loss \citep{van2023causal, nie2021quasi}. Recall also that \(\mathcal{H}_n = \mathcal{H}_{\tau_n^*}\) and \(\mathcal{H}_0 = \mathcal{H}_{\tau_0}\), where \(\mathcal{H}_\tau := \{(a,w) \mapsto m(w) + a\, f(\tau(w)) : m \in L^2(P_{0,W}),\ f:\mathbb{R} \to \mathbb{R}\}\). We show that \(\widehat{\psi}_n\) admits the representation \eqref{eqn::linearfuncestimator} with \(\mu_n(a,w) = m_n(w) + (a - \pi_n(w))\tau_n^*(w)\), \(\alpha_n(a,w) := \{a - \pi_n(w)\}/E_0[(A - \pi_0(W))^2 \mid \tau_n^*(W) = \tau_n^*(w)]\), and \(m(o,\mu) := \mu(1,w) - \mu(0,w)\). As in Example~\ref{example::map}, the calibrated R-learner \(\tau_n^*\) satisfies the orthogonality conditions
\[
\frac{1}{n} \sum_{i=1}^n f(\tau_n^*(W_i))\{A_i - \pi_n(W_i)\}\{Y_i - \mu_n(A_i,W_i)\} = 0, \quad \forall f:\mathbb{R} \to \mathbb{R}.
\]
Consequently, taking \(f \circ \tau_n^*\) to be \(\gamma_{0,\mathcal{T}_n} := \{E_0[(A - \pi_0(W))^2 \mid \tau_n^*(W) = \tau_n^*(\cdot)]\}^{-1}\), the debiasing term vanishes for our choice of \(\alpha_n\), and the plug-in estimator \(\widehat{\psi}_n = \frac{1}{n}\sum_{i=1}^n \tau_n^*(W_i)\) satisfies \eqref{eqn::linearfuncestimator} with the claimed choices.
\end{example}

\begin{example}[Augmented minimax linear estimation]
The augmented minimax linear estimator of \cite{augmentedMinimax} can be viewed as an ADML estimator corresponding to the affine spaces $\mathcal{H}_n := \{\mu_n + f: f \in \mathcal{F}_n\}$ and $\mathcal{H}_0 := \{\mu_0 + f: f \in \mathcal{F}_0\}$, where $\mathcal{F}_n \subseteq \mathcal{F}_0$ is a working linear model for the regression error $\mu_n - \mu_0 \in \mathcal{F}_0$. In this case, the Riesz representers $\alpha_{0,\mathcal{H}_n}$ and $\alpha_{0,\mathcal{H}_0}$ may be defined relative to the tangent spaces $\mathcal{F}_n$ and $\mathcal{F}_0$. One instance corresponds to \eqref{eqn::linearfuncestimator}, with $\alpha_n := \argmin_{\alpha \in \mathcal{F}_n} \sum_{i=1}^n \{\alpha(A_i, W_i)^2 - 2 m(W_i, \alpha)\}$ given by a sieve estimator.
\end{example}

\subsection{Large-sample theory: asymptotic linearity and superefficiency}

We now establish the asymptotic linearity and superefficiency of ADML estimators for linear functionals of the outcome regression. The main result relies on the following conditions and may be viewed as a special case of Theorems~\ref{theorem::oracleEff} and \ref{theorem::oracleRegularity}. In what follows, let \(\operatorname{Proj}_n\) and \(\operatorname{Proj}_0\) denote the \(L^2(P_0)\)-orthogonal projections from \(\mathcal{H}\) onto \(\mathcal{H}_n\) and \(\mathcal{H}_0\), respectively.

\begin{enumerate}[label=(B\arabic*), ref=B\arabic*,resume=condlinear]
     \item \textit{Boundedness:} $\alpha_n$, $\mu_n$, $\mu_0$, $\alpha_{0,\mathcal{H}_{0}}$ are $P_0$-uniformly bounded almost surely. \label{cond::bounded}
    \item \textit{Empirical process control:} \label{cond::linearsamplesplit} Either (i) $\alpha_n$ and $\mu_n$ are trained on a dataset independent from  $\{O_i\}_{i=1}^n$; or (ii) $ (P_n - P_0) (\widehat{D}_n - D_{0, P_0}) = o_p(n^{-1/2})$, where $\widehat{D}_n(o) := \alpha_n(a,w)\{y - \mu_n(a,w)\} + m(w, \mu_n) -  P_n m(\cdot, \mu_n)$ 
    \item \textit{Rates for working nuisances:}  \label{cond::linearnuisancerate}
    \begin{enumerate}
        \item[(i)] $\|\alpha_n - \alpha_{0, \mathcal{H}_n}\|_{P_0} + \|\mu_n - \mu_{0, \mathcal{H}_n}\|_{P_0} = o_p(1)$;
        \item[(ii)] $\|\alpha_n - \alpha_{0, \mathcal{H}_n}\|_{P_0}\|\mu_n - \mu_{0, \mathcal{H}_n}\|_{P_0} = o_p(n^{-1/2})$.
    \end{enumerate} 
    \item \textit{Negligible model approximation error:} \label{cond::oraclebiaslinear}
    \begin{enumerate}
        \item[(i)] $\|\operatorname{Proj}_n (\alpha_{0,\mathcal{H}_{n,0}})  - \alpha_{0,\mathcal{H}_{n,0}}\|_{P_0} + \|\operatorname{Proj}_0 (\alpha_{0,\mathcal{H}_{n,0}})  - \alpha_{0,\mathcal{H}_{n,0}}\|_{P_0} +\|  \operatorname{Proj}_n(\mu_0) - \mu_0 \|_{P_0}  = o_p(1)$;
        \item[(ii)] $\|\operatorname{Proj}_n (\alpha_{0,\mathcal{H}_{n,0}})  - \alpha_{0,\mathcal{H}_{n,0}}\|_{P_0} \|  \operatorname{Proj}_n(\mu_0) - \mu_0 \|_{P_0} = o_p(n^{-1/2})$.  
    \end{enumerate} 
\end{enumerate}
Conditions~\ref{cond::bounded}--\ref{cond::linearnuisancerate} are standard in semiparametric statistics for debiased estimation of the working parameter \(\Psi_n(P_0)\); see, for example, \cite{bickel1993efficient, robins1995analysis, vanderlaanunified, vanderLaanRose2011, DoubleML, chernozhukov2022automatic}. In particular, \ref{cond::linearsamplesplit} ensures that \ref{cond::Donsker} holds by requiring either (i) sample splitting, so that the nuisance estimators are computed using data independent of that used to form the one-step estimator, or (ii) that the relevant empirical process remainder is asymptotically negligible. To use the full sample, one can instead apply cross-fitting, in which nuisance and parameter estimates are computed repeatedly across multiple splits and then averaged \citep{van2011cross, DoubleML}. Condition~\ref{cond::linearnuisancerate} requires that the nuisance estimators \(\mu_n\) and \(\alpha_n\) converge to their respective working-model targets in \(\mathcal{H}_n\), thereby ensuring that the ADML estimator \(\widehat{\psi}_n\) is debiased for the working parameter \(\Psi_n\) and that \ref{cond::debiased} holds. By contrast, \ref{cond::oraclebiaslinear} is specific to adaptive DML. It ensures that the requirement $\|D_{n,P_0} - D_{0,P_0}\|_{P_0} = o_p(1)$ in \ref{cond::consDn} holds and that the model approximation error $\Psi_n(P_0) - \Psi_0(P_0)$ of Theorem~\ref{theorem::oraclebiaslinear} is negligible, thereby verifying \ref{cond::doubleRemrootnFinal}.

\begin{theorem}[Superefficiency for linear functionals]
\label{theorem::ALlinear}
 Assume \ref{cond::boundedlinearfun}-\ref{cond::oraclebiaslinear}.  Then, the ADML estimator $\widehat{\psi}_n$ is $P_0$-asymptotically linear, regular, and efficient for $\Psi_0$, with
\begin{equation*}
\widehat{\psi}_n - \Psi_0(P_0) = \frac{1}{n}\sum_{i=1}^n \left\{ \alpha_{0, \mathcal{H}_0}(A_i, W_i) \{Y_i - \mu_0(A_i, W_i)\} + m(W_i, \mu_0) - \Psi_0(P_0) \right\} + o_p(n^{-1/2}).
\end{equation*}
If, in addition, \(\mu_0 \in \mathcal{H}_0\), then the ADML estimator \(\widehat{\psi}_n\) is asymptotically linear for \(\Psi(P_0)\) with the same influence function. It is regular for $\Psi$ over the oracle submodel \(\mathcal{M}_0\), and if $E_0[\{Y - \mu_0(A,W)\}^2 \mid A, W]$ is almost surely constant, then it is semiparametrically efficient with respect to \(\mathcal{M}_0\).
\end{theorem}

Theorem \ref{theorem::ALlinear} implies that $\sqrt{n}\,( \widehat{\psi}_n - \psi_0)/\sigma_0 \rightarrow_d N(0, 1)$, where $\sigma^2_0$ equals the efficiency bound $\operatorname{var}_{0}\left\{D_{0,P_0}(O)\right\}$. As long as the nuisance estimation and model approximation errors decay sufficiently fast, this theorem establishes that the ADML estimator $\widehat{\psi}_n$ is $P_0$-superefficient for the parameter $\Psi(P_0)$, with a limiting variance that adapts to the complexity of the outcome regression $\mu_0$ through $\mathcal{H}_0$. The precise behavior of the estimator under sampling from local perturbations of $P_0$ follows from the general results in Section~\ref{section::oracleParamInference::original}. The ADML estimator exhibits double robustness \citep{bang2005doubly} in two senses: with respect to estimation error for the working nuisances under \(\mathcal{H}_n\), and with respect to model approximation error for the true nuisances under \(\mathcal{H}_0\). Hence, it allows slow estimation or approximation error for one nuisance function, provided the corresponding error for the other nuisance vanishes sufficiently quickly.

We now discuss the implications of Theorem~\ref{theorem::ALlinear} for our examples. Verification of the theorem's conditions in these settings is deferred to Appendix~\ref{appendix::conditionsexamples}. For Examples~\ref{example::map} and \ref{example::catecal}, the main condition to verify is Condition~\ref{cond::oraclebiaslinear}, which we control using Lemma~\ref{lemma::lipschitzdependent}.

\setcounter{example}{1}

\begin{example}[continued]

Specialized theory for this estimator under low-level conditions is given in Theorem~\ref{example::theorem::RlearnerLimitDistORACLE} of Appendix~\ref{section::DataAdaptexampleATEPartially}. To illustrate the advantages of ADML, we now consider the prespecified semiparametric estimator of the ATE under the homogeneous partially linear model \citep{robinson1988root, imbensOverlapEstimand2006, li2019overlapWeights, d2021overlap}, corresponding to the intercept model \(\mathcal{T}_{\mathrm{const}} := \{w \mapsto c : c \in \mathbb{R}\}\) for \(\tau_0\). This estimator is regular only under local perturbations that preserve a constant CATE; outside the homogeneous partially linear model, it is generally irregular and asymptotically biased. In contrast, the partially linear ADML estimator is regular over the submodel of distributions \(P\) such that \(\tau_P \in \mathcal{T}_0\), and is therefore regular over a richer class of distributions, provided \(\mathcal{T}_n\) and \(\mathcal{T}_0\) contain the constant functions \(\mathcal{T}_{\mathrm{const}}\). More generally, one can use a post hoc test for treatment effect homogeneity to choose between ADML, equivalently \(\mathcal{T}_n\), and the prespecified estimator, equivalently \(\mathcal{T}_{\mathrm{const}}\), \citep{crump2008nonparametric, dukes2024nonparametric}. Under suitable conditions, this hybrid ADML procedure is asymptotically equivalent to the prespecified estimator when treatment effects are homogeneous, while allowing strict improvement when treatment effects are heterogeneous. \qed
\end{example}

\begin{example}[continued]
Theorem~\ref{theorem::ALlinear} establishes nonparametric superefficiency of the calibrated plug-in regression estimator \(\widehat{\psi}_n = \frac{1}{n}\sum_{i=1}^n m(W_i, \mu_n^*)\), whose limiting variance attains the efficiency bound for \(\Psi\) under the model that uses \(\mu_0(A,W)\) as an oracle dimension reduction of \((A,W)\). It further shows that this estimator is regular for \(\Psi\) over the submodel of all \(P \in \mathcal{M}\) with \(P \in \mathcal{H}_0\), that is, distributions whose outcome regression is a transformation of \(\mu_0\). In the special case of a counterfactual mean, this recovers the superefficiency result of \citet{ATEsupereff} for a TMLE-type variant of the estimator. More generally, our theory also characterizes the estimator's local asymptotic behavior and regularity properties, which were not studied by \citet{ATEsupereff}. \qed
\end{example}

\begin{example}[continued]
Theorem~\ref{theorem::ALlinear} establishes nonparametric superefficiency of the calibrated plug-in R-learner \(\widehat{\psi}_n = \frac{1}{n}\sum_{i=1}^n \tau_n^*(W_i)\). As discussed in Section~\ref{section::EIFlinear}, overlap affects the variance only through the aggregated inverse-weighting factor \(\{E_0[(A-\pi_0(W))^2 \mid \tau_0(W)=\tau_0(w)]\}^{-1}\), and hence its asymptotic variance is less sensitive to limited treatment overlap than that of regular nonparametric estimators. The estimator is regular over all distributions for which the CATE is a transformation of \(\tau_0\).

An important special case is when the treatment effect is truly homogeneous under \(P_0\), that is, when the CATE \(\tau_0(W)\) is almost surely constant. In this setting, \(\mathcal{T}_0 \subseteq \{f \circ \tau_0 : f\}\) reduces to the class of constant functions \(\mathcal{T}_{\mathrm{const}}\), \(\mathcal{H}_0\) coincides with the homogeneous PLR model \citep{robinson1988root}, and \(\Psi_0\) reduces to the overlap-weighted ATE \citep{imbensOverlapEstimand2006, li2019overlapWeights}. As a result, the ADML estimator is locally asymptotically equivalent to the prespecified overlap-weighted effect estimators of \cite{robinson1988root,imbensOverlapEstimand2006, li2019overlapWeights} derived under this model. When treatment effect homogeneity fails, however, the ADML estimator continues to yield valid nonparametric inference while adapting to the degree of heterogeneity and retaining superefficiency. Thus, under the stated conditions, the prespecified estimator based on the homogeneous effect model is dominated by the ADML estimator in the following sense: when the homogeneous model is correct, the two estimators are asymptotically equivalent, whereas when it is misspecified, the prespecified estimator is generally biased while the ADML estimator remains nonparametrically valid.\qed
\end{example}

\begin{example}[continued]
Augmented minimax linear estimation improves on standard DML estimators by having error driven by the projected residual error $\|(\mu_n - \mu_0) - \Pi_{\mathcal{F}_n}(\mu_n - \mu_0)\|_{P_0}$. In particular, a variant of Theorem~\ref{theorem::oraclebiaslinear} for affine spaces shows that
\[
|\Psi_n(P_0) - \Psi_0(P_0)| \leq
\|\alpha_{0,\mathcal{F}_0} - \Pi_{\mathcal{F}_n}\alpha_{0,\mathcal{F}_0}\|_{P_0}
\,
\|(\mu_n - \mu_0) - \Pi_{\mathcal{F}_n}(\mu_n - \mu_0)\|_{P_0},
\]
where $\alpha_{0,\mathcal{F}_0}$ is the Riesz representer over $\mathcal{F}_0$ and $\Pi_{\mathcal{F}_n} : \mathcal{F} \to \mathcal{F}_n$ is the orthogonal projection operator.
\end{example}

\section{Numerical experiments}

\subsection{Data-generating distributions and nuisance estimation}

We conducted a simulation study to evaluate the performance of the ADML estimators for the ATE based on the plug-in regression estimator defined in \eqref{eqn::pluginadml} and the PLR estimator in Example~\ref{example::partially::intro}. Both ADML estimators use the relaxed highly adaptive Lasso (HAL) estimator \citep{vanderlaanGenerlaTMLEFIRST,HAL2016,bibautHAL} for estimation of the outcome regression and the CATE. HAL is based on a sectional variation norm penalty, which extends first-order total variation denoising to nonparametric settings \citep{geerLocalAdapt,fang2021multivariate,marswithLasso}. Using a tensor-product basis of piecewise linear hinge functions of the form \(x \mapsto (x-u)1(x \ge u)\) with knot point \(u \in \mathbb{R}\), it performs variable selection and adapts to sparse functions \citep{marswithLasso}. We implemented HAL using the R package \texttt{hal9001} \citep{hal2}, with the sectional variation norm tuning parameter selected by cross-validation. Code for both ADML estimators is available in the R package \texttt{causalHAL}. As nonadaptive benchmarks, we included a semiparametric ATE estimator based on the partially linear intercept model \citep{robinson1988root,imbensOverlapEstimand2006} and the nonparametric efficient augmented inverse probability-weighted (AIPW) estimator \citep{robinsCausal,robins1995analysis}.

For the simulation studies, we considered sample sizes $n \in \{500, 1000, 2000, 3000, 4000, 5000\}$ and independent covariates $W_1, W_2, W_3, W_4$ each drawn from the uniform distribution on $(-1,+1)$. Given $W=w:=(w_1,w_2,w_3,w_4)$, the treatment assignment $A$ was generated from a Bernoulli distribution with conditional mean $\pi_0(w)$ defined by $\text{logit}\{\pi_0(w)\}=\gamma \sum_{j=1}^4 \{w_j + \sin(4w_j)\}$, where $\gamma \in \{0.5, 1, 2\}$ controls the degree of treatment overlap. Given $(A,W)=(a,w)$, the outcome variable was generated from a normal distribution with mean $\mu_0(0,w)+a\tau_0(w)$ and variance $\sigma^2=0.5$, where $\mu_0(0,w)$ is the control conditional mean and $\tau_0(w) = 1 + w_1 + |w_2| + \cos(4w_3) + w_4$ is the CATE. We note that $\tau_0$ is approximately sparse under the HAL basis, implying potential superefficiency of the HAL-ADML estimators. Two choices of the control conditional mean were considered: the piecewise linear form  $\mu_0(0,w) =  w_1 + |w_2|  + w_3 + |w_4|$ and the nonlinear form $\mu_0(0,w) = \cos(4w_2) + \sum_{j=1}^4\sin(4w_j)$.  

To ensure comparability, we employed identical nuisance estimators for $\pi_0$, $\mu_0$ and $m_0$ across all four estimators. The outcome regression $\mu_0$ was estimated using the relaxed HAL least-squares estimator, with separate additive models and regularization parameters for $\mu_0(0,\cdot)$ and $\tau_0$. The number of prespecified basis functions included in the Lasso regression for $\mu_0$ were, respectively, $k=80, 400, 608, 608, 800,800$ for sample sizes $n=500, 1000, 2000, 3000, 4000, 5000$. To estimate the propensity score $\pi_0$, we used least-squares regression with 10-fold cross-validation employed to select among three candidate algorithms: generalized additive models implemented in R by the \texttt{mgcv} package \citep{hastie1987generalized, wood2001mgcv}, multivariate adaptive regression splines implemented by the \texttt{earth} package \citep{friedman1991multivariate, milborrow2019package}, and random forests implemented by the \texttt{ranger} package \citep{breiman2001random,wright2015ranger}. To ensure that the estimated propensity scores are bounded away from $0$ and $1$, we truncated estimates to fall within the range $(c_n,1-c_n)$, where $c_n$ is a data-adaptive cutoff selected by minimizing a loss function for the inverse propensity score \citep{chernozhukov2022automatic}. Finally, we estimated $m_0$ using the plug-in estimator $\pi_n \mu_n(1,\cdot) + (1-\pi_n) \mu_n(0,\cdot)$, where $\mu_n$ and $\pi_n$ are the estimators of $\mu_0$ and $\pi_0$ described above. 

\subsection{Experimental findings: demonstrating superefficiency}

To summarize overlap across settings, we report the overlap constant $c_0 := \inf_w \min\{\pi_0(w),\,1-\pi_0(w)\},$ which varies with \(\gamma\). For each estimator, we report Monte Carlo estimates of bias, variance, mean squared error, and confidence interval coverage. Figure~\ref{fig:simsEasyMain} presents results for the setting with a linear control conditional mean; results for the remaining settings are given in Appendix~\ref{appendix::figures} and are qualitatively similar.

 \begin{figure}[htb]
     \centering
      \begin{subfigure}[b]{0.48\linewidth} 
     \includegraphics[width=0.5\linewidth]{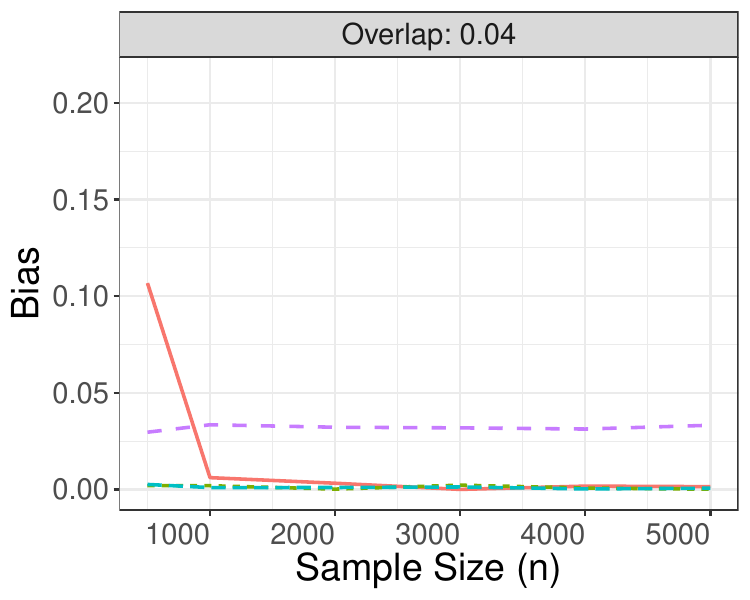}\includegraphics[width=0.5\linewidth]{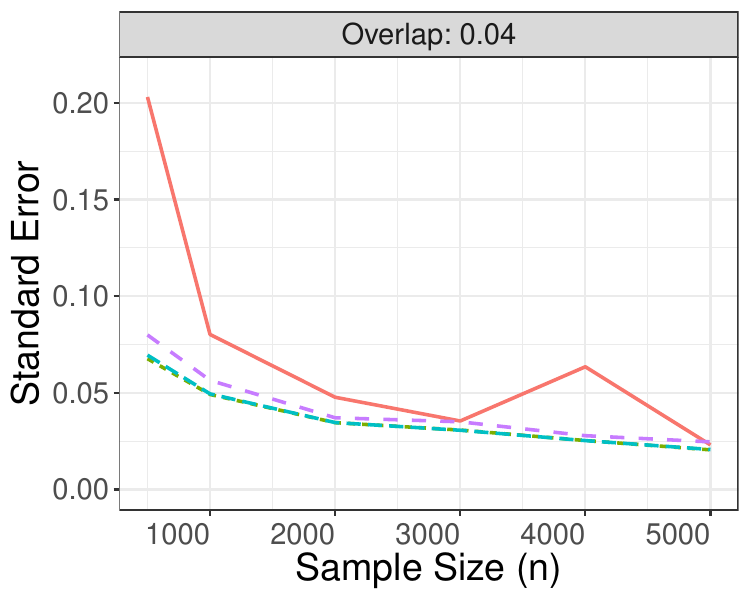}
     \includegraphics[width=0.5\linewidth]{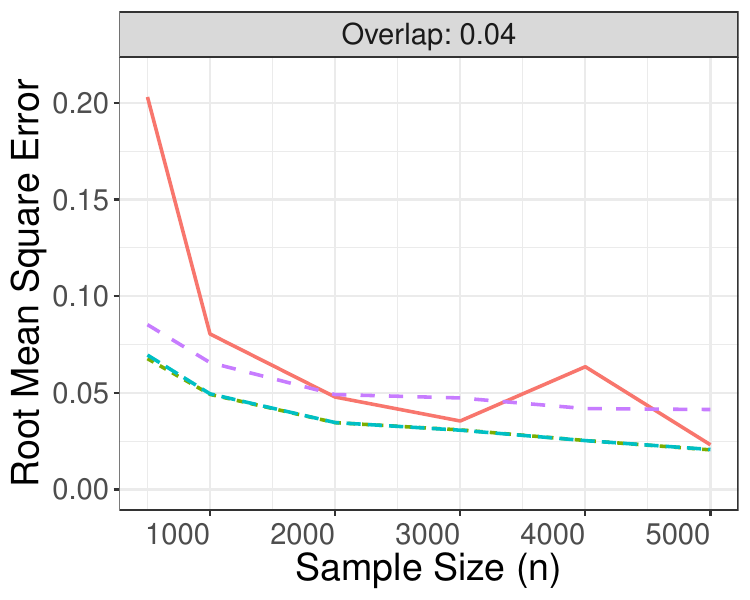}\includegraphics[width=0.5\linewidth]{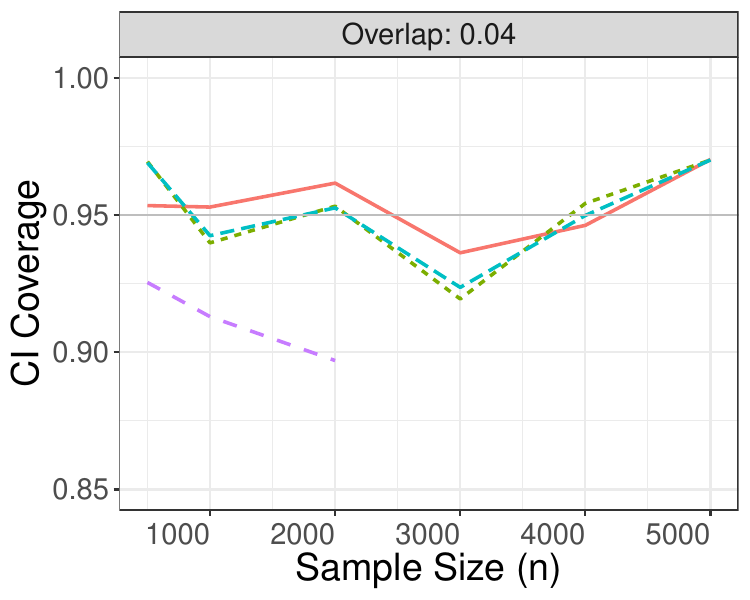}
     \subcaption{Linearity and moderate overlap  ($c_0 \approx 0.04$)}
     \end{subfigure} \hfill \begin{subfigure}[b]{0.48\linewidth} 
     \includegraphics[width=0.5\linewidth]{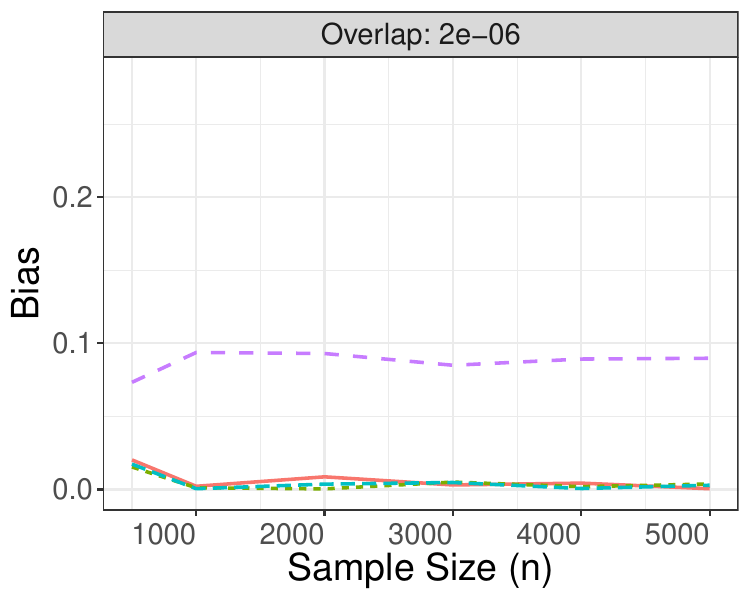}\includegraphics[width=0.5\linewidth]{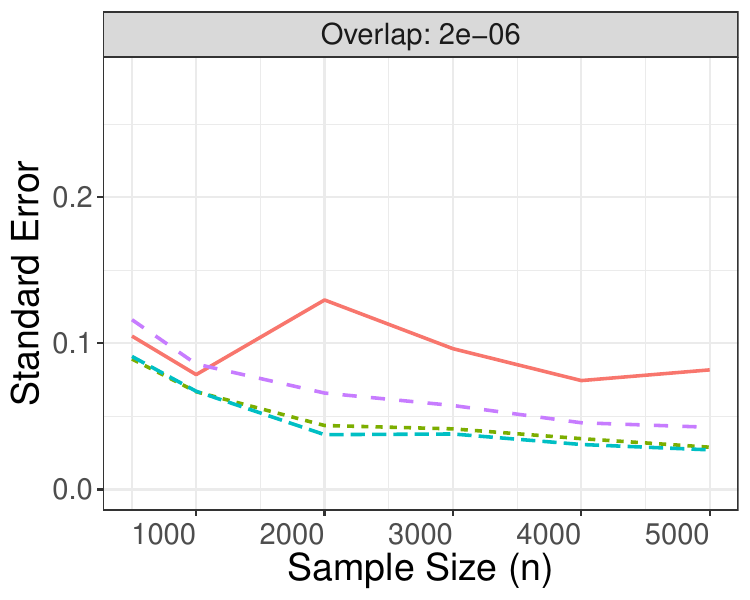}
     \includegraphics[width=0.5\linewidth]{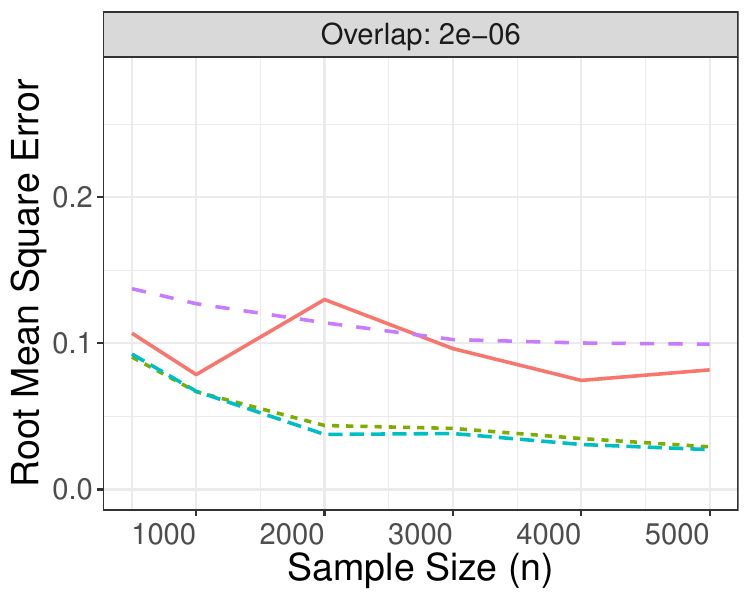}\includegraphics[width=0.5\linewidth]{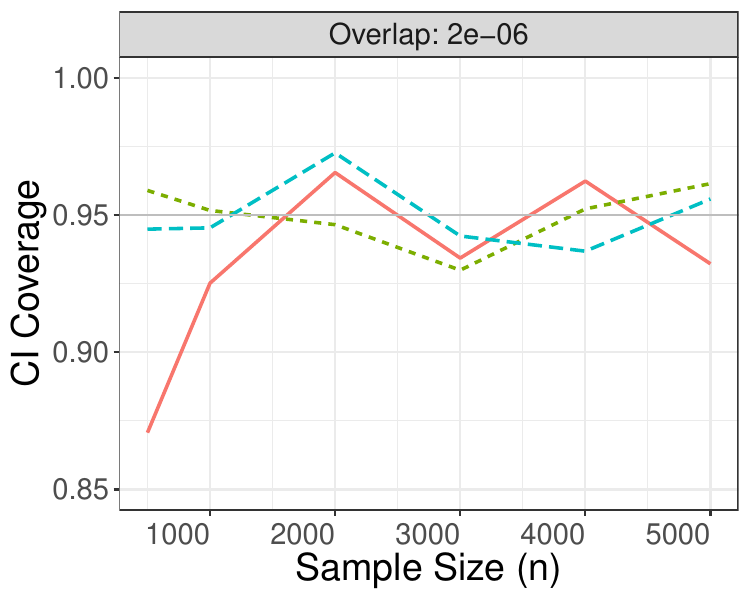}
      \subcaption{Linearity and limited overlap ($c_0 \approx 10^{-6}$)}
     \end{subfigure}
     
      \includegraphics[width=0.8\linewidth]{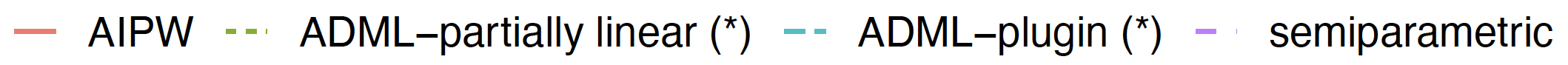}
     \caption{Comparison of empirical bias, standard error and root mean squared error of estimator, and coverage of nominal 95\% confidence interval for partially linear and plug-in HAL-ADML estimators, prespecified semiparametric estimator (assuming constant CATE), and nonparametric AIPW estimator, under sampling from a fixed distribution satisfying linearity and with varying degrees of treatment overlap. Coverage probabilities for intervals based on the prespecified semiparametric estimator were consistently poor, exceeding the y-axis range.}
     \label{fig:simsEasyMain}
 \end{figure}

 \begin{figure}[htb]
     \centering
     \begin{subfigure}[b]{0.48\linewidth} 
     \includegraphics[width=0.5\linewidth]{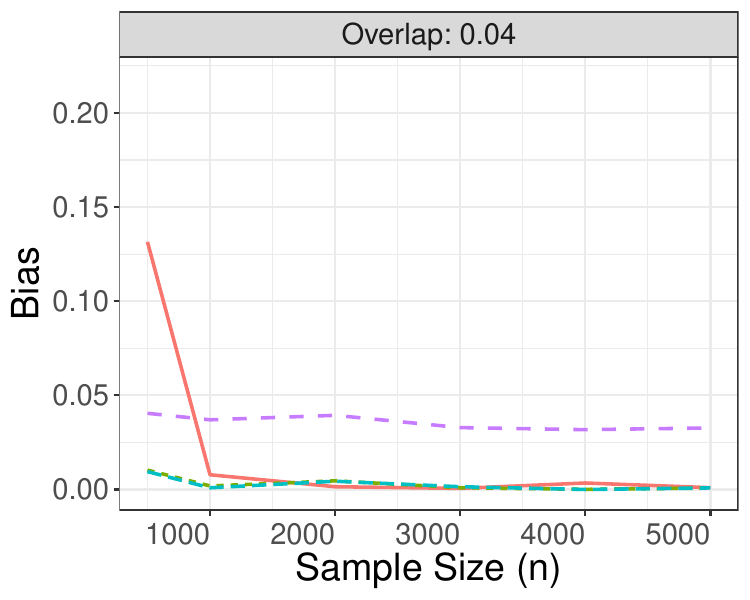}\includegraphics[width=0.5\linewidth]{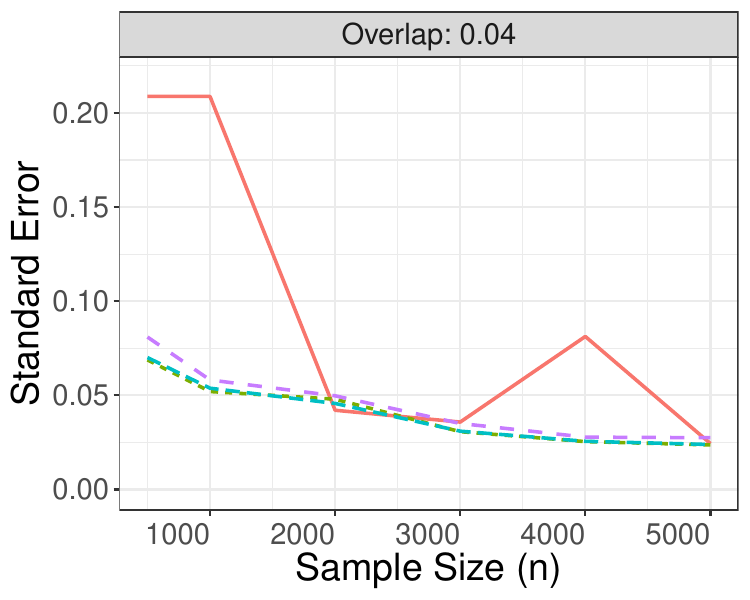}
     \includegraphics[width=0.5\linewidth]{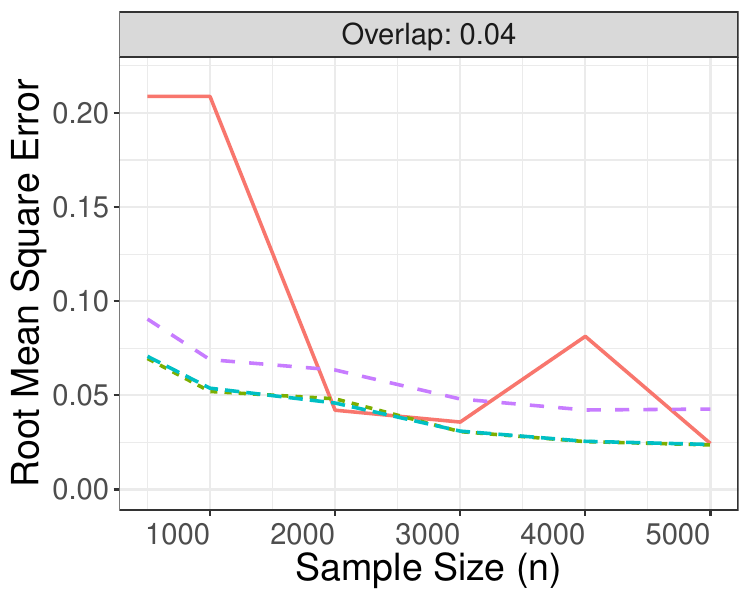}\includegraphics[width=0.5\linewidth]{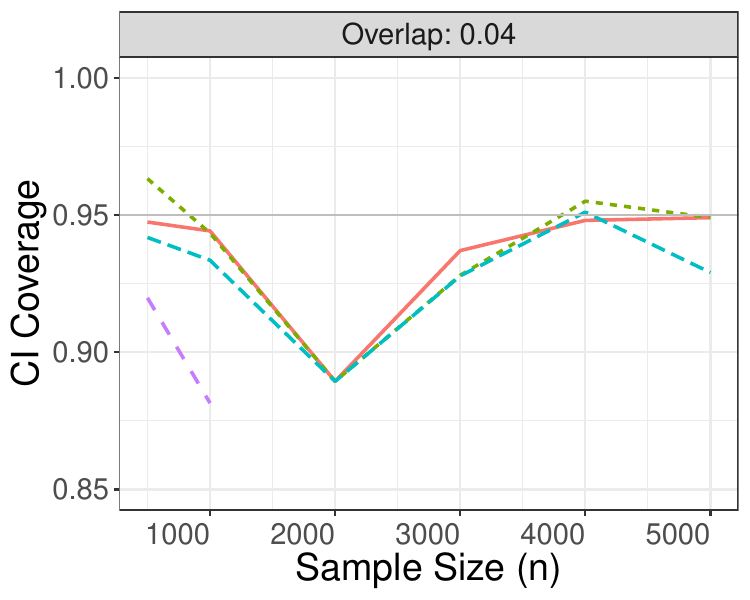}
     \subcaption{Nonlinearity and moderate overlap  ($c_0 \approx 0.04$)}
     \end{subfigure} \hfill \begin{subfigure}[b]{0.48\linewidth} 
     \includegraphics[width=0.5\linewidth]{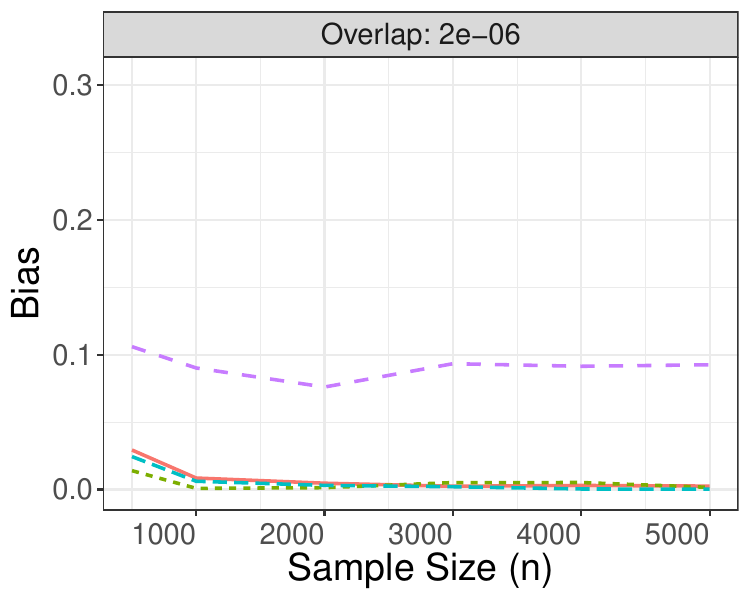}\includegraphics[width=0.5\linewidth]{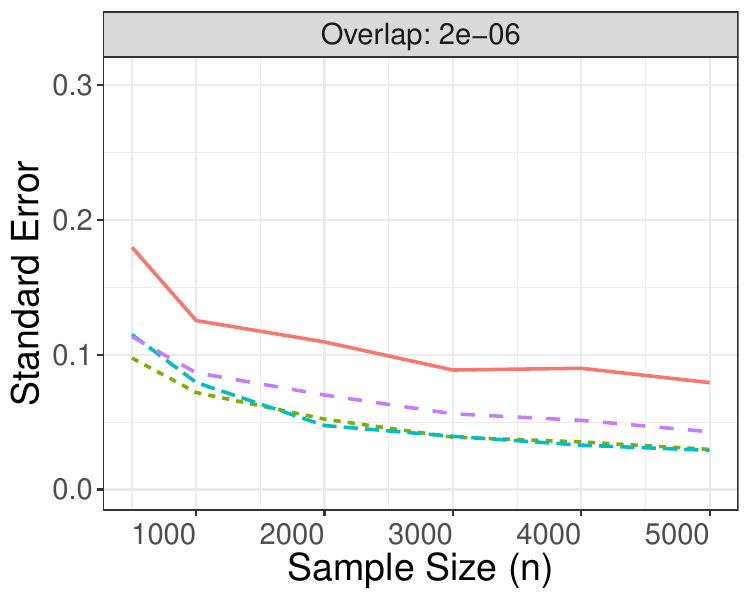}
     \includegraphics[width=0.5\linewidth]{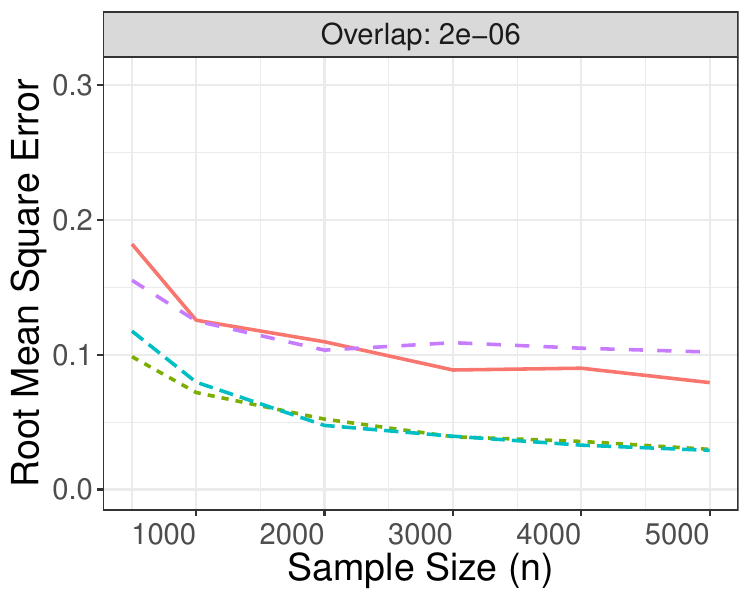}\includegraphics[width=0.5\linewidth]{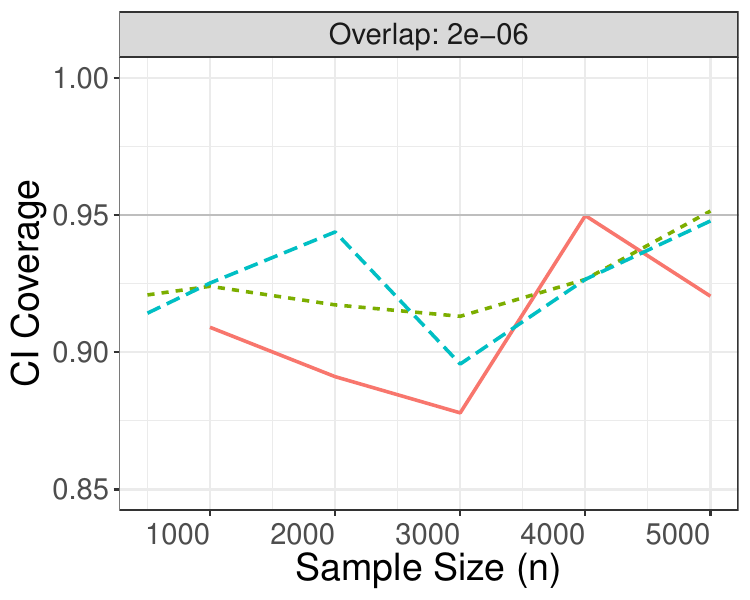}
      \subcaption{Nonlinearity and limited overlap ($c_0 \approx 10^{-6}$)}
     \end{subfigure}
      \includegraphics[width=0.8\linewidth]{plots/Legend.png}
     \caption{Comparison of empirical bias, standard error and root mean squared error of estimator, and coverage of nominal 95\% confidence interval for partially linear and plug-in HAL-ADML estimators, prespecified semiparametric estimator (assuming constant CATE), and nonparametric AIPW estimator, under sampling from a fixed distribution \emph{not} satisfying linearity and with varying degrees of treatment overlap. Coverage probabilities for intervals based on the prespecified semiparametric estimator were consistently poor, exceeding the y-axis range.}
     \label{fig:simsEasyMain}
 \end{figure}

The simulations are broadly consistent with the theory. In all settings considered, both ADML estimators improve upon the prespecified semiparametric estimator and the nonparametric AIPW estimator in terms of bias, variance, mean squared error, and coverage. In the sparse linear setting, the plug-in ADML estimator is typically more efficient than the partially linear ADML estimator, whereas in the nonlinear setting their variances are similar. These findings reflect the theoretical tradeoff: the plug-in estimator can be more (super)efficient in favorable settings, but it is regular over a smaller submodel than the partially linear estimator, which may help explain its greater finite-sample bias.

\subsection{Experimental findings: demonstrating irregularity}

In this experiment, we study the performance of the estimators under a least-favorable local perturbation \(P_{0,h n^{-1/2}}\) of \(P_0\) for the ATE in the nonparametric model. To construct this perturbation, we modify the outcome regression through both the control mean \(\mu_0(0,\cdot)\) and the CATE \(\tau_0\). Specifically, for a given sample size \(n\), we define $\mu_{n,0}(0,w) := \mu_0(0,w) - n^{-1/2}\{1-\pi_0(w)\}^{-1}$
and $\tau_{n,0}(w) := 1 + n^{-1/2}\bigl[\pi_0(w)\{1-\pi_0(w)\}\bigr]^{-1},$
while leaving all other components of the data-generating distribution unchanged.

Under this perturbation, the oracle submodel \(\mathcal{M}_0\) for the unperturbed distribution \(P_0\) is the partially linear intercept model, and the prespecified semiparametric estimator based on the intercept CATE model remains correctly specified to first order. Thus, Corollary~\ref{cor::CATEinf} implies that the partially linear ADML estimator should be asymptotically equivalent to the prespecified semiparametric estimator under sampling from \(P_{0,h n^{-1/2}}\). Results for the linear setting under moderate and limited overlap are shown in Figure~\ref{fig:simsEasyMain}; results for the remaining settings are qualitatively similar and are reported in Appendix~\ref{appendix::figures}.

 \begin{figure}[htb]
     \centering
   \begin{subfigure}[b]{0.48\linewidth} 
     \includegraphics[width=0.5\linewidth]{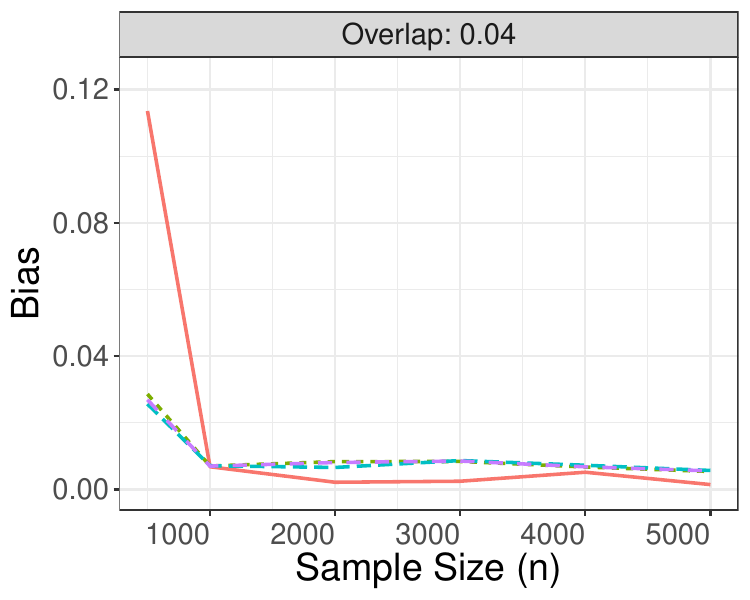}\includegraphics[width=0.5\linewidth]{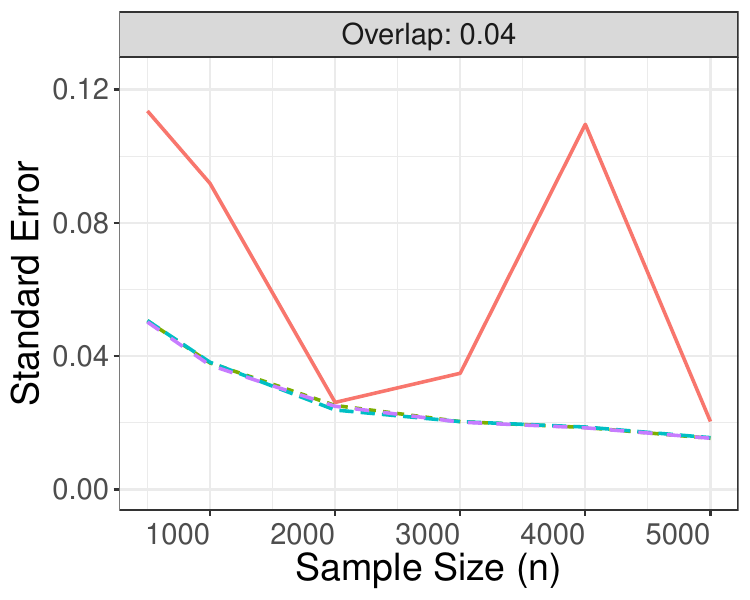}
     \includegraphics[width=0.5\linewidth]{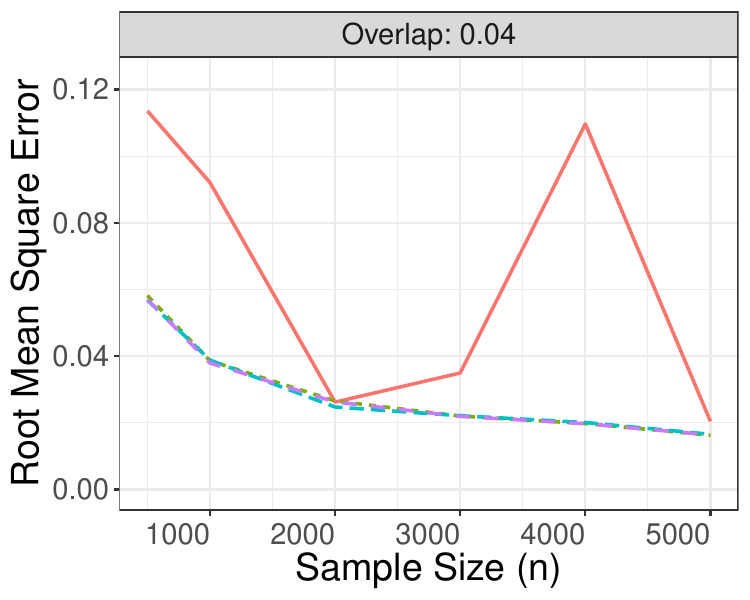}\includegraphics[width=0.5\linewidth]{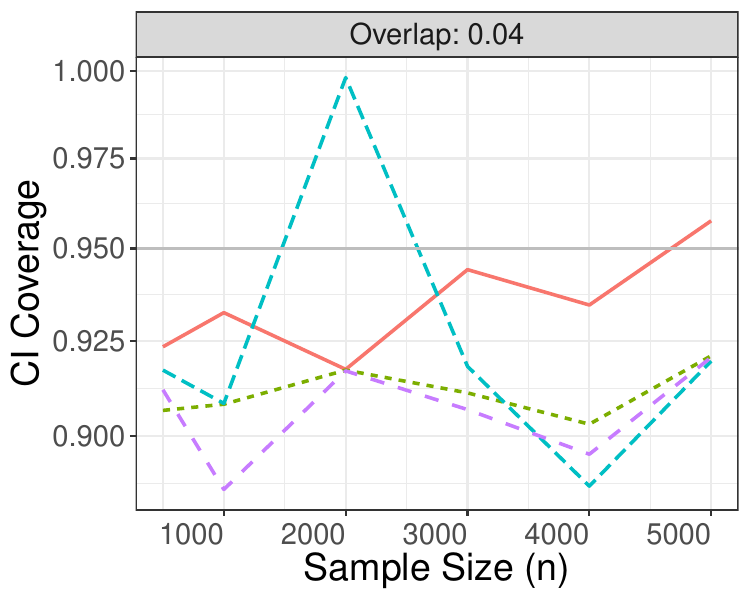}
     \subcaption{Linearity and moderate overlap  ($c_0 \approx 0.04$)}
     \end{subfigure} \hfill \begin{subfigure}[b]{0.48\linewidth} 
     \includegraphics[width=0.5\linewidth]{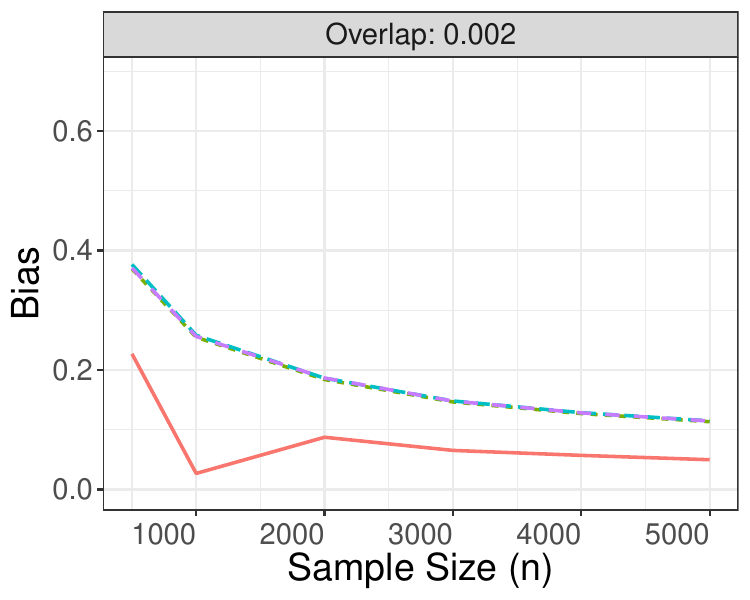}\includegraphics[width=0.5\linewidth]{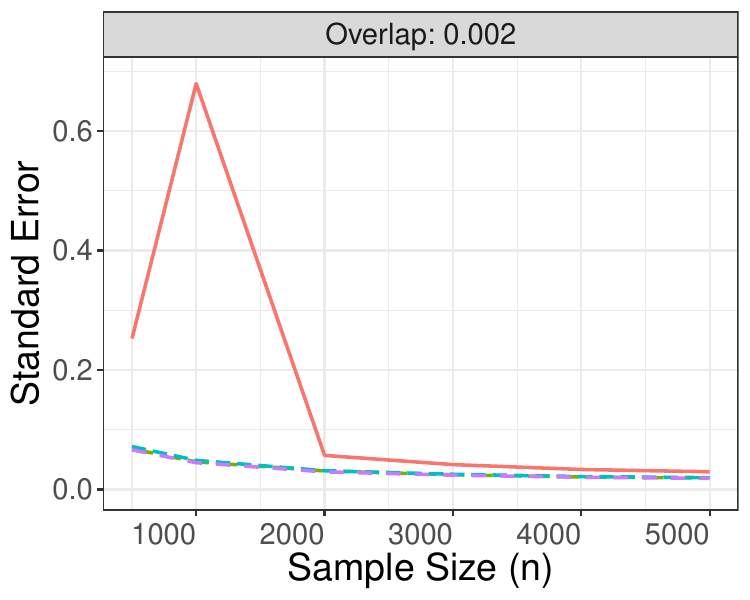}
     \includegraphics[width=0.5\linewidth]{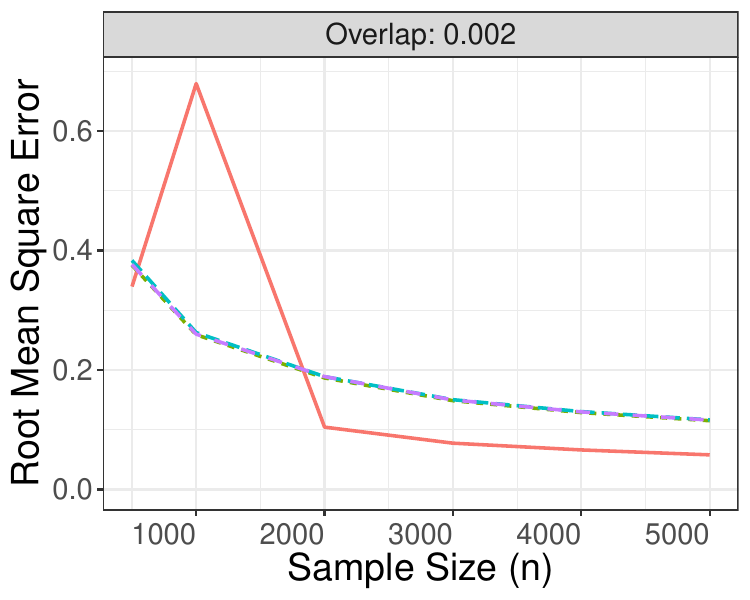}\includegraphics[width=0.5\linewidth]{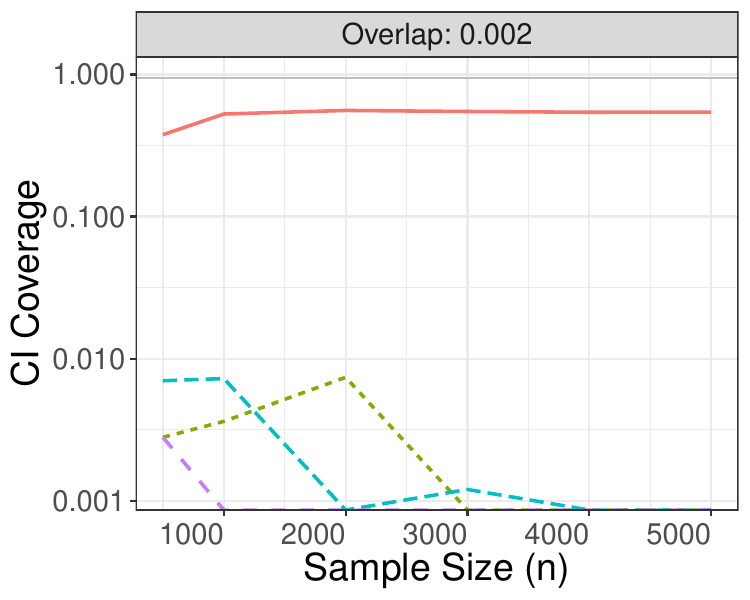}
      \subcaption{Linearity and limited overlap  ($c_0 \approx 0.002$)}
     \end{subfigure}  
     
      \includegraphics[width=0.8\linewidth]{plots/Legend.png}
     \caption{Comparison of empirical bias, standard error and root mean squared error of estimator, and coverage of nominal 95\% confidence interval for partially linear and plug-in HAL-ADML estimators, prespecified semiparametric estimator (assuming constant CATE), and nonparametric AIPW estimator, under sampling from a least-favorable local perturbation of a distribution satisfying linearity in outcome regression and with varying degrees of treatment overlap. Coverage probabilities for intervals based on the prespecified semiparametric estimator were consistently poor, exceeding the y-axis range.}
     \label{fig:simsEasyMain}
 \end{figure}
 
We find empirical support for the theoretical predictions under least-favorable local perturbations. In particular, the prespecified semiparametric estimator and both ADML estimators exhibit irregularity relative to the nonparametric model, resulting in nonvanishing asymptotic bias, while all estimators remain \(\sqrt{n}\)-consistent. Consistent with Corollary~\ref{cor::CATEinf}, the prespecified semiparametric estimator based on the intercept CATE model appears asymptotically equivalent to the partially linear ADML estimator, suggesting that there is no asymptotic cost to learning the oracle submodel from the data rather than knowing it in advance, even under the least-favorable local alternative. By contrast, confidence intervals based on the AIPW estimator attain approximately 95\% coverage under the least-favorable perturbation because of its regularity, but only at the cost of substantially higher variance. Moreover, in the moderate-overlap setting, the AIPW estimator has worse mean squared error and only marginally better coverage. When the overlap constant is \(c_0 \approx 10^{-6}\), the AIPW estimator is both biased and highly variable, likely reflecting the weak identifiability of the ATE in the nonparametric model. Finally, in the linear setting, the plug-in ADML estimator exhibits greater asymptotic bias than the partially linear ADML estimator, as expected given that it is regular over a smaller oracle submodel.

\section{Conclusion and practical considerations}

Adaptive debiased machine learning provides a general framework for constructing asymptotically linear and superefficient estimators of pathwise differentiable parameters using data-driven model selection. In this work, we showed that ADML estimators are regular and yield locally uniformly valid inference for an oracle projection parameter that coincides with the target parameter on the oracle submodel. Thus, data-driven model selection is asymptotically free in a local sense over the nonparametric model: an ADML procedure that adaptively learns a model from the data incurs no asymptotic loss relative to an oracle procedure that knows the model in advance. Although we focused on the \emph{iid} setting, the arguments extend naturally to dependent data under suitable empirical process conditions and central limit theorems; see, for example, \cite{van2014causal}.

Although data-driven model selection does not affect the first-order asymptotics of the ADML estimator for the oracle target, it may still inflate finite-sample variability and distort confidence interval coverage. One way to reduce this effect is sample splitting: estimate the working model on one half of the sample and construct the ADML estimator on the other, then reverse the roles of the two halves and average the resulting estimators. Our theory applies directly to this split-averaged estimator and yields asymptotic linearity and efficiency for the oracle target. This idea extends naturally to multi-fold sample splitting via cross-fitting \citep{vanderLaanRose2011, DoubleML}. Alternatively, one may account for the additional finite-sample variability induced by model selection using bootstrap \citep{efron1994introduction, bootstrapHAL, rinaldo2019bootstrapping} or subsampling \citep{guo2023rank} methods for variance estimation. Conditional selective inference methods may also improve finite-sample performance \citep{leeb2009conditional, condSelectInf, taylor2015statistical, zhao2017selective}, though they typically rely on strong assumptions, such as Gaussian errors or design, and their extension to general pathwise differentiable functionals remains unclear.

A stronger guarantee than our local oracle-equivalence result would be uniform asymptotic equivalence over all distributions that yield the same learned oracle model. For the Lasso, such a result may be obtainable from existing uniform convergence results over approximately sparse models \citep{belloni2013honest, belloni2014inference, belloni2013least, chernozhukov2018auto}. More generally, however, Wald-type inference for the original parameter \(\Psi\) based on ADML is not expected to be uniformly valid over the full nonparametric model, since adaptive estimators are inherently irregular \citep{LeebModelSelect2005, oracleselectionHodgebad}. Consequently, reliable coverage may require larger sample sizes than for prespecified DML procedures, so that the working model is learned stably and the additional finite-sample variability induced by model selection is negligible.

These considerations suggest that, in practice, conservative forms of ADML may be preferable, avoiding overly small working models and thereby preserving regularity over a broader class of distributions. In particular, we have shown that ADML can be used to construct estimators that, in a local asymptotic sense, match or improve upon prespecified parametric or semiparametric estimators while remaining valid under broader forms of model misspecification. One way to do so is to anchor ADML at a prespecified model and then learn a data-adaptive extension, as in our semiparametric ADML estimators anchored at the homogeneous partially linear model. More generally, irregularity is not unique to ADML. Prespecified parametric and semiparametric estimators are also irregular, in that they are asymptotically biased under local perturbations that leave the assumed model. Nonparametric regular estimators may fail to exist when the parameter is not nonparametrically pathwise differentiable. Moreover, even when nonparametric estimators exist, they may be irregular when identification is weak, for example because of insufficient overlap \citep{mehrabi2024off}. Finally, although local asymptotic behavior is useful for comparing estimators and characterizing efficiency, finite-sample performance is ultimately what matters \citep{chernozhukov2023simple}. From this perspective, irregular estimators may outperform regular ones when the sample size is large enough for the asymptotic approximation to be informative for the sampling distribution of interest.

Finally, since the posting of this work on arXiv, ADML has been applied in several settings. ADML with working models learned via the Highly Adaptive Lasso is studied in \cite{li2025regularized}. The use of ADML to learn working models for confounding-robust fusion of randomized trial data with potentially biased observational data is studied in \cite{van2024adaptive}. The extension of the calibrated plug-in regression estimator of Example~\ref{example::map} to sequential decision-making in Markov decision processes is studied in \cite{van2025automatic}.

\vspace{0.3in}
\singlespacing

\section*{Acknowledgments}
Research reported in this publication was supported in part by grants DP2-LM013340 (AL), R01-HL137808 (MC) and R01-AI074345 (MvdL) from the National Institutes of Health.

\vspace{.3in}
 
 \bibliography{ref}

\appendix

\begingroup
\setcounter{tocdepth}{2} 
\renewcommand{\contentsname}{Appendix contents}
\tableofcontents
\endgroup

\section{Notation table}
\label{app:notation}

\singlespacing

\begin{table}[H]
\centering
\caption{Summary of notation I}
\label{tab:notation1}
\begin{tabular}{p{0.22\textwidth}p{0.72\textwidth}}
\toprule
Notation & Meaning \\
\midrule
\(O_1, \ldots, O_n\) & Independent and identically distributed observations drawn from the data-generating distribution \(P_0\). \\

\(P_0\) & Data-generating distribution. \\

\(\mathcal{M}_{\mathrm{np}}\) & Nonparametric model containing \(P_0\). \\

\(\mathcal{M}\) & Prespecified statistical model, with \(P_0 \in \mathcal{M} \subseteq \mathcal{M}_{\mathrm{np}}\). \\

\(\Psi\) & Target parameter mapping from a statistical model to \(\mathbb{R}\). \\

\(\psi_0\) & Target value at \(P_0\), defined by \(\psi_0 := \Psi(P_0)\). \\

\(\mathcal{M}_n\) & Data-adaptive working model selected from the data. \\

\(\mathcal{M}_0\) & Oracle submodel, that is, an unknown fixed submodel containing \(P_0\) that is approximated by \(\mathcal{M}_n\). \\

\(\Pi_n\) & Loss-based projection operator onto \(\mathcal{M}_n\). \\

\(\Pi_0\) & Loss-based projection operator onto \(\mathcal{M}_0\). \\

\(\Psi_n\) & Working projection parameter, defined by \(\Psi_n := \Psi \circ \Pi_n\). \\

\(\Psi_0\) & Oracle projection parameter, defined by \(\Psi_0 := \Psi \circ \Pi_0\). \\

\(\widehat{\psi}_n\) & ADML estimator of the target parameter. \\
\bottomrule
\end{tabular}
\end{table}

\begin{table}[H]
\centering
\caption{Summary of notation II}
\label{tab:notation2}
\begin{tabular}{p{0.22\textwidth}p{0.72\textwidth}}
\toprule
Notation & Meaning \\
\midrule
\(D_{n,P}\) & Efficient influence function of \(\Psi_n\) at \(P\). \\

\(D_{0,P}\) & Efficient influence function of \(\Psi_0\) at \(P\). \\

\(\mathcal{M}_{n,0}\) & Union model used to compare the working and oracle models. \\

\(\Pi_{n,0}\) & Loss-based projection operator onto \(\mathcal{M}_{n,0}\). \\

\(\Psi_{n,0}\) & Extended projection parameter induced by \(\mathcal{M}_{n,0}\), defined by \(\Psi_{n,0} := \Psi \circ \Pi_{n,0}\). \\

\(P_{n,0}\) & Projection of \(P_0\) onto \(\mathcal{M}_n\), defined by \(P_{n,0} := \Pi_n P_0\). \\

\(D_{n,0,P_{n,0}}\) & Efficient influence function of \(\Psi_{n,0}\) evaluated at \(P_{n,0}\). \\

\(\bar D_{n,0,P_{n,0}}\) & Approximation to \(D_{n,0,P_{n,0}}\) lying in the loss-based tangent space of the working model. \\

\(T_{\mathcal{M}}(P)\) & Tangent space of the model \(\mathcal{M}\) at \(P\). \\

\(\ell(\cdot, Q)\) & Loss function used to define projections onto the working and oracle models. \\

\(P_n\) & Empirical distribution of the observed data. \\

\(\sigma_n^2\) & Influence-function-based variance estimate, typically \(n^{-1} \sum_{i=1}^n D_{n,\widehat P_n}(O_i)^2\). \\
\bottomrule
\end{tabular}
\end{table}

\begin{table}[H]
\centering
\caption{Summary of notation III}
\label{tab:notation3}
\begin{tabular}{p{0.22\textwidth}p{0.72\textwidth}}
\toprule
Notation & Meaning \\
\midrule
\(W\) & Covariates in the regression and causal inference examples. \\

\(A\) & Binary treatment variable. \\

\(Y\) & Outcome variable. \\

\(\mu_P(a,w)\) & Outcome regression under \(P\), defined by \(\mu_P(a,w) := E_P(Y \mid A = a, W = w)\). \\

\(\mu_0\) & True outcome regression, defined by \(\mu_0 := \mu_{P_0}\). \\

\(m(W,\mu)\) & Known mapping, linear in \(\mu\), used to define linear functionals of the regression. \\

\(\tau_P(w)\) & Conditional average treatment effect under \(P\), defined by \(\tau_P(w) := \mu_P(1,w) - \mu_P(0,w)\). \\

\(\tau_0\) & True conditional average treatment effect. \\

\(\pi_P(w)\) & Propensity score under \(P\), defined by \(\pi_P(w) := P(A = 1 \mid W = w)\). \\

\(\pi_0\) & True propensity score. \\

\(m_P(w)\) & Marginal regression, defined by \(m_P(w) := E_P(Y \mid W = w)\). \\

\(\mathcal{H}\) & Ambient regression function space. \\

\(\mathcal{H}_n\) & Data-adaptive working regression model. \\

\(\mathcal{H}_0\) & Oracle regression model. \\

\(\alpha_{P,\mathcal{H}}\) & Riesz representer of the linear functional \(\mu \mapsto E_P\{m(W,\mu)\}\) over \(\mathcal{H}\). \\
\bottomrule
\end{tabular}
\end{table}

\doublespacing

\section{Expanded related work}
 
  \label{section::AMLEGeneral3Lit}

\textit{Post-hoc model selection and inference.} The effect of data-adaptive model selection on inference has been studied extensively; see, for example, \cite{bauer1988model, potscher1991effects, buhlmann1999efficient, hjort2003frequentist, bunea2004consistent, LeebModelSelect2005, claeskens2007asymptotic}. A central lesson from this literature is that adaptivity often comes at the cost of regularity. In particular, superefficient procedures can achieve lower variance at selected distributions, but this typically requires sacrificing uniform validity over local perturbations. This tradeoff may nevertheless be worthwhile in settings where regular nonparametric inference is unavailable or too unstable to be useful, such as ATE estimation under limited or no overlap \citep{moosavi2023costs}.

Some of the existing literature studies estimators based on consistent model selection procedures that aim to select a correctly specified model with probability tending to one, a property often called the oracle property \citep{buhlmann1999efficient, oracleSelectSCAD, LeebModelSelect2005, AdaptLassoOracle, AdaptLassoOracleStationary}. However, this approach has been criticized because it can perform poorly when the selected model is incorrect or only approximately correct, and because very large sample sizes may be needed before correct model selection becomes likely \citep{LeebModelSelect2005}. ADML relaxes this requirement: rather than requiring consistent selection of the true model, it only requires the selected working model to approximate a fixed oracle submodel at the relevant sample size. Similar relaxations appear in post-Lasso inference under approximate sparsity \citep{belloni2012sparse, belloni2013honest, belloni2014inference}. A second criticism of superefficient estimators is that inference for the original target parameter may fail to be locally uniformly valid over the full nonparametric model \citep{LeebModelSelect2005, adaptLassoBadPerf, wu2019hodges}. ADML shares this limitation for the original target \(\Psi\). However, in Section~\ref{section::oracleParamInference} we show that ADML does yield locally uniformly valid inference for the oracle projection parameter \(\Psi_0\), and, for the original parameter \(\Psi\), it remains locally uniformly valid along perturbations within the oracle submodel \(\mathcal{M}_0\).

\textit{Selective inference.} Selective inference concerns inference conducted after examining the data, especially in high-dimensional settings with data-driven model selection \citep{modelselectionbias, zhang2014confidence, lee2016exact, zhao2017selective, danielleWittenLassoWorks, PostselectionInference}. Most prior work has focused on regression coefficients indexed by a selected model. That problem is inherently difficult because such targets are typically not pathwise differentiable, leading to irregular behavior and nonstandard rates of convergence \citep{potscherAdaptLasso2009distribution, adaptLassoBadPerf, cai2017confidence, yang2021rates}. By contrast, our work shows that post-selection inference for pathwise differentiable functionals in general statistical models is considerably more tractable: the error induced by model selection is second order and is therefore asymptotically negligible under suitable conditions. Exploiting this smoothness, we derive \(\sqrt{n}\)-consistent, asymptotically linear estimators under high-level conditions on black-box model-selection procedures. In particular, we establish the local asymptotic validity of model-based inference procedures that ignore variation from the model-selection step.

 \textit{Data-adaptive target parameters.} Our work advances the literature on inference for data-adaptive target parameters by providing asymptotically normal estimators for a broad class of parameters defined through projections onto data-dependent working models \citep{van2013AdaptTarget, dataAdaptTargetParam, aronow2016data, rinaldo2019bootstrapping}. In some settings, inference for a data-adaptive parameter also yields valid inference for a fixed population parameter because the bias of the data-dependent estimand relative to the fixed target is negligible, as in estimation of the causal effect of the optimal dynamic treatment rule \citep{van2015OptRule} and certain measures of variable importance \citep{williamson2021nonparametric}. In Section \ref{section::modelapprox}, we extend this principle by characterizing the model approximation error and giving conditions under which inference for a data-adaptive projected target remains valid for the corresponding fixed oracle target.

 \textit{Calibration in causal inference.} Recent work has used calibration to improve the performance of DML estimators \citep{gutman2022propensity, van2023causal, deshpande2023calibrated, ballinari2024improving, van2024automatic, van2024stabilized, rabenseifner2025calibration}. For DML estimation of continuous linear functionals of a regression, \citet{van2024automatic} proposed \textit{calibrated DML} and showed that calibrating the regression and Riesz representer yields estimators that remain asymptotically normal even when one nuisance is estimated inconsistently or at an arbitrarily slow rate, provided the other converges at the \(n^{-1/4}\) rate. In this setting, calibration delivers not only doubly robust point estimation but also doubly robust inference, including for confidence intervals. Building on this literature, we show in Example~\ref{example::map} that calibrating a single nuisance in a plug-in estimator yields superefficiency for such functionals. We further show that this idea extends beyond regression, as illustrated by CATE calibration in Example~\ref{example::catecal}.

\section{Additional details for ADML}

 \subsection{Details on applying  Theorem~\ref{theorem::exactBiasOracle}}
 \label{appendix::modelapproxproj}

 Theorem~\ref{theorem::exactBiasOracle} is best applied with \(\bar D_{n,0,P_{n,0}}\) chosen either as \(D_{n,P_{n,0}}\), the \(P_{n,0}\)-EIF of \(\Psi_n\), or as the \(L^2(P_{n,0})\)-projection of \(D_{n,0,P_{n,0}}\) onto the loss-based tangent space \(\mathcal{S}_{\mathcal{M}_n}(P_{n,0})\); for the log-likelihood loss, these two choices coincide. The former choice can often be re-expressed as an appropriate projection of \(D_{n,0,P_{n,0}}\) onto \(\mathcal{S}_{\mathcal{M}_n}(P_{n,0})\). In particular, the EIF--Riesz characterization in Theorem~\ref{theorem::EIFmain} shows that \(D_{n,0,P_{n,0}} = -\dot{\ell}_{P_{n,0}}(s_{n,0,P_{n,0}})\) for a union-model Riesz representer \(s_{n,0,P_{n,0}} \in T_{\mathcal{M}_{n,0}}(P_{n,0})\), defined as in \eqref{eqn::autodml}. Similarly, \(D_{n,P_{n,0}} = -\dot{\ell}_{P_{n,0}}(s_{n,P_{n,0}})\) for \(s_{n,P_{n,0}} \in T_{\mathcal{M}_n}(P_{n,0})\). Since \(T_{\mathcal{M}_n}(P_{n,0}) \subseteq T_{\mathcal{M}_{n,0}}(P_{n,0})\), projection properties of Riesz representers imply that \(s_{n,P_{n,0}}\) is the projection of \(s_{n,0,P_{n,0}}\) onto \(T_{\mathcal{M}_n}(P_{n,0})\) with respect to the Hessian inner product \(P_0 \ddot{\ell}_{P_{n,0}}(\cdot,\cdot)\) \citep{bickel1993efficient}. In particular, under mild conditions, Lemma~\ref{lemma::stableLogLik} in Appendix~\ref{appendix::stable} shows that \(\|D_{n,0,P_{n,0}} - D_{n,P_{n,0}}\|_{P_0} = o_p(1)\) whenever the tangent-space approximation error vanishes:
\[
\inf_{s \in T_{\mathcal{M}_n}(P_{n,0})} P_0 \ddot{\ell}_{P_{n,0}}(s_{n,0,P_{n,0}} - s,\; s_{n,0,P_{n,0}} - s) = o_p(1).
\]

\subsection{Sufficient conditions for Condition~\ref{cond::consDn}}

  \label{appendix::stable}

We conclude this subsection with the following lemma, which gives sufficient conditions for Condition~\ref{cond::consDn}. Recall that $D_{n,0,P_{n,0}} = -\dot{\ell}_{P_{n,0}}(s_{n,0,P_{n,0}})$ is the EIF of the projection parameter $\Psi_{n,0}$ at $P_{n,0}$ for the union model $\mathcal{M}_{n,0}$, where $s_{n,0,P_{n,0}}$ is an element of the union tangent space $T_{\mathcal{M}_{n,0}}(P_{n,0})$. The lemma is stated under the following condition:

\begin{enumerate}[label=\textbf{\ref{cond::consDn}*}), ref={\ref{cond::consDn}*}]
\item \label{cond::suffcondAL} Each of the following hold:

\begin{enumerate}[label={ \alph*.}, ref=\ref{cond::suffcondAL}\alph*,leftmargin=15pt]
\item \textit{Lipschitz loss derivative:} \label{cond::lipschitzloss} For $P' \in \{P_{n,0}, P_0\}$, there exists $C < \infty$ such that 
$$
\|\dot{\ell}_{P'}(s') - \dot{\ell}_{P'}(s)\|_{P_0} 
   \leq C \Bigl\{ P_0\,\ddot{\ell}_{P'}(s' - s,\, s' - s) \Bigr\}^{1/2}
   \quad \text{for all } s, s' \in T_{\mathcal{M}}(P').
$$
\item \textit{Consistency of $P_{n,0}$ for $P_0$: } $\norm{D_{n, P_{n,0}} - D_{n, P_0}}_{P_0} +  \norm{D_{n, 0, P_{n,0}} - D_{n, 0, P_0}}_{P_0}= o_p(1)$. \label{cond::weakconstencyOracle}
\item \textit{Oracle and working tangent spaces approximate \(\overline{T}_{\mathcal{M}_{n,0}}(P_0)\):} \label{cond::consistentTangentSpace}
For both \((P',\mathcal{M}')=(P_0,\mathcal{M}_0)\) and \((P',\mathcal{M}')=(P_{n,0},\mathcal{M}_n)\), as $n \rightarrow \infty$,
\[
\min_{s \in \overline{T}_{\mathcal{M}'}(P')}
P_0\ddot{\ell}_{P'}(s-s_{n,0,P'},\,s-s_{n,0,P'}) = o_p(1).
\]
 
\end{enumerate}
 
\end{enumerate}

\begin{lemma}[Sufficient conditions for B2]
Under Condition~\ref{cond::suffcondAL}, both \(\|D_{n,0,P_{n,0}} - D_{n,P_{n,0}}\|_{P_0} = o_p(1)\) and \(\|D_{n,0,P_{n,0}} - D_{0,P_{n,0}}\|_{P_0} = o_p(1)\). Furthermore, Condition~\ref{cond::suffcondAL} implies Condition~\ref{cond::consDn}.
\label{lemma::stableLogLik}
\end{lemma}

  The proof is given in Appendix \ref{appendix::proofstable}. The key requirement is Condition~\ref{cond::consistentTangentSpace}, which ensures that the relevant tangent spaces are rich enough to approximate the score directions \(s_{n,0,P_0}\) and \(s_{n,0,P_{n,0}}\) with vanishing error. Specifically, it requires that the oracle tangent space \(\overline{T}_{\mathcal{M}_0}(P_0)\) approximate \(s_{n,0,P_0} \in \overline{T}_{\mathcal{M}_{n,0}}(P_0)\), and that the working tangent space \(\overline{T}_{\mathcal{M}_n}(P_{n,0})\) approximate \(s_{n,0,P_{n,0}} \in \overline{T}_{\mathcal{M}_{n,0}}(P_{n,0})\), both with vanishing error. Condition~\ref{cond::lipschitzloss} is a mild regularity assumption requiring the loss derivative to be Lipschitz with respect to the inner product induced by the second derivative of the loss. Condition~\ref{cond::weakconstencyOracle} imposes a mild consistency requirement on the working-model projection \(P_{n,0}\) as an estimator of \(P_0\).

\section{Verifying the conditions of Theorem \ref{theorem::ALlinear} in our examples}
\label{appendix::conditionsexamples}

\setcounter{example}{1}

\begin{example}[continued]

Theory for this estimator is given in Appendix~\ref{app:ADMLpartiallylinear}, where Theorem~\ref{example::theorem::RlearnerLimitDistORACLE} specializes Theorem~\ref{theorem::ALlinear} under lower-level conditions on nuisance estimation rates and on the approximation of \(\mathcal{T}_0\) by \(\mathcal{T}_n\).
\end{example}

\begin{example}[continued]

Since \(\widehat{\psi}_n\) is a special case of a calibrated DML estimator, Theorem 3 of \cite{van2024automatic}, applied with \(\alpha_{n,j}^* = \overline{\alpha}_0 := 0\), gives lower-level sufficient conditions for Theorem~\ref{theorem::ALlinear}. Related sufficient conditions also appear in \cite{ATEsupereff}. Here we explain how these conditions imply those of Theorem~\ref{theorem::ALlinear}. Conditions~\ref{cond::bounded}--\ref{cond::linearnuisancerate} imply \(\|\widehat{D}_n - D_{0,P_0}\| = o_p(1)\). Hence, the empirical process condition \((P_n - P_0)(\widehat{D}_n - D_{0,P_0}) = o_p(n^{-1/2})\) in Condition~\ref{cond::linearsamplesplit} typically follows when the initial estimator \(\mu_n\) is sample split (or cross-fit), using the fact that isotonic functions form a Donsker class (Theorem 1 of \citealp{rabenseifner2025calibration}; Theorem 3 of \cite{van2024automatic}). Condition~\ref{cond::linearnuisancerate} holds automatically since \(\alpha_n = \alpha_{0,\mathcal{H}_n}\), and therefore \(\|\alpha_n - \alpha_{0,\mathcal{H}_n}\|_{P_0} = 0\). To verify Condition~\ref{cond::oraclebiaslinear}, we apply Lemma~\ref{lemma::lipschitzdependent2} in the Appendix (see also the proof of Lemma~\ref{lemma::lipschitzdependent}), which shows that the approximation errors \(\|\operatorname{Proj}_n(\alpha_{0,\mathcal{H}_{n,0}})-\alpha_{0,\mathcal{H}_{n,0}}\|_{P_0}\), \(\|\operatorname{Proj}_0(\alpha_{0,\mathcal{H}_{n,0}})-\alpha_{0,\mathcal{H}_{n,0}}\|_{P_0}\), and \(\|\operatorname{Proj}_n(\mu_0)-\mu_0\|_{P_0}\) are each bounded above by \(\|\mu_n^*-\mu_0\|_{P_0}\), provided that \((t_1,t_2)\mapsto E_0[Y \mid \mu_n^*(A,W)=t_1,\mu_0(A,W)=t_2,\mathcal{D}_n]\) and \((t_1,t_2)\mapsto E_0[\alpha_{0,\mathrm{np}}(A,W)\mid \mu_n^*(A,W)=t_1,\mu_0(A,W)=t_2,\mathcal{D}_n]\) are almost surely \(L\)-Lipschitz, where \(\alpha_{0,\mathrm{np}}\) denotes the nonparametric Riesz representer. It follows from this result and Lemma~\ref{lemma::lipschitzdependent} that Condition~\ref{cond::oraclebiaslinear} holds whenever \(\|\mu_n^*-\mu_0\|_{P_0}=o_p(n^{-1/4})\). 
\end{example}

\begin{example}[continued]
Theorem~\ref{example::theorem::RlearnerLimitDistORACLE} in Appendix~\ref{section::DataAdaptexampleATEPartially} provides a specialization of Theorem~\ref{theorem::ALlinear} for ADML estimators in the partially linear regression model that is easier to apply. Here, we discuss how to verify the conditions of of Theorem~\ref{theorem::ALlinear} directly. As in the previous example, the second part of Condition~\ref{cond::linearsamplesplit} is typically satisfied when the initial CATE estimator \(\tau_n\) is constructed using sample splitting and calibration is performed by isotonic regression \citep{van2024automatic,klaassen2025calibration}. Under Condition~\ref{cond::bounded}, Condition~\ref{cond::linearnuisancerate} holds if the R-learner nuisance estimators satisfy \(\|m_n-m_0\|\|\pi_n-\pi_0\| = o_p(n^{-1/2})\) and \(\|\pi_n-\pi_0\| = o_p(n^{-1/4})\). Under unconfoundedness, Lemma~\ref{lemma::lipschitzdependent2} in the Appendix (see also Lemma~\ref{lemma::lipschitzdependent}) implies that Condition~\ref{cond::oraclebiaslinear} holds if \(\|\tau_n^*-\tau_0\|_{P_0} = o_p(n^{-1/4})\) and the mappings \((t_1,t_2) \mapsto E_0[Y(1)-Y(0)\mid \tau_n^*(W)=t_1,\tau_0(W)=t_2,\mathcal{D}_n]\) and \((t_1,t_2) \mapsto E_0[\gamma_{0,\mathcal{T}}(A,W)\mid \tau_n^*(W)=t_1,\tau_0(W)=t_2,\mathcal{D}_n]\) are almost surely \(L\)-Lipschitz, where \(\mathcal{T}\) is any linear space containing \(\mathcal{T}_n\) and \(\mathcal{T}_0\). When the CATE is constant, the second part of Condition~\ref{cond::oraclebiaslinear} holds automatically, since \(\mu_0 \in \mathcal{H}_n\) and \(\|\operatorname{Proj}_n(\mu_0)-\mu_0\|_{P_0}=0\).
\end{example}

 \section{Application of theory to partially linear model selection}
\label{appendix::PLM}

\subsection{Setup and estimator}

\label{app:ADMLpartiallylinear}
In the following subsections, we apply our theory to the partially linear ADML estimator proposed in Example~\ref{example::CATE}, which we recall as follows:

 \begin{example*}[ADML for partially linear working models]
    \label{example::partially::intro}
   Suppose $\mathcal{H}_n$ and $\mathcal{H}_0$ are partially linear regression models corresponding to the CATE models $ \mathcal{T}_n$ and $\mathcal{T}_0$, respectively \citep{robinson1988root}. The partially linear regression model enables direct modelling of the conditional average treatment effect. Given user-supplied estimators $m_n$ and $\pi_n$ of $m_0$ and $\pi_0$, a semiparametric ADML estimator for the ATE is given by $\widehat\psi_n := \frac{1}{n}\sum_{i=1}^n \tau_n(W_i)$, where
$$\tau_n := \argmin_{\tau\in {\mathcal{T}}_n} \sum_{i=1}^n \left[Y_i - m_n(W_i) - \left\{A_i - \pi_n(W_i)\right\}\tau(W_i)\right]^2.$$
This partially linear ADML estimator encompasses various data-adaptive CATE estimators, including the post-Lasso R-learner \citep{belloni2014inference, zhao2017selective, nie2021quasi}.
\end{example*}

\subsection{Oracle parameter and EIF}
\label{section::oracleparam::example}

Recall from Section~\ref{section::EIFlinear} that the oracle statistical model \(\mathcal{M}_0\) is defined by \(P \in \mathcal{M}_0\) if and only if \(\mu_P\) belongs to the partially linear regression model
\[
\mathcal{H}_0 := \left\{(a,w) \mapsto \mu(w,0) + a\tau(w) : \mu \in L^2(P_{0,A,W}),\ \tau \in \mathcal{T}_0 \right\},
\]
where \(\mathcal{T}_0\) is an unknown but learnable linear CATE model for \(\tau_0\), and \(P_{0,A,W}\) denotes the distribution of \((A,W)\) under \(P_0\). Using Robinson's transformation \citep{robinson1988root}, the outcome regression can be written \(P_0\)-almost surely as
\[
\mu_0:(a,w)\mapsto m_0(w) + \{a - \pi_0(w)\}\tau_0(w),
\]
where \(m_0(w):=E_0(Y \mid W=w)\). The oracle parameter can therefore be expressed as \(P \mapsto \Psi_0(P) := E_P\{\Pi_0\tau_P(W)\}\), where
\[
\Pi_0\tau_P := \argmin_{\tau \in \mathcal{T}_0} E_P \left[Y - m_P(W) - \{A - \pi_P(W)\}\tau(W)\right]^2.
\]
Moreover, \(\Pi_0\tau_P\) can be shown to equal the overlap-weighted projection of the CATE \citep{imbensOverlapEstimand2006,li2019overlapWeights,d2021overlap,morzywolek2023general}.

The oracle parameter \(\Psi_0\) corresponds to the composite least-squares loss defined pointwise by
\[
\ell(o, Q) := \{y-\mu_Q(a,w)\}^2 - \log\left\{\frac{dQ_W}{d\mu}(w)\right\},
\]
where \(Q_W\) denotes the distribution of \(W\) under \(Q\). The negative log-likelihood term ensures that the induced projection \(\Pi_0 P\) preserves the covariate distribution of \(P\). Although the minimizer of the risk \(Q \mapsto P\ell(\cdot,Q)\) over a submodel of \(\mathcal{M}_{\mathrm{np}}\) is typically not unique, the loss function \(\ell\) satisfies the conditions of Theorem~\ref{theorem::lossbasedEIF}. In particular, when \(\Psi_0\) is pathwise differentiable, its efficient influence function lies in the loss-based tangent space and is characterized by the following theorem. For its statement, we introduce the following condition:
\begin{enumerate}[label=(E\arabic*), ref=E\arabic*,series=condE]
 \item $\gamma_{P, \mathcal{T}_0}(W) := \argmin_{\gamma \in \mathcal{T}_0} E_P\left[\pi_P(W)\{1-\pi_P(W)\}\gamma(W)^2 -  2 \gamma(W) \right]$ exists. \label{cond::CATE::boundedFunc1}
\end{enumerate}

\begin{theorem}[Efficient influence function under partially linear model]
    Under Condition \ref{cond::CATE::boundedFunc1}, the oracle parameter $\Psi_0:\mathcal{M}_{\mathrm{np}}\rightarrow\mathbb{R}$ is pathwise differentiable at $P$ with efficient influence function
    $$D_{0,P}(o) = \Pi_0 \tau_P(w) - E_P\left\{\Pi_0\tau_P(W)\right\} +  \gamma_{P, \mathcal{T}_0}(w)\left\{a - \pi_P(w)\right\}\left\{y - \Pi_0 \mu_P(a,w) \right\},$$ 
    which is an element of $\mathcal{S}_{\mathcal{M}_0}(P_0) = L^2_0(P_{0,W}) \oplus \left\{o \mapsto h(a,w)\{y - \Pi_0 \mu_P(a,w)\} : h \in L^2(P_{0,A,W})\right\}$.\label{example::theorem::RlearnerEIF}
\end{theorem}
Condition \ref{cond::CATE::boundedFunc1} holds if and only if the linear functional $\mu \mapsto E_P\left\{\mu(1,W) - \mu(0,W)\right\}$ is bounded on $\mathcal{H}_0$. When $\pi_P(W)\{1-\pi_P(W)\} > 0$ almost surely and its reciprocal has finite variance, $\gamma_{P, \mathcal{T}_0}$ equals the overlap-weighted $L^2(P)$-projection of $\{\pi_P(1-\pi_P)\}^{-1}$ onto the linear working model $\mathcal{T}_0$. If $\mathcal{T}_0 := L^2(P_{0,W})$, then $\Psi_0 = \Psi$ and $\gamma_{P, \mathcal{T}_0} = \{\pi_P(1-\pi_P)\}^{-1}$ so that Theorem \ref{example::theorem::RlearnerEIF} recovers the nonparametric efficient influence function of the ATE.

\subsection{Large-sample theory}
\label{section::DataAdaptexampleATEPartially}

The corresponding data-adaptive working parameter \(\Psi_n:\mathcal{M}\to\mathbb{R}\) is defined pointwise by
\[
\Psi_n(P) := E_P\{\Pi_n\tau_P(W)\},
\]
where
\[
\Pi_n\tau_P := \argmin_{\tau \in \mathcal{T}_n} E_P \left[Y - m_P(W) - \{A - \pi_P(W)\}\tau(W)\right]^2.
\]
Hence, the partially linear ADML estimator \(\widehat{\psi}_n\) is simply a plug-in estimator of \(\Psi_n(P_0)\). The first-order equations characterizing the empirical risk minimizer \(\tau_n\) imply that \(\frac{1}{n}\sum_{i=1}^{n} D_{n,\widehat{P}_n}(O_i) = 0\), so that \(\widehat{\psi}_n\) is in fact an ATMLE and satisfies Condition~\ref{cond::debiased} under mild conditions.

We now state our main result. Let \(\mathcal{T}_{n,0} = \mathcal{T}_n \oplus \mathcal{T}_0\) denote the orthogonal sum of \(\mathcal{T}_n\) and \(\mathcal{T}_0\), and let \( \gamma_{0,\mathcal{T}_{n,0}}\) be the overlap-weighted Riesz representer under \(P_0\) for the model \(\mathcal{T}_{n,0}\). Let \(\Pi_{0}\) denote the overlap-weighted projection onto \(\mathcal{T}_0\). We denote the overlap-weighted \(L^2(P_0)\)-norm of a function \(f \in L^2(P_{0,W})\) by \(\|f\|_{w_0P_0} := \|w_0^{1/2}f\|_{P_0}\), where \(w_0 := \pi_0(1-\pi_0)\), and introduce the following conditions:

\begin{enumerate}[label=E\arabic*), ref={E\arabic*}, resume=condE]
    \item \textit{Continuity of linear functional}: the representer $\gamma_{P, \mathcal{T}_{n,0}}(W)$ exists with finite $P_0$-variance. \label{cond::CATE::boundedFunc}
    
    \item \textit{Donsker condition:} \(\tau_n\), \(\pi_n\), \(m_n\), and \(\Pi_n \gamma_{0,\mathcal{T}_0}\) are uniformly bounded and fall in a fixed \(P_0\)-Donsker class with probability tending to one;
     \label{cond::CATE::DonskerMLE}
     \item \textit{Consistency of nuisance estimators:} \(\norm{\pi_n - \pi_0}_{P_0} + \norm{\tau_n - \Pi_n \tau_0}_{P_0} + \|m_n - m_0\|_{P_0} = o_p(1)\);
     \label{cond::CATE::consistentNuis}
     \item \textit{Sufficient nuisance rates:} \(\norm{\pi_n - \pi_0}_{P_0} = o_p(n^{-1/4})\) and \(\norm{\pi_n - \pi_0}_{P_0}\norm{m_n - m_0}_{P_0} = o_p(n^{-1/2})\).
     \label{cond::CATE::DRterm}
     \item \textit{Consistency of working model:} \(\norm{\gamma_{0,\mathcal{T}_0} -  \gamma_{0,\mathcal{T}_n}}_{w_0P_0} + \norm{\Pi_n \tau_0 - \tau_0}_{P_0} = o_p(1)\);
     \label{cond::CATE::consistentWorking}
     \item \textit{Negligible model approximation error term:} \(\norm{\gamma_{0,\mathcal{T}_{n,0}} - \Pi_n \gamma_{0,\mathcal{T}_{n,0}}}_{w_0P_0}\norm{\tau_0 - \Pi_n \tau_0}_{w_0P_0} = o_p(n^{-1/2})\).
     \label{cond::CATE::critBias}
\end{enumerate}

Conditions~\ref{cond::CATE::boundedFunc} ensure that \(\Psi_n\) and \(\Psi_0\) are pathwise differentiable. Condition~\ref{cond::CATE::DonskerMLE} restricts the complexity of the nuisance estimators \(\pi_n\) and \(m_n\), but it can be relaxed to accommodate generic machine learning methods via cross-fitting \citep{vanderLaanRose2011, DoubleML}. The requirement that \(\tau_n\) lie in a Donsker class is satisfied by various estimators, including the highly adaptive Lasso and Lasso-regularized regression over reproducing kernel Hilbert spaces. However, without strong sparsity conditions, this requirement may fail in high-dimensional settings \citep{DoubleML,chernoapproxSparse2019}.
Conditions~\ref{cond::CATE::consistentNuis} and \ref{cond::CATE::consistentWorking} typically require that \(\mathcal{T}_n\) be finite-dimensional and impose mild consistency conditions on the nuisance estimators and projections. Condition~\ref{cond::CATE::DRterm} is a standard nuisance-rate condition for partially linear regression and is trivially satisfied when the propensity score \(\pi_0\) is known and \(\pi_n = \pi_0\). Finally, Condition~\ref{cond::CATE::critBias} ensures that the model approximation error is negligible, so that Condition~\ref{cond::oraclebiaslinear} holds.

\begin{theorem}[Superefficiency of partially linear ADML]
\label{example::theorem::RlearnerLimitDistORACLE}
\label{example::theorem::RlearnerLimitDist}
Under Conditions~\ref{cond::CATE::boundedFunc}--\ref{cond::CATE::critBias}, the partially linear ADML estimator \(\widehat{\psi}_n\)
\begin{enumerate}
    \item is a \(P_0\)-asymptotically linear estimator of \(\Psi_n(P_0)\) with influence function equal to the \(P_0\)-efficient influence function \(D_{0,P_0}\) of \(\Psi_0\) relative to \(\mathcal{M}_{\mathrm{np}}\);
    \item is \(P_0\)-asymptotically linear, regular, and efficient for the oracle parameter \(\Psi_0:\mathcal{M}_{\mathrm{np}}\to\mathbb{R}\), with
    \[
    \widehat{\psi}_n - \Psi_0(P_0) = (P_n - P_0)D_{0,P_0} + o_p(n^{-1/2});
    \]
    \item is, in addition, \(P_0\)-efficient for \(\Psi:\mathcal{M}_0\to\mathbb{R}\) relative to the oracle submodel \(\mathcal{M}_0\) if the conditional variance of \(Y\) given \((A,W)\) is almost surely constant.
\end{enumerate}
\end{theorem}

Theorem~\ref{example::theorem::RlearnerLimitDistORACLE} implies that $\sqrt{n}\,(\widehat{\psi}_n-\psi_0)/\sigma_0 \rightarrow_d N(0,1),$
where \(\sigma_0^2 = \mathrm{var}_{0}\{D_{0,P_0}(O)\}\) is the efficiency bound. Thus, \(\widehat{\psi}_n\) is \(P_0\)-superefficient for the ATE parameter \(\Psi\), with limiting variance adaptive to the complexity of the CATE \(\tau_0\).

 The following corollary shows that the ADML estimator is regular along local perturbations of \(P_0\) whose corresponding CATE lies in the oracle submodel \(\mathcal{T}_0\). If the learned oracle submodel \(\mathcal{M}_0\) is only approximately correct at a given sample size \(n\), then the ADML estimator may exhibit asymptotic bias under perturbations outside \(\mathcal{M}_0\). Nevertheless, even under least-favorable local perturbations outside the oracle submodel, the ADML estimator continues to yield valid inference for the corresponding oracle projection-based ATE estimand.

\begin{corollary}[Limiting behavior under local perturbations]
     \label{cor::CATEinf}  The ADML estimator is \(P_0\)-regular for the ATE parameter \(\Psi\) with respect to local perturbations of \(P_0\) in the oracle submodel \(\mathcal{M}_0\). Moreover, under sampling from any local perturbation \(P_{0,hn^{-1/2}} \in \mathcal{M}_{\mathrm{np}}\) outside \(\mathcal{M}_0\), it holds that
   \[
   \sqrt{n}\,\{\widehat{\psi}_n-\Psi_0(P_{0,hn^{-1/2}})\}/\sigma_0 \rightarrow_d N(0,1).
   \]
 
\end{corollary}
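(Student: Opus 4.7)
The plan is to derive both claims by specializing the general results of Section \ref{section::oracleParamInference} to the partially linear setting, using Theorem \ref{example::theorem::RlearnerLimitDistORACLE} to verify the necessary hypotheses.

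First, I would invoke Theorem \ref{example::theorem::RlearnerLimitDistORACLE}, which establishes that, under the stated assumptions, the partially linear ADMLE $\widehat{\psi}_n$ satisfies the asymptotically linear expansion $\widehat{\psi}_n - \psi_0 = (P_n - P_0) D_{0,P_0} + o_p(n^{-1/2})$, where $D_{0,P_0}$ is the $P_0$--efficient influence function of the projection-based oracle ATE parameter $\Psi_0 : \mathcal{M}_{np} \rightarrow \mathbb{R}$. In particular, the assumptions of Theorems \ref{theorem::oracleEff} and \ref{theorem::oracleRegularity} are in force for this working example.

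For the first claim, I would apply Theorem \ref{theorem::oracleRegularity} directly: it asserts that $\widehat{\psi}_n$ is $P_0$--regular for $\Psi$ along every local perturbation $P_{0,hn^{-1/2}}$ remaining in the oracle submodel $\mathcal{M}_0$. Since the hypotheses of Theorem \ref{theorem::oracleRegularity} are met by Theorem \ref{example::theorem::RlearnerLimitDistORACLE}, $P_0$--regularity of the partially linear ADMLE for $\Psi$ over $\mathcal{M}_0$--local perturbations follows immediately. The key underlying fact, worth noting in the write-up, is that $\Psi = \Psi_0$ on $\mathcal{M}_0$, so that the regularity statement for $\Psi_0$ transfers to $\Psi$ along any path contained in $\mathcal{M}_0$.

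For the second claim, I would use the fact, established in Theorem \ref{theorem::oracleEff}, that $\widehat{\psi}_n$ is an asymptotically linear estimator of $\Psi_0$ whose influence function coincides with the efficient influence function of $\Psi_0 : \mathcal{M}_{np} \rightarrow \mathbb{R}$ at $P_0$. By the standard equivalence between efficiency and regularity for asymptotically linear estimators (see, e.g., \citealt{vandervaart2000asymptotic}, Chapter~25), this makes $\widehat{\psi}_n$ $P_0$--regular for $\Psi_0$ over the full nonparametric model $\mathcal{M}_{np}$. Consequently, for every local perturbation $P_{0,hn^{-1/2}} \in \mathcal{M}_{np}$ with score $s \in T_{\mathcal{M}_{np}}(P_0) = L^2_0(P_0)$, contiguity and Le Cam's third lemma give
\[
\sqrt{n}\,\{\widehat{\psi}_n - \Psi_0(P_{0,hn^{-1/2}})\} \xrightarrow[]{\;\;d\;\;} N(0, \sigma_0^2)
\]
under sampling from $P_{0,hn^{-1/2}}$, with $\sigma_0^2 = \mathrm{var}_0\{D_{0,P_0}(O)\}$ independent of $h$. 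Dividing through by $\sigma_0$ yields the stated convergence to $N(0,1)$, whether or not $P_{0,hn^{-1/2}}$ lies in $\mathcal{M}_0$.

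No single step here is a serious obstacle, since the corollary is essentially a specialization of the general theory to the partially linear ATE example; the main care is simply in checking that Theorem \ref{example::theorem::RlearnerLimitDistORACLE} delivers asymptotic linearity with the \emph{efficient} influence function of $\Psi_0$ relative to $\mathcal{M}_{np}$, so that regularity for $\Psi_0$ over $\mathcal{M}_{np}$ (not merely over $\mathcal{M}_0$) is available in the second part of the argument.
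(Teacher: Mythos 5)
Your proposal is correct and follows essentially the same route as the paper: the paper's proof is a one-line appeal to Theorem \ref{theorem::oracleRegularity} (for regularity of $\widehat{\psi}_n$ for $\Psi$ along $\mathcal{M}_0$--perturbations) together with regularity of the ADMLE for $\Psi_0$ over $\mathcal{M}_{np}$ (which, as you note, follows from efficiency via Theorem \ref{theorem::oracleEff} and yields the second claim through Le Cam's third lemma). You have simply made explicit the steps the paper leaves implicit, including the verification that Theorem \ref{example::theorem::RlearnerLimitDistORACLE} supplies the needed hypotheses.
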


To highlight the advantages of ADML estimators, consider the semiparametric ATE estimator based on the partially linear intercept model \citep{robinson1988root,imbensOverlapEstimand2006, li2019overlapWeights, d2021overlap}, corresponding to \(\mathcal{T}:=\{w \mapsto c : c \in \mathbb{R}\}\). This estimator is irregular and asymptotically biased under local perturbations outside the semiparametric model. By contrast, if \(\mathcal{T}_0\) contains the intercept model, then the partially linear ADML estimator remains regular and asymptotically unbiased over a broader class of local perturbations. In addition, by Corollary~\ref{cor::CATEinf} and Theorem~\ref{theorem::irreg}, the ADML estimator is typically less biased than the semiparametric estimator under local perturbations outside the oracle submodel. If \(\mathcal{T}_0\) is itself the intercept model, then the ADML estimator and the semiparametric estimator are asymptotically equivalent under sampling from \(P_0\) and its local perturbations in \(\mathcal{M}_{\mathrm{np}}\).

\section{Supplementary experimental results}

\label{appendix::figures}
\subsection*{Sampling from true distribution}

 \begin{figure}[H]
     \centering
     \begin{subfigure}[b]{0.48\linewidth} 
         \includegraphics[width=0.5\linewidth]{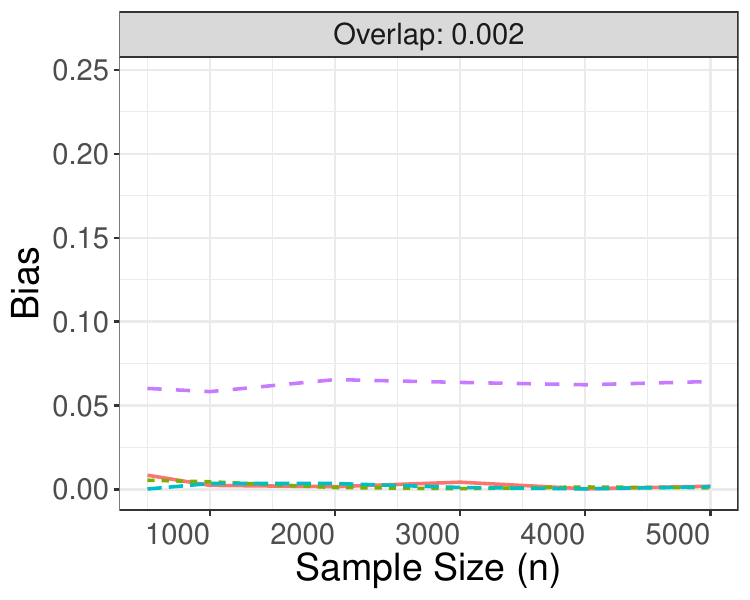}\includegraphics[width=0.5\linewidth]{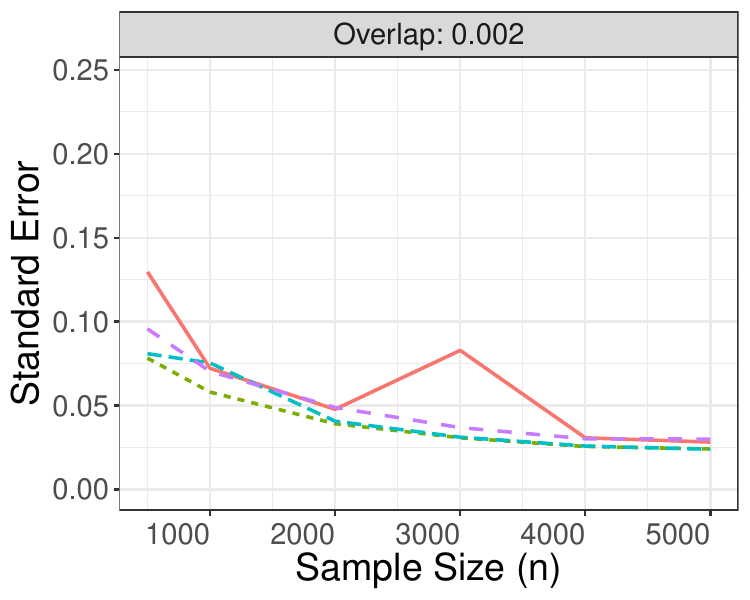}
     \includegraphics[width=0.5\linewidth]{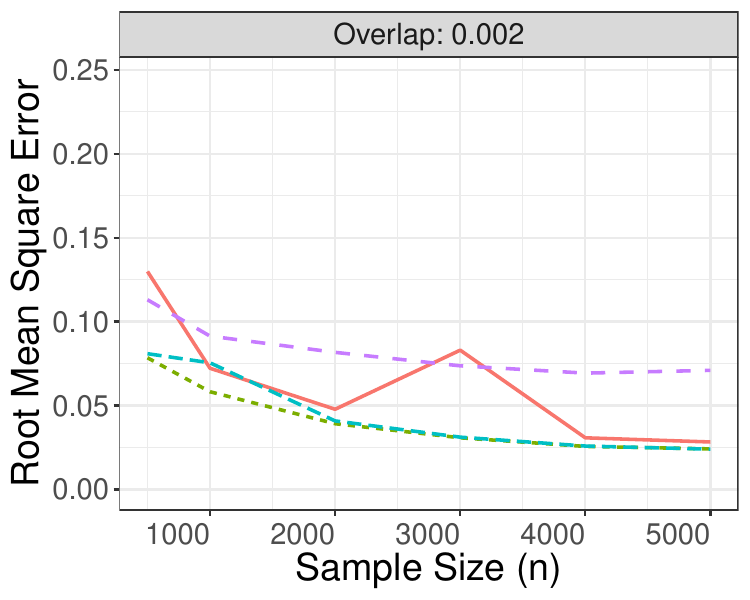}\includegraphics[width=0.5\linewidth]{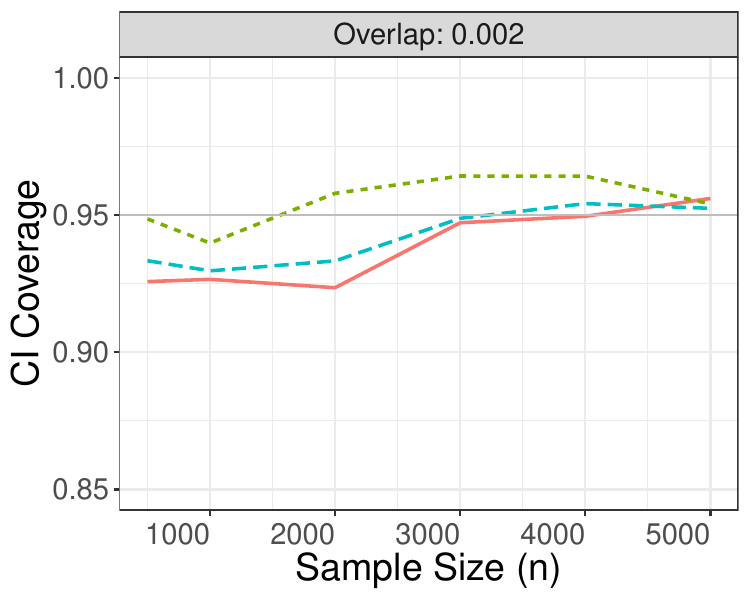}
     \subcaption{Linear setting}
     \end{subfigure} \begin{subfigure}[b]{0.48\linewidth} 
         \includegraphics[width=0.5\linewidth]{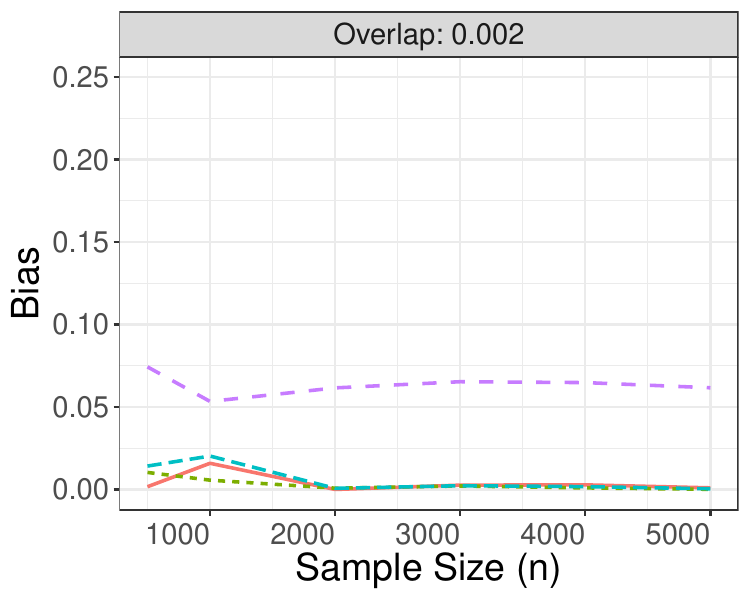}\includegraphics[width=0.5\linewidth]{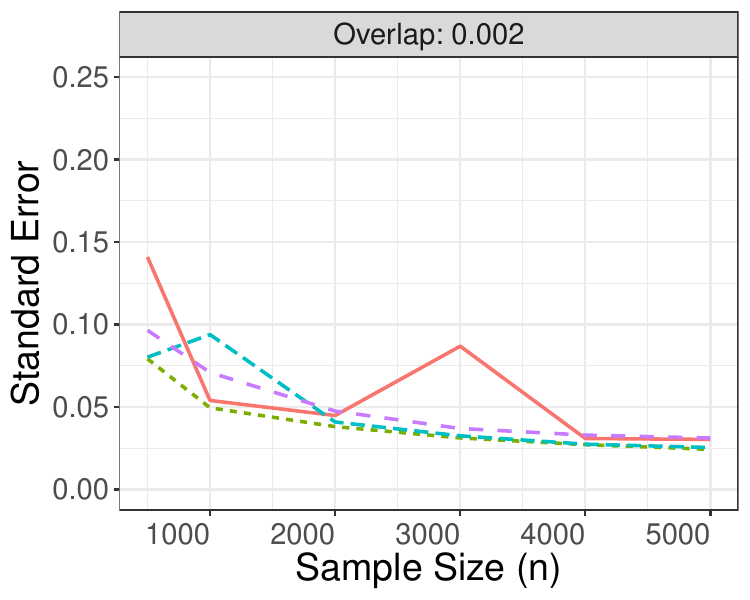}
     \includegraphics[width=0.5\linewidth]{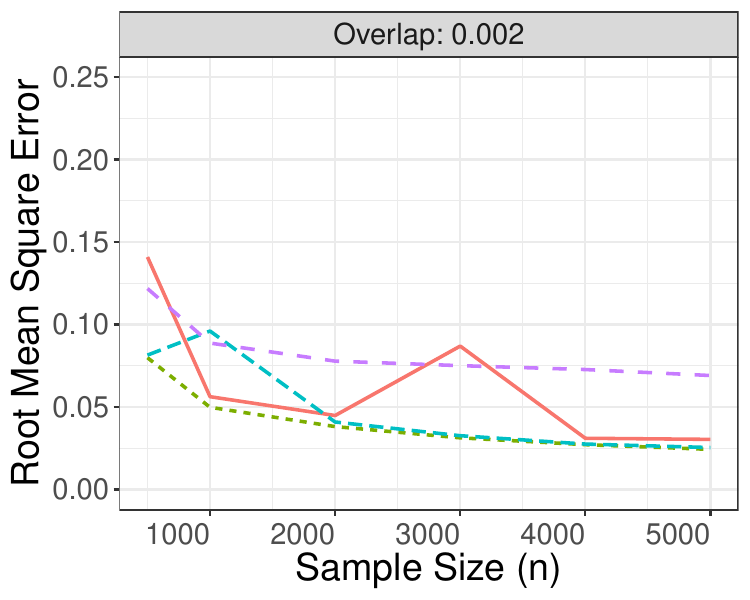}\includegraphics[width=0.5\linewidth]{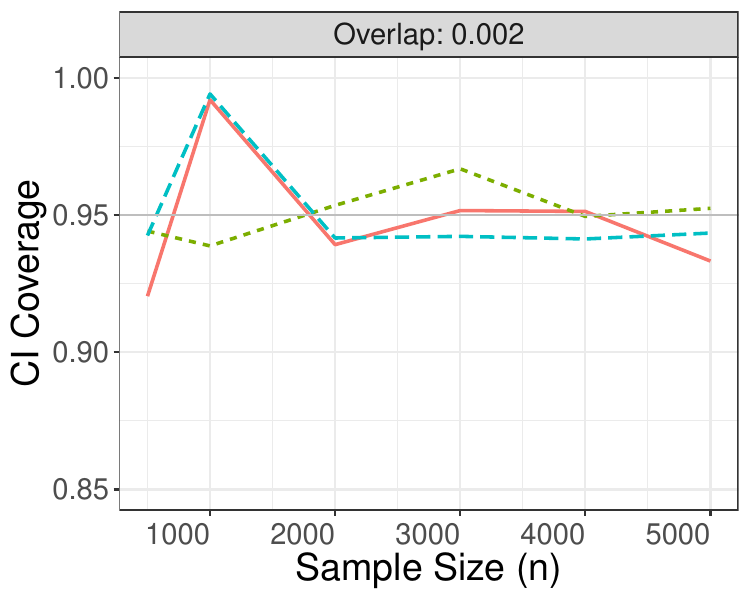}
     \subcaption{Nonlinear setting}
     \end{subfigure}
            \includegraphics[width=0.7\linewidth]{plots/Legend.png}

     \caption{Comparison of empirical bias, standard error and root mean squared error of estimator, and coverage of nominal 95\% confidence interval for partially linear and plug-in HAL-ADML estimators, prespecified semiparametric estimator (assuming constant CATE), and nonparametric AIPW estimator, under sampling from a fixed distribution satisfying linearity and with varying degrees of treatment overlap. Coverage probabilities for intervals based on the prespecified semiparametric estimator were consistently poor, exceeding the y-axis range.}
     \label{fig:my_label}
 \end{figure}

\subsection*{Sampling from least-favorable local alternative}

  \begin{figure}[H]
     \centering
   \begin{subfigure}[b]{0.48\linewidth} 
     \includegraphics[width=0.5\linewidth]{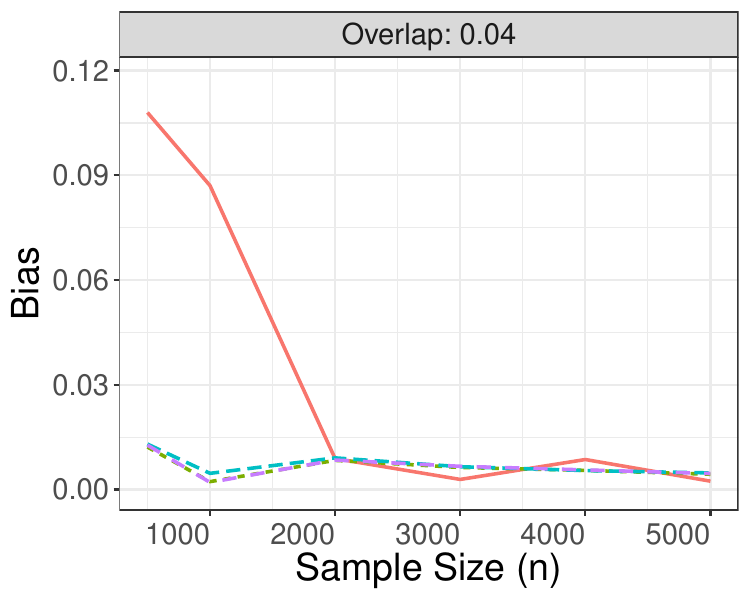}\includegraphics[width=0.5\linewidth]{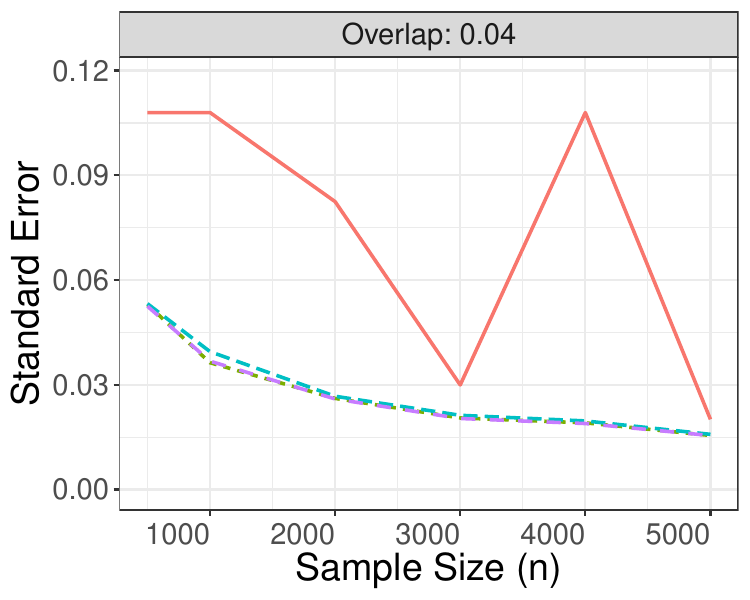}
     \includegraphics[width=0.5\linewidth]{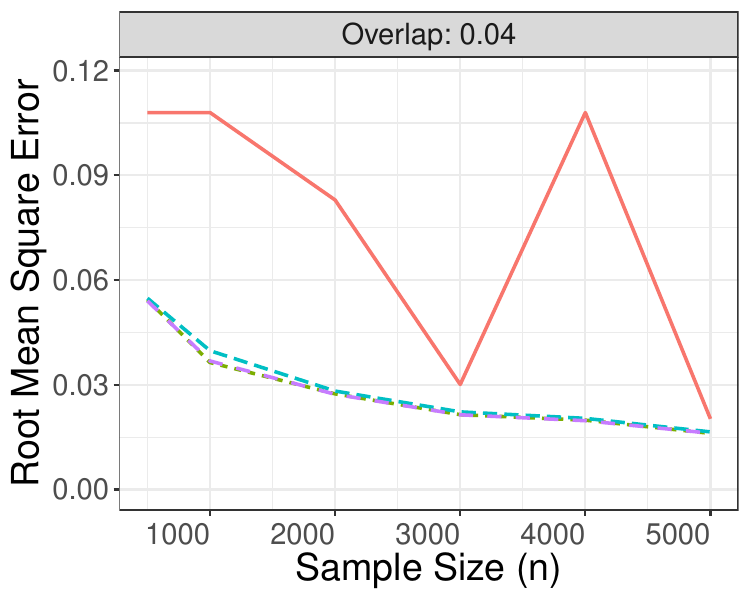}\includegraphics[width=0.5\linewidth]{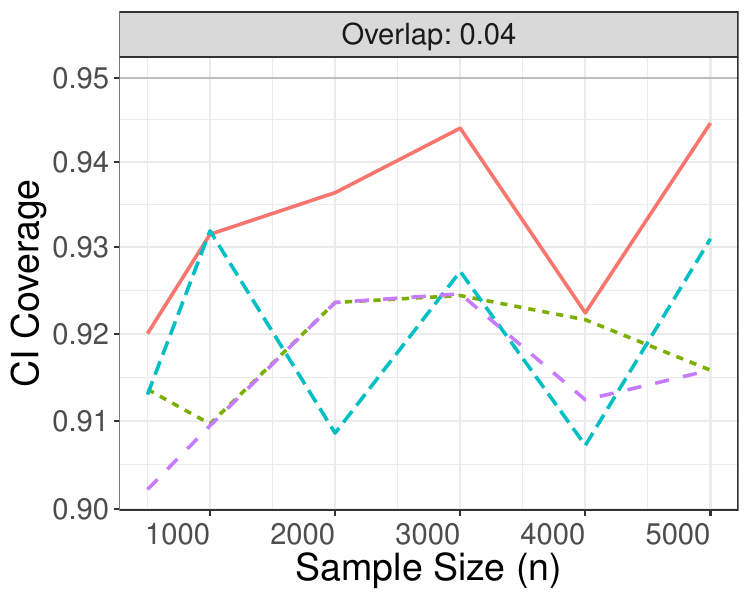}
     \subcaption{Nonlinearity and moderate overlap  ($c_0 \approx 0.04$)}
     \end{subfigure} \hfill \begin{subfigure}[b]{0.48\linewidth} 
     \includegraphics[width=0.5\linewidth]{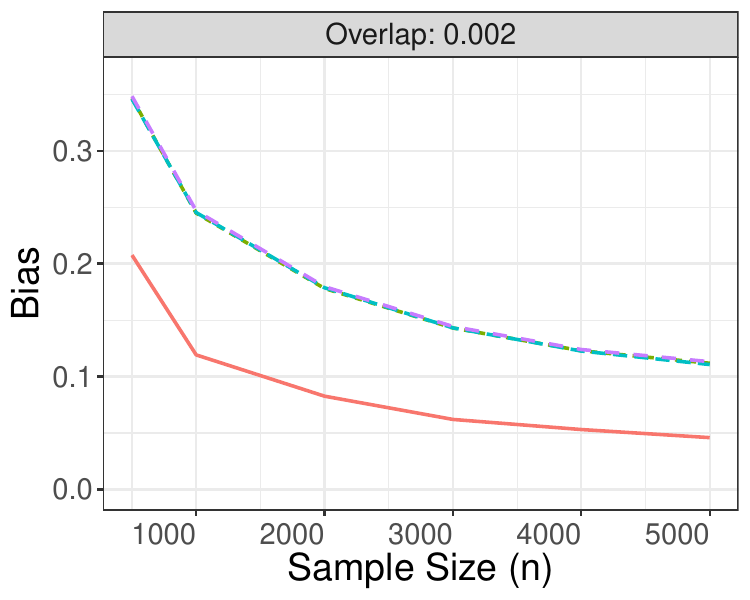}\includegraphics[width=0.5\linewidth]{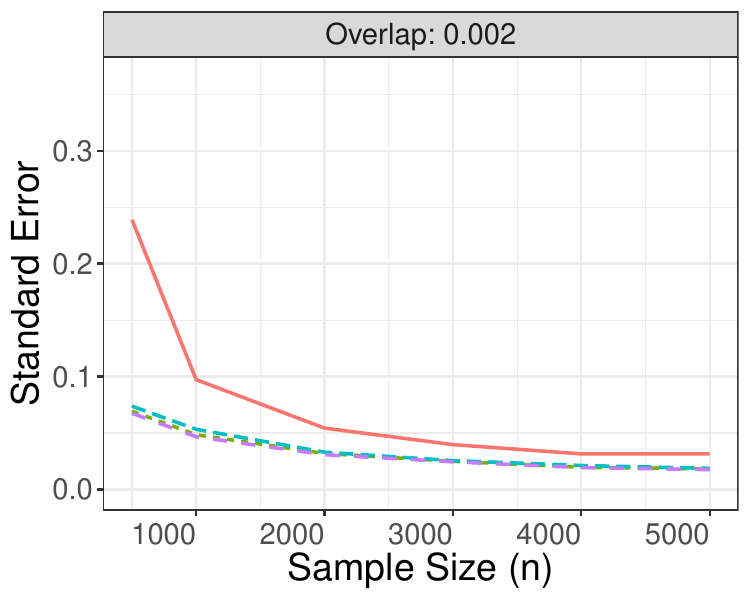}
     \includegraphics[width=0.5\linewidth]{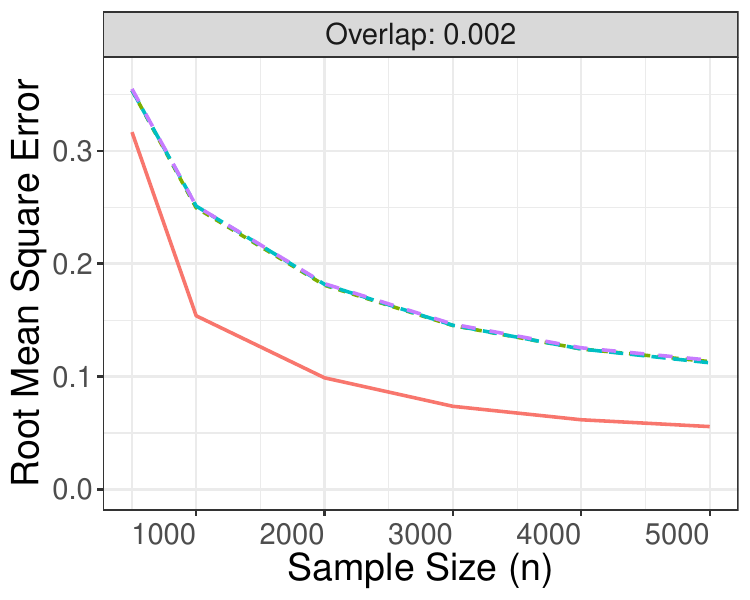}\includegraphics[width=0.5\linewidth]{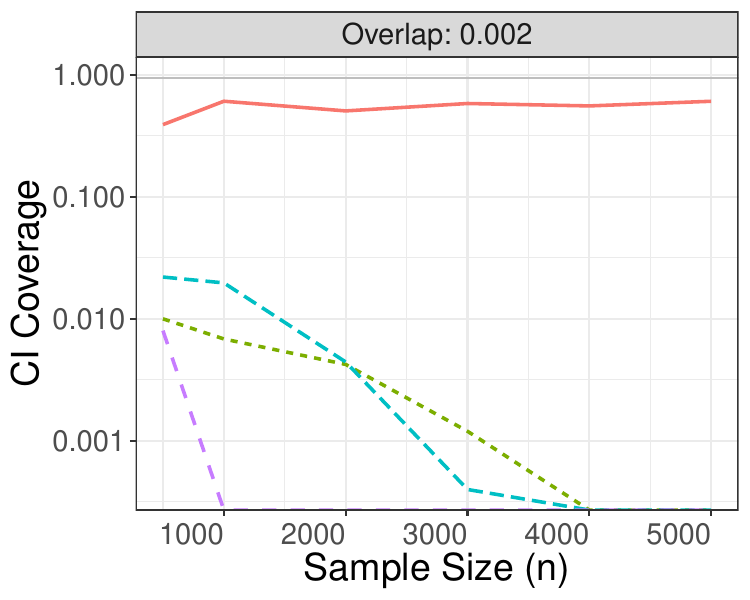}
      \subcaption{Nonlinearity and limited overlap  ($c_0 \approx 0.002$)}
     \end{subfigure}  

     \begin{subfigure}[b]{0.48\linewidth} 
         \includegraphics[width=0.5\linewidth]{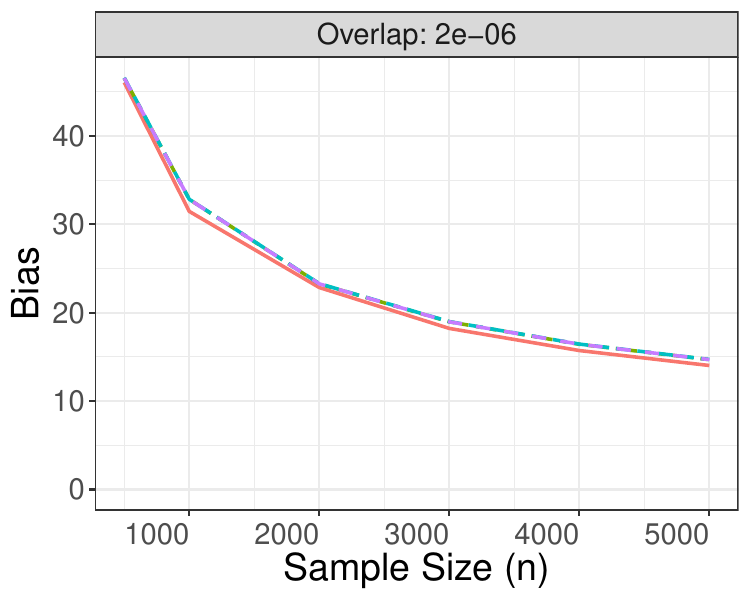}\includegraphics[width=0.5\linewidth]{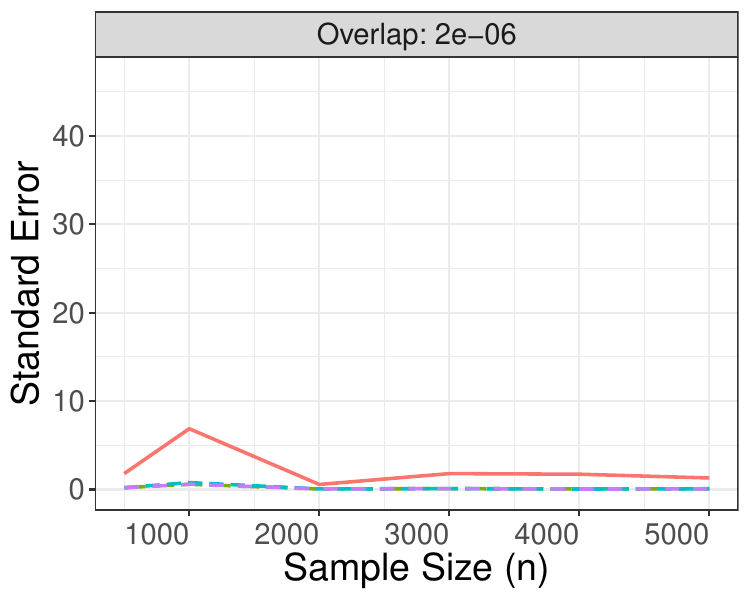}
     \includegraphics[width=0.5\linewidth]{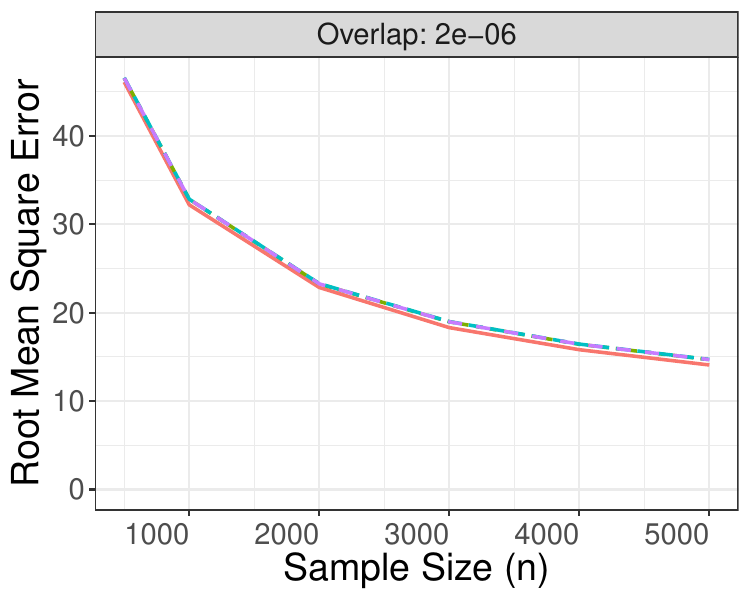}\includegraphics[width=0.5\linewidth]{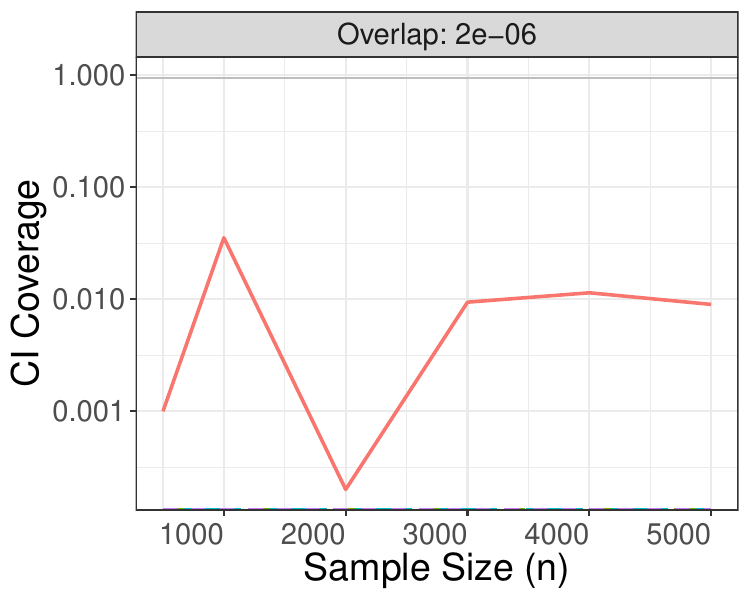}
     \subcaption{Linearity and no overlap ($c_0 \approx 10^{-6}$)}
     \end{subfigure} \begin{subfigure}[b]{0.48\linewidth} 
         \includegraphics[width=0.5\linewidth]{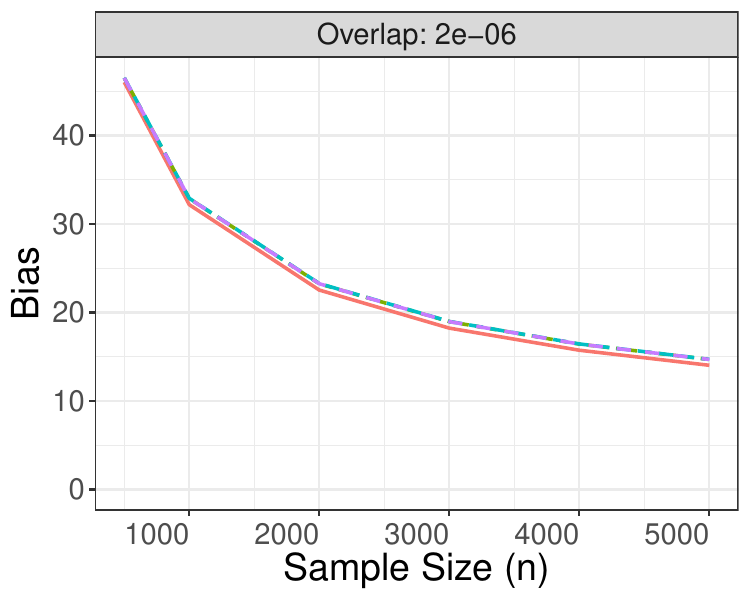}\includegraphics[width=0.5\linewidth]{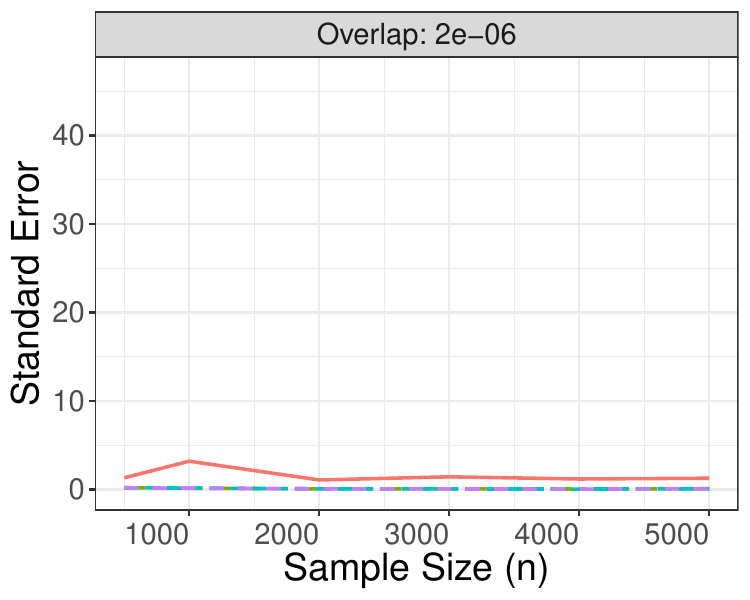}
     \includegraphics[width=0.5\linewidth]{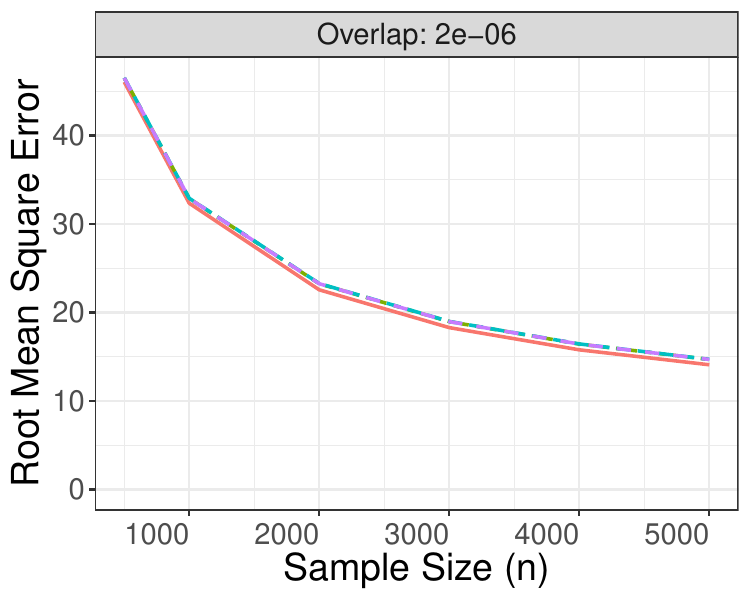}\includegraphics[width=0.5\linewidth]{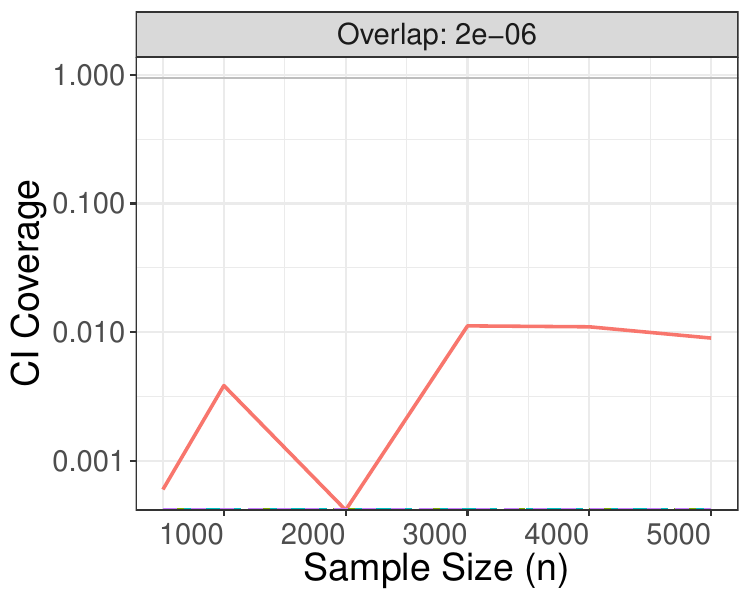}
     \subcaption{Nonlinearity and no overlap ($c_0 \approx 10^{-6}$)}
     \end{subfigure}
            \includegraphics[width=0.8\linewidth]{plots/Legend.png}

     \caption{Comparison of empirical bias, standard error and root mean squared error of estimator, and coverage of nominal 95\% confidence interval for partially linear and plug-in HAL-ADML estimators, prespecified semiparametric estimator (assuming constant CATE), and nonparametric AIPW estimator, under sampling from a least-favorable local perturbation of a distribution satisfying linearity in outcome regression and with varying degrees of treatment overlap. }
     \label{fig:my_label}
 \end{figure}

\section{Proofs for Section \ref{section::oracleparameter}}

 \subsection{Proof of Theorem \ref{theorem::EIFmain}}

\begin{proof}[Proof of Theorem \ref{theorem::EIFmain}]

Let $P \in \mathcal{M}_{\mathrm{np}}$ be arbitrary.  By  \ref{cond::losssmooth}, a minimizing solution $Q_P \in  \argmin_{Q \in \mathcal{M}_0} P \ell (\cdot, Q)$ satisfies the first-order optimality conditions:
$$ \frac{d}{dt} P  \ell(\cdot, Q_t)   \big|_{t=0} = 0,$$
for all regular paths $(Q_t: t \in (-\varepsilon, \varepsilon)) \subseteq \mathcal{M}_0$ with $Q_t = Q_P$ at $t=0$.

For some $\delta > 0$, let $(P_t : t \in (-\delta, \delta)) \subseteq \mathcal{M}_{\mathrm{np}}$ be a regular path through $P$ such that $dP_t = (1+ ts) dP$ for a bounded score $s \in L^2_0(P)$ orthogonal to the loss-based tangent space $\mathcal{S}_{\mathcal{M}_0}(P)$. Since ${\cal M}_{\mathrm{np}}$ is a convex nonparametric model and the score $s$ is bounded, such a path necessarily exists for sufficiently small $\delta > 0$. By \ref{cond::smoothprojection}, there exists a smooth submodel $t \mapsto Q_{P_t} \in \Pi_0 P_t$ with score $s_{0}$ at $t = 0$. By the first order optimality condition of $Q_{P_t} \in \Pi_0 P_t$, we have, for all $t$,
\begin{align*}
 P_t  \dot{\ell}(\cdot, Q_{P_t})(v)   \big|_{t=0} =  0.
\end{align*}
Taking the derivative of both sides and applying the chain rule, we find, for all $v \in T_{\mathcal{M}_0}(Q_P)$, that
\begin{align}
 P \{s \dot{\ell}(\cdot, Q_{P})(v)\}   &=  - P \ddot{\ell}(\cdot, Q_{P})(v, s_0) \nonumber\\ 
  &=  - P \ddot{\ell}_{\Pi_0 P}(v, s_0), \label{eqn::pathwiseofloss}
\end{align}
where the final equality uses invariance of $Q \mapsto P \dot{\ell}(\cdot, Q) $ over the solution set such that $P \dot{\ell}(\cdot, \Pi_0 P)(s_{0}, s_{0,P})$ is well-defined and invariant of the choice of $Q_P \in \Pi_0 P$.

By assumption \ref{cond::oraclepathwise} , $\Psi$ is pathwise differentiable on $\mathcal{M}_0$ with respect to the Hessian inner-product $(v,s) \mapsto P\ddot{\ell}_{\Pi_0 P}(v, s) $. Thus, by definition, for any smooth submodel $(Q_t : t) \subset \mathcal{M}_0$ satisfying $Q_t = \Pi_0 P$ at $t = 0$ with score function $s$, the derivative $\frac{d}{dt} \Psi(Q_t) \Big|_{t=0}$ is a bounded linear operator in the score $s$. Note that the inner product $P\ddot{\ell}_{\Pi_0 P}$ is positive definite with respect to the Hilbert space $\{s + N \mid s \in T_{\mathcal{M}_0}(P)\}$ induced by the equivalence relation $s_1 \sim s_2$ if and only if $s_1 \in s_2 + N$, where $N = \{h \in T_{\mathcal{M}_0}(P) \mid P\ddot{\ell}_{\Pi_0 P}(h, h) = 0\}$ denotes the null space of the inner product. Hence, by Riesz representation theorem, there exists a (potentially non-unique) Riesz representer $s_{0,P} \in T_{\mathcal{M}_0}(\Pi_0 P)$ such that the pathwise derivative satisfies:
\[
\frac{d}{dt} \Psi(Q_t) \Big|_{t=0} = P\ddot{\ell}_{\Pi_0 P}(s, s_{0,P}).
\]

Now, taking the pathwise derivative along the path $t \mapsto Q_{P_t}$, it follows that
\begin{align*}
    \frac{d}{dt} \Psi(\Pi_0 P_t) \big|_{t = 0} &=   \frac{d}{dt} \Psi(Q_t) \big|_{t = 0} \\
    &= P \ddot{\ell}(\cdot, Q_P)(s_{0}, s_{0,P})\\
    &= P \ddot{\ell}_{\Pi_0 P}(s_{0}, s_{0,P}),
\end{align*}
where the final equality uses invariance of $Q \mapsto P \ddot{\ell}_{Q} $ over the solution set. Thus, by \eqref{eqn::pathwiseofloss}, we have that
$$\frac{d}{dt} \Psi(\Pi_0 P_t) \big|_{t = 0} =  -  P \{s \dot{\ell}(\cdot,  \Pi_0 P)(s_{0,P}))\}.$$
The first-order optimality conditions of the risk minimizer $Q_P \in \Pi_0 P$ imply that $P\dot{\ell}(\cdot, \Pi_0 P)(s_{0,P})) = P\dot{\ell}(\cdot, Q_P)(s_{0,P}))$ is mean-zero and is, thus, an element of $L^2_0(P)$. The score $s$ was an arbitrary bounded score in $T_{\mathcal{M}_{\mathrm{np}}}(P) = L^2_0(P)$. Since $T_{\mathcal{M}_{\mathrm{np}}}(P)$ is a closed linear space and bounded scores are dense in $L^2_0(P)$, we have 
$$\frac{d}{dt} \Psi(\Pi_0 P_t) \big|_{t = 0} =  -  P \{s \dot{\ell}(\cdot,  \Pi_0 P)(s_{0,P}))\},\, \forall s \in L^2_0(P).$$
Consequently, we conclude that $\dot{\ell}(\cdot, \Pi_0 P)(s_{0,P})$ is a gradient of $\Psi_0$ with respect to the nonparametric statistical model and that $\Psi_0$ is pathwise differentiable at $P$. Since $\dot{\ell}(\cdot, \Pi_0 P)(s_{0,P})$ is an element of the tangent space, it follows that it is the canonical gradient (or efficient influence function) of $\Psi_0$ at $P$. Although the Riesz representer $s_{0,P}$ may be nonunique, the EIF $\dot{\ell}(\cdot, \Pi_0 P)(s_{0,P})$ is necessarily unique by the Riesz representation theorem applied to the pathwise derivative operator $s \mapsto \frac{d}{dt} \Psi(\Pi_0 P_t) \big|_{t = 0}$ and the $L^2_0(P)$ inner product. Hence, the EIF $\dot{\ell}(\cdot, \Pi_0 P)(s_{0,P})$ is invariant with respect to the choice of $s_{0,P}$.

\end{proof}

\subsection{A characterization of the EIF as a loss derivative}

Theorem \ref{theorem::EIFmain} shows that, under suitable conditions, \(\Psi_0\) is pathwise differentiable and that its efficient influence function is given by a score of the loss \(\ell\) in the direction of the Hessian gradient \(s_{0,P}\). The next theorem provides a partial converse: assuming \(\Psi_0\) is pathwise differentiable, its efficient influence function must be representable as a score of the loss \(\ell\).

 \begin{enumerate}[label=(D\arabic*), ref=D\arabic*,series=cond]
\item \textit{Invariance of $\Psi$ over solution set:} For all $P \in \mathcal{M}_{\mathrm{np}}$, $\argmin_{Q \in \mathcal{M}_0} P\ell(\cdot, Q)$ is nonempty and $\Psi(Q) = \Psi(Q')$ for all $Q, Q' \in \argmin_{Q \in \mathcal{M}_0} P\ell(\cdot, Q) $.   \label{cond::lossProjIdent}
\item  \textit{Pathwise differentiability of $\Psi_0$ at $P_0$:} The oracle projection parameter $\Psi_0:\mathcal{M}_{\mathrm{np}}\rightarrow\mathbb{R}$  is pathwise differentiable at $P_0$ with efficient influence function $o \mapsto D_0(o; P_0)$.
\label{cond::oracleParamPathwise}
\item \textit{Risk minimizer determined by score equations:} For each $P \in \mathcal{M}_{\mathrm{np}}$, $Q_P \in \argmin_{Q \in \mathcal{M}_0} P \ell(\cdot, Q)$ if and only if $\frac{d}{dt} P \ell(\cdot,Q_t)\big |_{t=0} = 0$ for each regular path $\{Q_t: t \in \mathbb{R}\} \subseteq \mathcal{M}_0$ with $Q_t = Q_P$ at $t=0$.   \label{cond::lossProjDeriv}
\end{enumerate}

In the following theorem, we define the loss-based tangent space $\mathcal{S}_{\mathcal{M}_0}(P)\subseteq L^2_0(P_0)$ of the oracle submodel $\mathcal{M}_0$ at any $P\in {\cal M}$ as the closure of the linear span of $P$--weak G\^{a}teaux derivatives (i.e., $\ell$-scores) of the form $\frac{d}{dt}  \ell(\cdot, Q_t)  \big|_{t = 0}$, where $\{Q_{t}: t \in \mathbb{R}\} \subseteq \mathcal{M}_0$ is a regular path with $Q_{t} = \Pi_0P$ at $t=0$.

 \begin{theorem}[Efficient influence function of oracle parameter]
 Under Conditions \ref{cond::lossProjIdent}-\ref{cond::lossProjDeriv},  the efficient influence function $D_{0,P_{0}}$ of the oracle projection parameter $\Psi_0 :{\cal M}_{\mathrm{np}}\rightarrow\mathbb{R}$ at $P_0$ is an element of $\mathcal{S}_{\mathcal{M}_0}(P_0)$.
As a consequence, if $\mathcal{S}_{{\cal M}_0}(P_0)$ is a subspace of the tangent space $T_{{\cal M}_0}(P_0)$ at $P_0$ for model ${\cal M}_0$, then $D_{0,P_0}$ equals the $P_0$--efficient influence function of $\Psi:{\cal M}_0\rightarrow\mathbb{R}$. 
\label{theorem::lossbasedEIF}
\end{theorem}

 The loss-based tangent space $\mathcal{S}_{\mathcal{M}_0}(P)$ consists of loss-based scores of paths through $\Pi_0P$ that remain in the oracle model, and so, it is a subspace of $L^2_0(P)$. For the loglikelihood loss $\ell(\cdot, Q)=-\log\left(\frac{dQ}{d\mu}\right)$, the loss-based tangent space $\mathcal{S}_{{\cal M}_0}(P_0)$ equals the tangent space $T_{P_0}({\cal M}_0)$ at $P_0$ for the model ${\cal M}_0$. Thus, as a consequence of Theorem \ref{theorem::lossbasedEIF}, the efficient influence function of the oracle parameter $\Psi_0$ for the loglikelihood loss at $P_0 \in \mathcal{M}_0$ is equal to the efficient influence function of the parameter $\Psi: \mathcal{M}_0 \rightarrow \mathbb{R}$ for the oracle model $\mathcal{M}_0$. In such cases, an efficient estimator for $\Psi_0$ at $P_0$ performs as well in a local asymptotic minimax sense as an efficient estimator that knew the oracle model $\mathcal{M}_0$ beforehand.

 \begin{proof}[Proof of Theorem \ref{theorem::lossbasedEIF}]
 
Let $P \in \mathcal{M}_{\mathrm{np}}$ be arbitrary. By \ref{cond::lossProjDeriv}, a minimizing solution $Q_P \in  \argmin_{Q \in \mathcal{M}_0} P \ell (\cdot, Q)$ satisfies
$$ \frac{d}{dt} P  \ell(\cdot, Q_t)   \big|_{t=0}  = P \left\{ \frac{d}{dt} \ell(\cdot, Q_t) \big|_{t=0} \right\} = 0,$$
for all regular paths $(Q_t: t \in (-\varepsilon, \varepsilon)) \subseteq \mathcal{M}_0$ with $Q_t = Q_P$ at $t=0$. 

For some $\delta > 0$, let $(P_t : t \in (-\delta, \delta)) \subseteq \mathcal{M}_{\mathrm{np}}$ be a regular path through $P$ such that $dP_t = (1+ ts) dP$ for a bounded score $s \in L^2_0(P)$ orthogonal to the loss-based tangent space $\mathcal{S}_{\mathcal{M}_0}(P)$. Since ${\cal M}_{\mathrm{np}}$ is a convex nonparametric model and the score $s$ is bounded, such a path necessarily exists for sufficiently small $\delta > 0$. By Condition \ref{cond::lossProjDeriv}, for all regular paths $(Q_u: u \in (-\varepsilon, \varepsilon)) \subseteq \mathcal{M}_0$ through $Q_P$ at $u = 0$, we have
\begin{align*}
P_t \left\{ \frac{d}{du} \ell(\cdot, Q_u) \big|_{u=0} \right\}& = \int \left\{ \frac{d}{du} \ell(o, Q_u) \big|_{u=0} \right\}\{ 1+ t s(o)\} P(do)\\
&= \int \left\{ \frac{d}{du} \ell(o, Q_u) \big|_{u=0} \right\} P(do)  + t \int \left\{ \frac{d}{du} \ell(o, Q_u) \big|_{u=0} \right\} s(o) P(do)\ .
\end{align*}
 The first term on the right-hand side is $0$ since $Q_P$ is a minimizer of $Q \mapsto P \ell(\cdot, Q)$ over $Q \in \mathcal{M}_0$. The second term on the right-hand side is also $0$ since $s$ is centered under $P$ and, by construction, orthogonal to $\frac{d}{du} \ell(\cdot, Q_u) \big|_{u=0}  \in \mathcal{S}_{\mathcal{M}_0}(P)$. It follows that $P_t \left\{ \frac{d}{du} \ell(\cdot, Q_u) \big|_{u=0} \right\} = 0$ for all such paths $(Q_u: u \in (-\varepsilon, \varepsilon)) \subseteq \mathcal{M}_0$ and $t$ sufficiently small. By Condition \ref{cond::lossProjDeriv}, this can only occur if $Q_P \in \argmin_{Q \in \mathcal{M}_0} P_t \ell (\cdot, Q)$ for all $t$ sufficiently small.

 By Condition \ref{cond::oracleParamPathwise}, $\Psi_0 = \Psi \circ \Pi_0$ is pathwise differentiable at $P_0$; thus, its efficient influence function $D_{0,P_0}$ exists and is contained in $ L^2_0(P_0)$. For some sufficiently small $\delta > 0$, let $(P_t  : t \in (-\delta, \delta)) \subseteq \mathcal{M}_{\mathrm{np}}$ be a regular path through $P_0$ such that $dP_t = (1+ ts) dP_0$ with score $s \in L^2_0(P_0)$ orthogonal to the loss-based tangent space $\mathcal{S}_{\mathcal{M}_0}(P_0)$. Then, by the above and \ref{cond::lossProjIdent}, our chosen path $(P_t: t \in (-\varepsilon, \varepsilon)$ is such that
 $\Psi_0(P_t) = \Psi(\Pi_0 P_t) = \Psi(Q_{P_0})$
 for all $t$ sufficiently small. Thus, upon differentiation, we find that
 \begin{align*}
   0 =  \frac{d}{dt} \Psi_0(P_t) \big |_{t=0} = \langle D_{0,P_0}, s \rangle_{L^2(P_0)}.
 \end{align*}
 Thus, $D_{0,P_0}$ is necessarily orthogonal to the score $s$ in $L^2_0(P_0)$. However, $s$ was an arbitrary bounded score taken to be orthogonal to the loss-based tangent space $\mathcal{S}_{\mathcal{M}_0}(P_0)$. Since $\mathcal{S}_{\mathcal{M}_0}(P)$ is a closed linear space and bounded scores are dense in $L^2_0(P_0)$, we must have that $D_{0,P_0} \in \mathcal{S}_{\mathcal{M}_0}(P)$. The result then follows.

 \end{proof}

\subsection{Proof of Theorem~\ref{theorem::boundedlinearEIF}}
\begin{proof}[Proof of Theorem~\ref{theorem::boundedlinearEIF}]
This theorem is a direct corollary of the efficient influence function derivations in the proofs of Theorems~4.1 and 4.2 of \cite{chernoRegRiesz}; see also \cite{chernozhukov2018auto} for additional details.

We provide an alternative proof by applying Theorem~\ref{theorem::EIFmain} directly. We begin by verifying its conditions. Although the loss-based projection \(\Pi_0 P\) is nonunique, both the parameter \(\Psi(P)\) and the loss \(\ell(\cdot,P)\) depend only on the covariate distribution and outcome regression of \(P\), so the invariance condition \ref{cond::invariance} holds. In addition, the loss function is smooth at each \(Q \in \mathcal{M}_0\), so Condition~\ref{cond::losssmooth} is satisfied.

More specifically, \(\dot{\ell}_Q(s)\) and \(P\ddot{\ell}_Q(s,v)\) are given by
\[
\dot{\ell}_Q(s): o \mapsto -E_Q[\{Y-\mu_Q(A,W)\}s(O)\mid A=a,W=w]\{y-\mu_Q(a,w)\} - E_Q[s(O)\mid A=a,W=w]
\]
and
\[
P\ddot{\ell}_Q(s,v): o \mapsto E_Q[E_Q[\{Y-\mu_Q(A,W)\}s(O)\mid A,W]E_Q[\{Y-\mu_Q(A,W)\}v(O)\mid A,W]] + E_Q[E_Q[s(O)\mid A,W]E_Q[v(O)\mid A,W]].
\]
Furthermore, \(P\ddot{\ell}_Q\) is a positive semidefinite bilinear form on \(T_{\mathcal{M}_0}(Q)\), so Condition~\ref{cond::innerproductloss} holds. In this example, Condition~\ref{cond::smoothprojection} can be weakened to require only that \(P \mapsto \mu_{P,\mathcal{H}_0}\) be a smooth Hilbert-valued map. This follows because \(P \mapsto \mu_{P,\mathcal{H}_0}\) is the composition of the smooth maps \(P \mapsto \mu_P\) and \(\mu_P \mapsto \mu_{P,\mathcal{H}_0}\) \citep{luedtke2024one}. Finally, Condition~\ref{cond::oraclepathwise} holds whenever the regression functional \(\mu \mapsto E_P[m(W,\mu)]\) is a bounded linear functional on \(\mathcal{H}_0\). The conclusion then follows by applying Theorem~\ref{theorem::EIFmain}.

 We now proceed with the formal proof. Recall that  $\ell(O,Q) := \frac{1}{2}\{Y - \mu_Q(A,W)\}^2 - \log \frac{dQ_X}{d\mu_X}(X).$. We claim that $\dot{\ell}_Q(s): o \mapsto -E_Q[\{Y - \mu_Q(A,W)\}s(O) \mid A = a , W = w] \cdot \{y - \mu_Q(a,w)\} -  E_Q[s(O) \mid A = a, W = w]$ and $P\ddot{\ell}_Q(s,v): o \mapsto E_Q[E_Q[\{Y - \mu_Q(A,W)\}s(O) \mid A, W]E_Q[\{Y - \mu_Q(A,W)\}v(O) \mid A, W]]  + E_Q[E_Q[s(O) \mid A, W]E_Q[v(O) \mid A, W]]$. To see this, first note that
 \begin{align*}
     \frac{d}{dt} \mu_{Q_t} \big |_{t=0} &= \frac{d}{dt} \int y Q_t(dy \mid A=a, W = w) \big |_{t=0}  \\
     &= \int y  \frac{d}{dt}  \log Q_t(dy \mid A=a, W = w) \big |_{t=0} dQ(y \mid A =a, W =w)\\
      &= \int y \{s(O) - E_Q[s(O) \mid A, W]\} dQ(y \mid A =a, W =w)\\
       &= \int \{y - \mu_Q(A,W)s(O) \} dQ(y \mid A =a, W =w)\\
        &= E_Q[\{Y - \mu_Q(A,W)\}s(O) \mid A = a , W = w] ,
 \end{align*}
 where we used that the score component $\frac{d}{dt}  \log Q_t(dy \mid A=a, W = w)$ is equal to the projection of $S(O)$ onto the subtangent space $L^2_0(A,W)$ consisting of functions of $O$ that are mean zero.
 Hence,
 \begin{align*}
\frac{d}{dt} \frac{1}{2} \{y - \mu_{Q_t}(a,w)\}^2 \big |_{t=0} &= - \{y - \mu_{Q}(a,w)\}   \frac{d}{dt} \mu_{Q_t} \big |_{t=0}\\
 &= - \{y - \mu_{Q}(a,w)\} E_Q[\{Y - \mu_Q(A,W)\}s(O) \mid A = a , W = w] .
 \end{align*}
 Next, note that $\frac{d}{dt} \log \frac{dQ_{t,X}}{d\mu_X}(X) \big |_{t=0} = E[s(O) \mid X] - E[s(O)] = E[s(O) \mid X] $. Hence, $\dot{\ell}_Q(s): o \mapsto -E_Q[\{Y - \mu_Q(A,W)\}s(O) \mid A = a , W = w] \cdot \{y - \mu_Q(a,w)\} -  E_Q[s(O) \mid A = a, W = w]$. The second derivative is derived analogously and given by $P\ddot{\ell}_Q(s,v): O \mapsto E_Q[E_Q[\{Y - \mu_Q(A,W)\}s(O) \mid A, W]E_Q[\{Y - \mu_Q(A,W)\}v(O) \mid A, W]]  + E_Q[E_Q[s(O) \mid A, W]E_Q[v(O) \mid A, W]]$. It is immediate that $(s,v) \mapsto P\ddot{\ell}_Q(s,v)$ is a bilinear functional with $P\ddot{\ell}_Q(s,v) \geq 0$. Hence, it defines a semidefinite inner product and \ref{cond::innerproductloss} holds. By Riesz representation theorem, there exists a (potentially non-unique) Riesz representer $s_{0,P}$ in the closure of $T_{\mathcal{M}_0}(\Pi_0 P)$ such that $d\Psi(\Pi_0 P)(s_0) = P\ddot{\ell}_{\Pi_0 P}(s_0, s_{0,P})$ for all $s_0 \in T_{\mathcal{M}_0}(\Pi_0P)$.

First, we determine the precise form of $\{s \in T_{\mathcal{M}_0}(P): d\Psi(\Pi_0 P)(s) = 0\}^{\perp}$. Note that $Q_t \in \mathcal{M}_0$ if and only if $\mu_{Q_t} \in \mathcal{H}_0$ for all $Q_t$. Let $h \in \mathcal{H}_0$ and consider the score $s: o \mapsto \frac{h(a,w)}{v_Q(a,w)}\{y - \mu_Q(a,w)\} + s'(A,W)$, where $v_Q(a,w) := E_Q[\{Y - \mu_Q(A,W)\}^2 \mid A = a, W = w]$ and $s' \in L^2_0(A,W)$. Then, from our previous calculation, 
\begin{align*}
  \frac{d}{dt} \mu_{Q_t}(o) \Big|_{t=0} &= E_Q[\{Y - \mu_Q(A,W)\}s(O) \mid A = a , W = w]  \\
  &=  E_Q[v_Q(A,W)^{-1}\{Y - \mu_Q(A,W)\}^2h(A,W) \mid A = a , W = w]  \\
  &= h(A,W).
\end{align*}
Since $h \in \mathcal{H}_0$, it follows that $\mu_{Q_t}$ locally remains in the model $\mathcal{H}_0$ at $t = 0$ and, hence, $Q_t$ locally remains in the model $\mathcal{M}_0$. We conclude that
\[
 \left\{ o \mapsto s'(a,w) + \frac{h(a,w)}{v_Q(a,w)}\{y - \mu_Q(a,w)\}: h \in \mathcal{H}_0, s' \in L^2_0(A,W) \right\} \subseteq T_{\mathcal{M}_0}(Q).
\]
Moreover, since these submodels generate all possible fluctuation directions $h \in \mathcal{H}_0$ for fluctuating $\mu_Q$, we can show that
\[
\{s \in T_{\mathcal{M}_0}(Q): d\Psi(Q)(s) = 0\}^{\perp} \subseteq \left\{ o \mapsto s'(a,w) + \frac{h(a,w)}{v_Q(a,w)}\{y - \mu_Q(a,w)\}: h \in \mathcal{H}_0, s' \in L^2_0(A,W) \right\} .
\]
For any $s, v \in \{s \in T_{\mathcal{M}_0}(Q): d\Psi(Q)(s) = 0\}^{\perp}$, we can show that
\begin{align*}
P\ddot{\ell}_Q(s,v)   & = E_Q[E_Q[\{Y - \mu_Q(A,W)\}^2 \frac{h_s(a,w)}{v_Q(a,w)} \mid A, W]E_Q[\{Y - \mu_Q(A,W)\}^2  \frac{h_v(a,w)}{v_Q(a,w)} \mid A, W]]   + E_Q[s'(A,W) v'(A,W)] \\
  &=  E_Q[h_s(A,W)h_v(A,W)]   + E_Q[s'(A,W) v'(A,W)],
\end{align*}
which can be verified to be a positive definite inner product on $\{s \in T_{\mathcal{M}_0}(Q): d\Psi(Q)(s) = 0\}^{\perp}$, since $h_s = 0$ and $s' = 0$ if and only if $s = 0$.  We will compute the unique Riesz representer restricted to this subspace and show that it equals $s_{0,P}$.

By \ref{cond::boundedlinearfun}, there exists a representer $\alpha_{P, \mathcal{H}_0} \in \mathcal{H}_0$ such that $E_P[m(W,\mu)] = E_P[\mu(A,W)\alpha_{P, \mathcal{H}_0}(A,W)]$ for all $\mu \in \mathcal{H}_0$. Hence, for any $s \in \left\{ o \mapsto s'(a,w) + \frac{h(a,w)}{v_Q(a,w)}\{y - \mu_Q(a,w)\}: h \in \mathcal{H}_0, s' \in L^2_0(A,W) \right\}$, we have
\begin{align*}
     P\ddot{\ell}_{\Pi_0 P}(s,s) - 2\, d\Psi(\Pi_0 P)(s) &= E_Q[h_s(A,W)^2]   - 2E_P[\alpha_{P, \mathcal{H}_0}(A,W)h_s(A,W)] \\
     & \quad + E_Q[s'(A,W)^2]  - 2E_P[s'(A,W)\{m(W, \mu_{\Pi_0 P}) - \Psi_0(P)\}].
\end{align*}
The above risk is minimized when $h_s = \alpha_{P, \mathcal{H}_0}$ and $s' = m(\cdot, \mu_{\Pi_0 P}) - \Psi_0(P)$. Thus, we conclude that the Riesz representer over this subspace is given by
$$s_{0,P}: o \mapsto \frac{\alpha_{P, \mathcal{H}_0}(a,w)}{v_Q(a,w)}\{y - \mu_Q(a,w)\} +  m(w, \mu_{\Pi_0 P}) - \Psi_0(P). $$
Moreover, for any $v \in T_{\mathcal{M}_0}(\Pi_0 P)$, we can verify that
\begin{align*}
    P\ddot{\ell}_{\Pi_0 P}(s_{0,P},v) &= E_P[\alpha_{P, \mathcal{H}_0}(A,W)\dot{\mu}_{Q}(v)]   + E_Q[s'(A,W) v'(A,W)] \\
    & = E_P[m(W, \dot{\mu}_{Q}(v))]  + E_Q[s'(A,W) v'(A,W)] \\
    &= d\Psi(\Pi_0 P)(v),
\end{align*}
where $\dot{\mu}_{Q}(v) = \frac{d}{dt} \mu_{Q_{t,v}} \big |_{t=0}$ with $Q_{t,v}$ being a regular path centered at $Q$ with score $v$ at $Q$. Hence, $s_{0,P}$ is also a Riesz representer for all of $T_{\mathcal{M}_0}(\Pi_0 P)$ and satisfies, for all $s_0 \in T_{\mathcal{M}_0}(\Pi_0 P)$, that
$$d\Psi(\Pi_0 P)(s_0) =  P\ddot{\ell}_{\Pi_0 P}(s_{0,P},s_0). $$
Applying Theorem \ref{theorem::EIFmain}, we conclude that  $D_{0,P} = -\dot{\ell}_{\Pi_0 P}(s_{0,P})$, where $s_{0,P}(z) := v_{P,\mathcal{H}_0}^{-1}(a,w)\alpha_{P,\mathcal{H}_0}(a,w)\{y - \mu_{P,\mathcal{H}_0}(a,w)\} + m(w,\mu_{P,\mathcal{H}_0}) - \Psi_0(P)$ and $v_{P,\mathcal{H}_0}(a,w) := E_P[\{Y - \mu_{P,\mathcal{H}_0}(A,W)\}^2 \mid A = a, W = w]$.

In this example, \ref{cond::smoothprojection} can be relaxed to require only that $P \mapsto \mu_{P,\mathcal{H}_0}$ is a smooth Hilbert-valued map, which holds since it is the composition of the smooth maps $P \mapsto \mu_P$ and $\mu_P \mapsto \mu_{P,\mathcal{H}_0}$ \citep{luedtke2024one}. Finally, we establish that \ref{cond::oraclepathwise} holds as long as the regression functional $\mu \mapsto E_P[m(W,\mu)]$ is a bounded linear functional on $\mathcal{H}_0$. By applying Theorem \ref{theorem::EIFmain}, we obtain the following theorem establishing the pathwise differentiability of $\Psi_0$ and the explicit form of its efficient influence function.  
 
 \end{proof}

\section{Proofs for Section \ref{section::modelapprox}}

\subsection{Proof of Theorem \ref{theorem::exactBiasOracle}}

\begin{proof}[Proof of Theorem \ref{theorem::exactBiasOracle}]
By Condition \ref{cond::oracleProjInModelFinal}, $\Psi_{n,0}$ is pathwise differentiable at $P_{n,0}$ and the efficient influence function $D_{n,0, P_{n,0}}$ exists. We have the exact expansion
\begin{align}
        \Psi_n(P_0) - \Psi_0(P_0)\ &=\
        \Psi_{n,0}(P_{n,0}) -\Psi_{n,0}(P_0) \nonumber\\
        &=\  -  P_0 D_{n, 0,P_{n,0}}  + R_{n,0}   \nonumber\\
        &=\  (P_{n,0} - P_0)  D_{n,0,P_{n,0}}  + R_{n,0}  \label{proof::eqn::exactBiasOracleExpansion}
\end{align}
where $R_{n,0}  := \Psi_{n,0}(P_{n,0}) - \Psi_{n,0}(P_0) + P_{0} D_{n,0,P_{n,0}}$ is the exact second-order remainder. Here, we used the mean zero property that $P_{n,0} D_{n,0,P_{n,0}} = 0$.

Now, by  \ref{cond::lossProjDeriv}, we have that the minimizer $P_{n,0} = \Pi_n P_0$ satisfies
\begin{equation}
     P_0  \left\{\frac{d}{dt} \ell(\cdot,Q_t) \big|_{t=0} \right\}= 0
     \label{proof::eqn::exactBiasOracle1}
\end{equation}
for all regular paths $(Q_t: t \in (-\delta, \delta)) \subseteq \mathcal{M}_n$ such that $Q_t = \Pi_n P_0$ at $t=0$. By definition, we have that $ \bar{D}_{n, 0,P_{n,0}}$ is contained in the loss-based tangent space $\mathcal{S}_{\mathcal{M}_n}(P_{n,0}) \subseteq L^2_0(P_{n,0})$. Thus, there exists some regular path $(Q_t: t \in (-\delta, \delta)) \subseteq \mathcal{M}_n$ such that $Q_t = P_{n,0}$ at $t=0$ and $\frac{d}{dt} \ell(\cdot,Q_t) \big|_{t=0} =  \bar{D}_{n, 0,P_{n,0}}$. Moreover, by Equation \eqref{proof::eqn::exactBiasOracle1}, we must have that
$$P_0 ( \bar{D}_{n, 0,P_{n,0}} ) = P_0 \left\{ \frac{d}{dt} \ell(\cdot,Q_t) \big|_{t=0}\right\}  = 0\ .$$
Since $\bar{D}_{n, 0,P_{n,0}} \in L^2_0(P_{n,0})$ is centered under $P_{n,0}$, we also have $(P_{n,0} - P_0) \bar{D}_{n, 0,P_{n,0}} = 0$. Combining this with Equation \eqref{proof::eqn::exactBiasOracleExpansion}, we obtain the expansion
\begin{align*}
        \Psi_n(P_0) - \Psi_0(P_0) &=  (P_{n,0}- P_0) ( D_{n, 0,P_{n,0}} ) +  R_{n,0} + 0 \\
        &=  (P_{n,0}- P_0) ( D_{n, 0,P_{n,0}} - \bar{D}_{n, 0,P_{n,0}} ) + R_{n,0} + (P_{n,0} - P_0) \bar{D}_{n, 0,P_{n,0}} \\
        &=  (P_{n,0}- P_0) ( D_{n, 0,P_{n,0}} - \bar{D}_{n, 0,P_{n,0}} ) + R_{n,0}
\end{align*}
as desired.
\end{proof}

\subsection{Proof of Theorem \ref{theorem::oraclebiaslinear}}

\begin{proof}[Proof of Theorem \ref{theorem::oraclebiaslinear}]
By Condition~\ref{cond::boundedlinearfun} and the Riesz representation theorem, there exist Riesz representers $\alpha_{0,\mathcal{H}_{n}}$ and $\alpha_{0,\mathcal{H}_{n,0}}$ for the linear functional $\mu \mapsto E_0[m(W, \mu_{0, \mathcal{H}_n})]$ in the models $\mathcal{H}_{n}$ and $\mathcal{H}_{n,0}$, respectively. Thus, we have
\begin{align*}
    \Psi_n(P_0) - \Psi(P_0) 
    &= E_0[m(W, \mu_{0, \mathcal{H}_n})] - E_0[m(W, \mu_{0})] \\
    &= E_0[\alpha_{0,\mathcal{H}_{n,0}}(A,W)\{\mu_{0, \mathcal{H}_n}(A,W) - \mu_{0}(A,W)\}],
\end{align*}
where we used that $\mu_{0, \mathcal{H}_n} \in \mathcal{H}_n \subseteq \mathcal{H}_{n,0}$ and $\mu_0 \in \mathcal{H}_0 \subseteq \mathcal{H}_{n,0}$. Hence,
\begin{align*}
    \Psi_n(P_0) - \Psi(P_0) 
    &= \langle \alpha_{0,\mathcal{H}_{n,0}}, \mu_{0, \mathcal{H}_n} - \mu_{0} \rangle_{P_0} \\
    &= \langle \alpha_{0,\mathcal{H}_{n,0}} - \alpha_{0,\mathcal{H}_{n}}, \mu_{0, \mathcal{H}_n} - \mu_{0} \rangle_{P_0},
\end{align*}
where we used the orthogonality property of the projection $\mu_{0, \mathcal{H}_n}$ to note that
\[
\langle \alpha_{0,\mathcal{H}_{n}}, \mu_{0, \mathcal{H}_n} - \mu_{0} \rangle_{P_0} = 0.
\]
The result follows.
\end{proof}

Before proving Lemma~\ref{lemma::lipschitzdependent}, we first establish the following lemma. In this lemma, let \(X \in \mathcal{X}\) be a covariate and \(V \in \mathbb{R}\) an outcome. Let \(\varphi_n: \mathcal{X} \to \mathbb{R}^d\) and \(\varphi_0: \mathcal{X} \to \mathbb{R}^d\) be feature maps, and define \(\varphi_{n,0}: \mathcal{X} \to \mathbb{R}^{2d}\) as the combined feature map \(\varphi_{n,0}(x) = (\varphi_n(x), \varphi_0(x))\). For a given feature map \(\varphi\), let \(f_{\varphi}: x \mapsto E_0[V \mid \varphi(X) =  \varphi(x) ]\) denote the corresponds regression function.

\begin{lemma}
      \label{lemma::lipschitzdependent2} Suppose that $(t_1, t_2) \mapsto E_0[f_{(\varphi_n, \varphi_0)}(X) \mid \varphi_n(X) = t_1, \varphi_0(X) = t_2, \mathcal{D}_n]$ is almost surely $L$-Lipschitz continuous. Then,
   $$\|f_{\varphi_n} - f_{(\varphi_n, \varphi_0)} \|_{P_0} \lesssim  \|\|\varphi_0(\cdot) - \varphi_n(\cdot)\|_{\mathbb{R}^d} \|_{L^2(P_0)}  \text{ and } \|f_{\varphi_n} - f_{\varphi_0} \|_{P_0} \lesssim  \|\|\varphi_0(\cdot) - \varphi_n(\cdot)\|_{\mathbb{R}^d} \|_{L^2(P_0)}.$$
\end{lemma}

\begin{proof}

Let $g: \mathbb{R}^k \times \mathbb{R}^k \rightarrow \mathbb{R}$ be a Lipschitz continuous function with constant $L > 0$.  By Lipschitz continuity, we have that
\begin{align*}
  \left|  g(\varphi_n(x), \varphi_0(x)) - E_0[g(\varphi_n(X), \varphi_0(X))  |  \varphi_n(X) = \varphi_n(x)] \right| &= \left| E_0[ g(\varphi_n(x), \varphi_0(x))  - g(\varphi_n(x), \varphi_0(X))  |  \varphi_n(X) = \varphi_n(x)] \right|\\
  & \leq  E_0[ \left|g(\varphi_n(x), \varphi_0(x))  - g(\varphi_n(x), \varphi_0(X)) \right|  |  \varphi_n(X) = \varphi_n(x)]  \\
    & \leq L E_0[\left\|\varphi_0(x) - \varphi_0(X) \right\|_{\mathbb{R}^d}  |  \varphi_n(X) = \varphi_n(x)]  .
\end{align*}
On the event $\{ \varphi_n(X) = \varphi_n(x)\}$, we know 
\begin{align*}
    \|\varphi_0(x) - \varphi_0(X) \|_{\mathbb{R}^d} &\leq \|\varphi_0(x) - \varphi_n(x)\|_{\mathbb{R}^d} + \|\varphi_n(x) - \varphi_n(X)\|_{\mathbb{R}^d} + \|\varphi_0(X) - \varphi_n(X)\|_{\mathbb{R}^d} \\
    & \leq \|\varphi_0(x) - \varphi_n(x)\|_{\mathbb{R}^d} +  \|\varphi_0(X) - \varphi_n(X)\|_{\mathbb{R}^d}.
\end{align*} 
Therefore, 
\begin{align*}
    \left|  g(\varphi_n(x), \varphi_0(x)) - E_0[g(\varphi_n(X), \varphi_0(X))  |  \varphi_n(X) = \varphi_n(x)] \right| & \lesssim E_0[\left\|\varphi_0(x) - \varphi_0(X) \right\|_{\mathbb{R}^d}  \mid    \varphi_n(X) = \varphi_n(x)] \\ 
     & \lesssim \|\varphi_0(x) - \varphi_n(x)\|_{\mathbb{R}^d}    + E_0[ \|\varphi_0(X) - \varphi_n(X)\|_{\mathbb{R}^d}  \mid   \varphi_n(X) = \varphi_n(x)] .
\end{align*}

By assumption, the map $(t_1, t_2) \mapsto E_0[f_{(\varphi_n, \varphi_0)}(X) \mid \varphi_n(X) = t_1, \varphi_0(X) = t_2, \mathcal{D}_n]$ is almost surely $L$-Lipschitz continuous. Moreover, by the law of total expectation,
\begin{align*}
    f_{\varphi_n}(X)& = E_0[E_0[f_{(\varphi_n, \varphi_0)}(X) \mid \varphi_n(X), \varphi_0(X), \mathcal{D}_n] \mid \varphi_n(X), \mathcal{D}_n]  \\
     f_{\varphi_n}(X) &= E_0[E_0[f_{(\varphi_n, \varphi_0)}(X) \mid \varphi_n(X), \varphi_0(X), \mathcal{D}_n] \mid \varphi_0(X), \mathcal{D}_n]  \\
\end{align*}
Hence, taking $g(t_1, t_2) = E_0[f_{(\varphi_n, \varphi_0)}(X) \mid \varphi_n(X) = t_1, \varphi_0(X) = t_2, \mathcal{D}_n]$, we obtain the pointwise bound:
\begin{align*}
    \left|  f_{(\varphi_n, \varphi_0)}(x)  - f_{\varphi_n}(x)  \right| & \lesssim \|\varphi_0(x) - \varphi_n(x)\|_{\mathbb{R}^d}     + E_0[ \|\varphi_0(X) - \varphi_n(X)\|_{\mathbb{R}^d}  \mid   \varphi_n(X) = \varphi_n(x)] .
\end{align*}  
Taking the $L^2(P_0)$ norm of both sides and using that conditional expectations are projections in $L^2(P_0)$, we find
\begin{align*}
   \| f_{(\varphi_n, \varphi_0)} - f_{\varphi_n}\|_{L^2(P_0)} & \lesssim \|\|\varphi_0(\cdot) - \varphi_n(\cdot)\|_{\mathbb{R}^d} \|_{L^2(P_0)}      +  \|E_0[ \|\varphi_0(X) - \varphi_n(X)\|_{\mathbb{R}^d}  \mid   \varphi_n(X) = \varphi_n(\cdot)] \|_{L^2(P_0)} \\
  & \lesssim \|\|\varphi_0(\cdot) - \varphi_n(\cdot)\|_{\mathbb{R}^d} \|_{L^2(P_0)} . 
\end{align*}
By symmetry, an identical argument gives:
\begin{align*}
   \| f_{(\varphi_n, \varphi_0)} - f_{\varphi_0}\|_{L^2(P_0)} & \lesssim \|\|\varphi_0(\cdot) - \varphi_n(\cdot)\|_{\mathbb{R}^d} \|_{L^2(P_0)} .
\end{align*}
By the triangle inequality, we also have
\begin{align*}
   \| f_{\varphi_n} - f_{\varphi_0}\|_{L^2(P_0)} & \lesssim \|\|\varphi_0(\cdot) - \varphi_n(\cdot)\|_{\mathbb{R}^d} \|_{L^2(P_0)} .
\end{align*}

\end{proof}

\subsection{Proof of Lemma \ref{lemma::lipschitzdependent}}

\begin{proof}[Proof of Theorem \ref{lemma::lipschitzdependent}]
By Condition~\ref{cond::boundedlinearfun} and Theorem~\ref{theorem::oraclebiaslinear}, we have
\[
|\Psi_n(P_0) - \Psi(P_0)| \leq \|\operatorname{Proj}_n(\alpha_{0,\mathcal{H}_{n,0}}) - \alpha_{0,\mathcal{H}_{n,0}}\|_{P_0} \, \| \operatorname{Proj}_n(\mu_0) - \mu_0 \|_{P_0}.
\]
By the definitions of $\mathcal{H}_n$, $\mathcal{H}_0$, and $\mathcal{H}_{n,0}$, and using Condition~\ref{cond::lipschitzFeature}, we may apply Lemma~\ref{lemma::lipschitzdependent2} with \(X = (A,W)\), \(V = Y\), and \(V = \alpha_{0,\mathcal{H}}(A,W)\) to conclude that
\begin{align*}
    \|\operatorname{Proj}_n(\mu_0) - \mu_0\|_{P_0} 
    &\lesssim \sqrt{\int \|\varphi_n(a,w) - \varphi_0(a,w)\|_{\mathbb{R}^d}^2 \, P_0(dw, da)}, \\
    \|\operatorname{Proj}_n(\alpha_{0,\mathcal{H}_{n,0}}) - \alpha_{0,\mathcal{H}_{n,0}}\|_{P_0} 
    &\lesssim \sqrt{\int \|\varphi_n(a,w) - \varphi_0(a,w)\|_{\mathbb{R}^d}^2 \, P_0(dw, da)}.
\end{align*}
For the latter bound, we used the projection properties: \(E_0[\alpha_{0,\mathcal{H}}(A,W) \mid \varphi_n(A,W), \varphi_0(A,W), \mathcal{D}_n] = \alpha_{0,\mathcal{H}_{n,0}}(A,W)\), \(E_0[\alpha_{0,\mathcal{H}}(A,W) \mid \varphi_0(A,W), \mathcal{D}_n] = \alpha_{0,\mathcal{H}_0}(A,W)\), and \(E_0[\alpha_{0,\mathcal{H}}(A,W) \mid \varphi_n(A,W), \mathcal{D}_n] = \alpha_{n,\mathcal{H}_{n,0}}(A,W)\). Hence,
\[
|\Psi_n(P_0) - \Psi(P_0)| \lesssim \int \|\varphi_n(a,w) - \varphi_0(a,w)\|_{\mathbb{R}^d}^2 \, P_0(dw, da).
\]
\end{proof}

\subsection{Proof sketch for marginal correlation example}
\label{appendix:example:proof}
 \begin{proof}[Proof sketch for Example~\ref{example::screening}]
We first show that the screening rule introduces no false positives in the
truncated dictionary. Suppose that \(Y\) is subgaussian and that
\(\sup_{j\le k(n)} \|\varphi_j\|_\infty < \infty\). Then each product
\(\varphi_j(A,W)Y\) is subgaussian up to constants, with mean
\[
E_0\{\varphi_j(A,W)Y\} = \beta_j.
\]
Standard maximal inequalities therefore give
\[
\sup_{j\le k(n)}
\left|
\frac{1}{n}\sum_{i=1}^n \varphi_j(A_i,W_i)Y_i - \beta_j
\right|
=
O_p\!\left(\sqrt{\frac{\log k(n)}{n}}\right)
=
O_p(c_n).
\]
Let
\[
\mathcal E_n
:=
\left\{
\sup_{j\le k(n)}
\left|
\frac{1}{n}\sum_{i=1}^n \varphi_j(A_i,W_i)Y_i - \beta_j
\right|
\le C c_n
\right\},
\]
where \(P(\mathcal E_n)\to 1\). Since the screening threshold is
\((\log n)c_n\), on \(\mathcal E_n\), for every \(j\le k(n)\) with
\(\beta_j=0\),
\[
\left|
\frac{1}{n}\sum_{i=1}^n \varphi_j(A_i,W_i)Y_i
\right|
\le C c_n
< (\log n)c_n
\]
for all sufficiently large \(n\). Hence \(\widehat S_n\subseteq S_{0,k}\)
with probability tending to one, so asymptotically
\(\mathcal H_n\subseteq \mathcal H_0\).

We now bound the approximation error for \(\mu_0\). Because \(\Phi\) is an
orthonormal basis and \(\operatorname{Proj}_n\) is the \(L^2(P_0)\)-orthogonal
projection onto \(\mathcal H_n\), on the event
\(\widehat S_n\subseteq S_{0,k}\),
\[
\|\operatorname{Proj}_n(\mu_0)-\mu_0\|_{P_0}^2
=
\sum_{j\le k(n)}
1\{j\notin \widehat S_n\}\beta_j^2
+
\sum_{j>k(n)}\beta_j^2 .
\]
The second term is the truncation bias. On \(\mathcal E_n\), if
\(j\le k(n)\) and \(j\notin \widehat S_n\), then
\[
\left|
\frac{1}{n}\sum_{i=1}^n \varphi_j(A_i,W_i)Y_i
\right|
\le (\log n)c_n,
\]
so
\[
|\beta_j|
\le
\left|
\frac{1}{n}\sum_{i=1}^n \varphi_j(A_i,W_i)Y_i - \beta_j
\right|
+
\left|
\frac{1}{n}\sum_{i=1}^n \varphi_j(A_i,W_i)Y_i
\right|
\le
C c_n + (\log n)c_n
\lesssim
(\log n)c_n.
\]
Therefore,
\[
\sum_{j\le k(n)}
1\{j\notin \widehat S_n\}\beta_j^2
\lesssim
\sum_{j\le k(n)} \min\{\beta_j^2,((\log n)c_n)^2\}.
\]
Under the weak sparsity condition
\(\sum_{j=1}^\infty |\beta_j|^{q_\mu}<\infty\) with \(q_\mu\in(0,2)\), the
elementary inequality
\[
\min\{x^2,t^2\}\le t^{2-q_\mu}|x|^{q_\mu}
\qquad (x\in\mathbb R,\ t>0)
\]
implies that
\[
\sum_{j\le k(n)} \min\{\beta_j^2,((\log n)c_n)^2\}
=
O\!\left(\{(\log n)c_n\}^{\,2-q_\mu}\right).
\]
Taking square roots yields
\[
\|\operatorname{Proj}_n(\mu_0)-\mu_0\|_{P_0}
=
O_p\!\left(\{(\log n)c_n\}^{\,1-q_\mu/2}\right)
+
O\!\left(\left(\sum_{j>k(n)}\beta_j^2\right)^{1/2}\right).
\]

Finally, consider the Riesz representer. Since
\(\alpha_{0,\mathcal H_0}=\sum_{j\in S_0}\gamma_j\varphi_j\), on the event
\(\widehat S_n\subseteq S_{0,k}\),
\[
\|\operatorname{Proj}_n(\alpha_{0,\mathcal H_0})
-\alpha_{0,\mathcal H_0}\|_{P_0}^2
=
\sum_{j\in S_0\cap\{1,\ldots,k(n)\}}
1\{j\notin \widehat S_n\}\gamma_j^2
+
\sum_{\substack{j>k(n)\\ j\in S_0}}\gamma_j^2 .
\]
Using the domination condition
\(|\gamma_j|\lesssim |\beta_j|^a\) for \(j\in S_0\), we obtain on
\(\mathcal E_n\) that
\[
\sum_{j\in S_0\cap\{1,\ldots,k(n)\}}
1\{j\notin \widehat S_n\}\gamma_j^2
\lesssim
\sum_{j\le k(n)} \min\{|\beta_j|^{2a},((\log n)c_n)^{2a}\}.
\]
Again using \(\sum_{j=1}^\infty |\beta_j|^{q_\mu}<\infty\) and the inequality
\[
\min\{|x|^{2a},t^{2a}\}\le t^{2a-q_\mu}|x|^{q_\mu},
\qquad a>q_\mu/2,
\]
we get
\[
\sum_{j\le k(n)} \min\{|\beta_j|^{2a},((\log n)c_n)^{2a}\}
=
O\!\left(\{(\log n)c_n\}^{\,2a-q_\mu}\right).
\]
Therefore,
\[
\|\operatorname{Proj}_n(\alpha_{0,\mathcal H_0})
-\alpha_{0,\mathcal H_0}\|_{P_0}
=
O_p\!\left(\{(\log n)c_n\}^{\,a-q_\mu/2}\right)
+
O\!\left(\left(\sum_{\substack{j>k(n)\\ j\in S_0}}\gamma_j^2\right)^{1/2}\right).
\]
Thus both approximation errors vanish provided \(k(n)\to\infty\) and the
truncation tails vanish.
\end{proof}

\section{Proofs for Section \ref{section::theory}}

\subsection{Proofs of Theorem  \ref{theorem::oracleEff} }

\begin{proof}[Proof of Theorem  \ref{theorem::oracleEff}]

By \ref{cond::debiased}--\ref{cond::Donsker}, we have
\begin{align*}
    \widehat{\psi}_n - \Psi_n(P_0)\ &=\ (P_n-P_0 )D_{n, P_0} + o_p(n^{-1/2}) \\  
    & =\  (P_n-P_0 )D_{0,P_0} + (P_n-P_0 )(D_{n, P_0} - D_{0,P_0}) + o_p(n^{-1/2}) \\
     & =\  (P_n-P_0 )D_{0,P_0} + o_p(n^{-1/2})\ ,
\end{align*}
where the final equality uses,  by \ref{cond::Donsker}, that $(P_n-P_0 )(D_{n, P_0} - D_{0,P_0}) = o_p(n^{-1/2})$. The result now follows. We note that \ref{cond::consDn}, while not used in this proof, is typically required to establish \ref{cond::Donsker}.

Under \ref{cond::oracleProjInModelFinal} and Theorem \ref{theorem::exactBiasOracle}, we have the bound
\begin{align*}
        \Psi_n(P_0) - \Psi_0(P_0)=  (P_{n,0}- P_0) ( D_{0,P_{n,0}} - \bar{D}_{n, 0,P_{n,0}}) + R_{n,0}  + o_p(n^{-1/2})\ .   
\end{align*}
 Combining this our earlier expansion, we obtain the expansion
\begin{align*}
    \widehat{\psi}_n - \Psi_0(P_0)\ & =\ (P_n-P_0 )D_{0,P_0}  + o_p(n^{-1/2})\\
    & \qquad + (P_{n,0} - P_0)( D_{0,P_{n,0}} - \bar{D}_{n, 0,P_{n,0}}) + R_{n,0} + o_p(n^{-1/2})\ .
\end{align*}
By \ref{cond::doubleRemrootnFinal}, we have that $ (P_{n,0} - P_0)( D_{0,P_{n,0}} - \bar{D}_{n, 0,P_{n,0}}) + R_{n,0} = o_p(n^{-1/2})$, and so, $  \widehat{\psi}_n - \Psi_0(P_0)  = (P_n-P_0 )D_{0,P_0}  + o_p(n^{-1/2})$ as desired. It follows that $\widehat{\psi}_n $ is asymptotically linear at $P_0$ for $\Psi_0:\mathcal{M}_{\mathrm{np}} \rightarrow \mathbb{R}$ with influence function being the efficient influence function of $\Psi_0$ at $P_0$. Hence, $\widehat{\psi}_n $ is $P_0$--efficient for $\Psi_0$ relative to $\mathcal{M}_{\mathrm{np}}$ \citep{bickel1993efficient}. Moreover, since  $\widehat{\psi}_n $ is efficient, it is necessarily regular for $\Psi_0$ relative to $\mathcal{M}_{\mathrm{np}}$ \citep{vandervaart2000asymptotic}. The limiting distribution result follows immediately from the central limit theorem and Slutsky's lemma.\end{proof}

  \subsection{Proof of Lemma \ref{lemma::stableLogLik}}

\label{appendix::proofstable}
 \begin{proof}[Proof of Lemma \ref{lemma::stableLogLik}]
Recall that $D_{n,0,P_0}$ is the $P_0$-EIF of the projection parameter $\Psi_{n,0}$ induced by the union model $\mathcal{M}_{n,0}$. 
Let $s_{n,P_{n,0}}$ and $s_{0,P_{0}}$ denote the Riesz representers of the projection parameters $\Psi_n$ and $\Psi_0$, respectively, 
with respect to the Hessian inner products $P_{0}\ddot{\ell}_{P_{n,0}}$ and $P_{0}\ddot{\ell}_{P_{0}}$.

By the triangle inequality and Condition~\ref{cond::weakconstencyOracle}, 
\begin{align*}
  \norm{D_{n,P_0} - D_{0,P_0}}_{P_0}
  &\leq\ \norm{D_{n,P_{n,0}} - D_{0,P_{0}}}_{P_0}
       + \norm{D_{n,P_{n,0}} - D_{n,P_0}}_{P_0} \\
  &\leq\ \norm{D_{n,P_{n,0}} - D_{0,P_{0}}}_{P_0} + o_p(1).
\end{align*}
Furthermore, by the triangle inequality and the fact that $\norm{D_{n,0,P_{n,0}} - D_{n,0,P_0}}_{P_0} = o_p(1)$ by Condition~\ref{cond::weakconstencyOracle}, 
\begin{align*}
    \norm{D_{n,P_{n,0}} - D_{0,P_{0}}}_{P_0}
    &\leq\ \norm{D_{n,P_{n,0}} - D_{n,0,P_{n,0}}}_{P_0} 
           + \norm{D_{0,P_0} - D_{n,0,P_0}}_{P_0} 
           + \norm{D_{n,0,P_{n,0}} - D_{n,0,P_0}}_{P_0} \\
    &\leq\ \norm{D_{n,P_{n,0}} - D_{n,0,P_{n,0}}}_{P_0} 
           + \norm{D_{0,P_0} - D_{n,0,P_0}}_{P_0} 
           + o_p(1).
\end{align*}
Hence, using the representation of the EIFs in Theorem~\ref{theorem::lossbasedEIF} and the Lipschitz property in Condition~\ref{cond::lipschitzloss}, 
\begin{align*}
    \norm{D_{n,P_{n,0}} - D_{0,P_{0}}}_{P_0}
    &\leq\ \norm{\dot{\ell}_{P_{n,0}}(s_{n,P_{n,0}}) - \dot{\ell}_{P_{n,0}}(s_{n,0,P_{n,0}})}_{P_0}
      + \norm{\dot{\ell}_{P_{0}}(s_{0,P_{0}}) - \dot{\ell}_{P_{0}}(s_{n,0,P_{0}})}_{P_0} \\
    &\lesssim\ \Bigl\{ P_0\,\ddot{\ell}_{P_{n,0}}\bigl(s_{n,P_{n,0}} - s_{n,0,P_{n,0}},\, s_{n,P_{n,0}} - s_{n,0,P_{n,0}}\bigr) \Bigr\}^{1/2} \\
    &\quad + \Bigl\{ P_0\,\ddot{\ell}_{P_{0}}\bigl(s_{0,P_{0}} - s_{n,0,P_{0}},\, s_{0,P_{0}} - s_{n,0,P_{0}}\bigr) \Bigr\}^{1/2}.
\end{align*}
By \eqref{eqn::autodml}, using that $\Pi_0 P_0 = P_0$, $s_{0,P_{0}}$ satisfies 
\begin{align*}
 P_0\ddot{\ell}_{P_0}(s_{0,P_{0}},s_{0,P_{0}}) - 2\, d\Psi(P_0)(s_{0,P_{0}})   
   &= \min_{s \in \overline{T}_{\mathcal{M}_0}(P_0)} \Bigl\{ P_0\ddot{\ell}_{P_0}(s,s) - 2\, d\Psi(P_0)(s) \Bigr\} \\
   &= \min_{s \in \overline{T}_{\mathcal{M}_0}(P_0)} \Bigl\{ P_0\ddot{\ell}_{P_0}(s,s) - 2\, P_0\ddot{\ell}_{P_0}(s,s_{n,0,P_0}) \Bigr\} \\
   &= \min_{s \in \overline{T}_{\mathcal{M}_0}(P_0)} P_0\ddot{\ell}_{P_0}(s - s_{n,0,P_0},\, s - s_{n,0,P_0}).
\end{align*}
Similarly,  using that $\Pi_n P_{n,0} = P_{n,0}$, $s_{n,P_{n,0}}$ satisfies
\begin{align*}
 P_0\ddot{\ell}_{P_{n,0}}(s_{n,P_{n,0}},s_{n,P_{n,0}}) - 2\, d\Psi(P_{n,0})(s_{n,P_{n,0}}) 
   &= \min_{s \in \overline{T}_{\mathcal{M}_n}(P_{n,0})} P_0\ddot{\ell}_{P_{n,0}}(s - s_{n,0,P_{n,0}},\, s - s_{n,0,P_{n,0}}).
\end{align*}
Hence, by \eqref{cond::consistentTangentSpace},
\begin{align*}
    \norm{D_{n,P_{n,0}} - D_{0,P_{0}}}_{P_0}
    &\lesssim\ \min_{s \in \overline{T}_{\mathcal{M}_0}(P_0)} P_0\ddot{\ell}_{P_0}(s - s_{n,0,P_0},\, s - s_{n,0,P_0}) \\
    &\quad + \min_{s \in \overline{T}_{\mathcal{M}_n}(P_{n,0})} P_0\ddot{\ell}_{P_{n,0}}(s - s_{n,0,P_{n,0}},\, s - s_{n,0,P_{n,0}})\\
     &\lesssim o_p(1)
\end{align*}
Finally, combining all our displays, we conclude that
\begin{align*}
    \norm{D_{n,P_0} - D_{0,P_0}}_{P_0} = o_p(1),
\end{align*}
as desired.

\end{proof}

\begin{proof}[Proof of Theorem \ref{theorem::oracleRegularity}]
    By Theorem \ref{theorem::oracleEff} and that $\Psi_0(P_0) = \Psi(P_0)$, the ADML estimator $\widehat{\psi}_n$ is asymptotically linear for $\Psi$ at $P_0$ with influence function being the efficient influence function of $\Psi_0$. Since $\Psi_0(P) = \Psi(P)$ for all $P \in \mathcal{M}_0$, regularity of the ADML estimator for $\Psi_0$ over $\mathcal{M}_{\mathrm{np}}$ implies that the ADML estimator is regular for $\Psi$ over the oracle submodel $\mathcal{M}_0$. Under the loss-based tangent space conditions, we have by Theorem \ref{theorem::lossbasedEIF} that $D_{0,P_0}$ is the efficient influence function of $\Psi$ relative to $\mathcal{M}_0$. Thus, the ADML estimator is asymptotically linear with influence function equal to the efficient influence function of $\Psi$ relative to $\mathcal{M}_0$. It follows that the ADML estimator is asymptotically efficient \citep{vandervaart2000asymptotic}.\end{proof}

 \begin{proof}[Proof of Theorem \ref{theorem::irreg}]

     We assumed that $\Psi$ is pathwise differentiable at $P_0$ relative to $\mathcal{M}$ with efficient influence function $D_{\mathcal{M},P_0}$. Using that $\Psi_0(P_0) = \Psi(P_0)$, we first observe that 
     \begin{align*}
        \Psi_0(P_{0,hn^{-1/2}}) - \Psi(P_{0,hn^{-1/2}})\ &=\ \Psi_0(P_{0,hn^{-1/2}}) - \Psi_0(P_0)  + \Psi(P_0)- \Psi(P_{0,hn^{-1/2}})\\
         &=\ hn^{-1/2}\{ d\Psi_0(P_0)(S) - d\Psi(P_0)(S)\} + o(n^{-1/2})\\
          &=\ hn^{-1/2}\langle S, D_{0,P_0}- D_{\mathcal{M},P_0} \rangle_{P_0} + o(n^{-1/2})\ ,
     \end{align*}
where the final two equalities use pathwise differentiability of $\Psi:\mathcal{M} \rightarrow \mathbb{R}$ and $\Psi_0:\mathcal{M}_{\mathrm{np}} \rightarrow \mathbb{R}$. Hence, we find that
\begin{align*}
    \sqrt{n}\left\{\widehat{\psi}_n - \Psi(P_{0,hn^{-1/2}}) \right\}\ &=\ \sqrt{n}\left\{\widehat{\psi}_n  - \Psi_0(P_{0,hn^{-1/2}})\right\} + \sqrt{n}\left\{ \Psi_0(P_{0,hn^{-1/2}}) - \Psi(P_{0,hn^{-1/2}}) \right\} \\
   & =\ \sqrt{n}\,(P_n - P_0)D_{0,P_0} +  h\langle S, D_{0,P_0}- D_{\mathcal{M},P_0} \rangle_{P_0} + o(1)\ .
\end{align*} 
The result then follows from Slutsky's theorem.\end{proof}

\section{Proofs for Section \ref{adml::linearfunc}}
\begin{proof}[Proof of Theorem \ref{theorem::ALlinear}]

We first consider the model approximation error $\Psi_n(P_0) - \Psi(P_0)$. By Lemma~\ref{theorem::oraclebiaslinear} and Condition~\ref{cond::oraclebiaslinear}, we have
\[
|\Psi_n(P_0) - \Psi(P_0)| \leq \|\operatorname{Proj}_n (\alpha_{0,\mathcal{H}_{n,0}}) - \alpha_{0,\mathcal{H}_{n,0}}\|_{P_0}\, \|\operatorname{Proj}_n(\mu_0) - \mu_0\|_{P_0} = o_p(n^{-1/2}).
\]

Next, we consider the estimation error $\widehat{\psi}_n - \Psi_n(P_0)$.
 Denote the estimated influence function 
$$\widehat{D}_n(o) := \alpha_n(a,x)\{y - \mu_n(a,x)\} + m(w, \mu_n) - \frac{1}{n} \sum_{i=1}^n m(W_i, \mu_n).$$ The estimator error for the working parameter $\Psi_n(P_0)$ may be decomposed as:
\begin{align*}
    \widehat{\psi}_n  - \Psi_n(P_0) &= \frac{1}{n} \sum_{i=1}^n m(W_i, \mu_n) + \frac{1}{n} \sum_{i=1}^n  \widehat{D}_n(O_i) - \Psi_n(P_0) \\
    &= (P_n - P_0) \widehat{D}_n + \frac{1}{n} \sum_{i=1}^n m(W_i, \mu_n)  - \Psi_n(P_0) + P_0 \widehat{D}_n\\
      &= P_n D_{0, P_0} + (P_n - P_0) (\widehat{D}_n - D_{0, P_0}) + \text{Rem}_{n,0},
\end{align*}
where we define the second-order remainder $\text{Rem}_{n,0} := \frac{1}{n} \sum_{i=1}^n m(W_i, \mu_n)  - \Psi_n(P_0) + P_0 \widehat{D}_n$. 

By \ref{cond::bounded}, \ref{cond::linearsamplesplit}, and Markov's inequality applied conditional on the independent training dataset, we have that $ (P_n - P_0) (\widehat{D}_n - D_{0, P_0}) =  O_p(n^{-1/2}\|\widehat{D}_n - D_{0, P_0}\|_{P_0})$. Applying the triangle inequality and \ref{cond::bounded}, the consistency of the nuisances in \ref{cond::linearnuisancerate}(i) and \ref{cond::oraclebiaslinear}(i) imply that 
$$\|\widehat{D}_n - D_{0, P_0}\|_{P_0} \leq \|\widehat{D}_n - D_{n, P_0}\|_{P_0} + \|D_{n, P_0} - D_{0, P_0}\|_{P_0} = o_p(1).$$
Hence, $ (P_n - P_0) (\widehat{D}_n - D_{0, P_0}) =  o_p(n^{-1/2}).$ 

Next we show that $\text{Rem}_{n,0} = o_p(n^{-1/2})$. In what follows, we abuse notation and implicitly condition on the training data of the nuisances in our expectations. To this end, observe that
\begin{align*}
    \text{Rem}_{n,0} &= E_0 \left[m(W, \mu_n) + \alpha_n(A, W)\{Y - \mu_n(A, W)\}\right] - \Psi_n(P_0)  \\
    &= E_0 \left[m(W, \mu_n) + \alpha_n(A, W)\{Y - \mu_n(A, W)\}\right] - E_0 \left[m(W, \operatorname{Proj}_n(\mu_0))\right] \\
     &= E_0 \left[m(W, \mu_n - \operatorname{Proj}_n(\mu_0)) + \alpha_n(A, W)\{Y - \mu_n(A, W)\}\right].
\end{align*}
Hence, by the orthogonality properties of the projection $\operatorname{Proj}_n(\mu_0)$, we have $E_0 \left[\alpha_n(A, W)\{\mu_0(A,W) -  \operatorname{Proj}_n(\mu_0)\}\right] = 0$, since $\alpha_n \in \mathcal{H}_n$. Thus, 
\begin{align*}
    \text{Rem}_{n,0}  &= E_0 \left[m(W, \mu_n - \operatorname{Proj}_n(\mu_0))\right] + E_0 \left[\alpha_n(A, W)\{\mu_0(A,W) - \mu_n(A, W)\}\right]\\
&= E_0 \left[m(W, \mu_n - \operatorname{Proj}_n(\mu_0))\right] + E_0 \left[\alpha_n(A, W)\{\operatorname{Proj}_n(\mu_0)(A,W) - \mu_n(A, W)\}\right] \\
&\quad + E_0 \left[\alpha_n(A, W)\{\mu_0(A,W) -  \operatorname{Proj}_n(\mu_0)\}\right] \\
&= E_0 \left[m(W, \mu_n - \operatorname{Proj}_n(\mu_0))\right] + E_0 \left[\alpha_n(A, W)\{\operatorname{Proj}_n(\mu_0)(A,W) - \mu_n(A, W)\}\right].
\end{align*}
Furthermore, by the Riesz representation property of $\alpha_{0, \mathcal{H}_n}$ and the fact that $\mu_n - \operatorname{Proj}_n(\mu_0) \in \mathcal{H}_n$, we have
\[
E_0 \left[m(W, \mu_n - \operatorname{Proj}_n(\mu_0))\right] =  E_0 \left[\alpha_{0, \mathcal{H}_n}(A,W)(\mu_n(A,W) - \operatorname{Proj}_n(\mu_0)(A,W))\right].
\]
Thus,
\begin{align*}
    \text{Rem}_{n,0}
      &=  \left\langle \alpha_n - \alpha_{0,\mathcal{H}_n},\ \operatorname{Proj}_n(\mu_0) - \mu_n \right\rangle_{P_0}.
\end{align*}
By the Cauchy--Schwarz inequality and \ref{cond::linearnuisancerate}(ii),
\[
\left|\text{Rem}_{n,0}\right|
\ \le\
\|\alpha_n - \alpha_{0,\mathcal{H}_n}\|_{P_0,2}\;
\|\operatorname{Proj}_n(\mu_0) - \mu_n\|_{P_0,2} = o_p(n^{-1/2}).
\]
Thus,  $\widehat{\psi}_n  - \Psi_n(P_0) =  P_n D_{0, P_0} + o_p(n^{-1/2}) $.

Combining all of the above, we conclude that
\[
    \widehat{\psi}_n - \Psi_0(P_0) = P_n D_{0,P_0} + o_p(n^{-1/2}).
\]
Hence, $\widehat{\psi}_n$ is $P_0$-asymptotically linear for $\Psi_0(P_0)$ with influence function given by the EIF for $\Psi_0$. It is therefore regular and efficient for $\Psi_0$ \citep{vanderVaartWellner}. In addition, if $\mu_0 \in \mathcal{H}_0$, then $\Psi_0(P_0) = \Psi(P_0)$, and
\[
    \widehat{\psi}_n - \Psi(P_0) = P_n D_{0,P_0} + o_p(n^{-1/2}).
\]
It is regular for $\Psi$ over the oracle submodel $\mathcal{M}_0$ by Theorem~\ref{theorem::oracleRegularity}. Finally, if the conditional variance of the outcome is almost surely constant, then $D_{0,P_0}$ is also the efficient influence function of $\Psi$ relative to the oracle model $\mathcal{M}_0$ \citep{chernoRegRiesz}. The result then follows.

\end{proof}

\section{Proofs for Appendix \ref{appendix::PLM}}

\begin{proof}[Proof of Theorem  \ref{example::theorem::RlearnerEIF}]
    
    Let $\mathcal{H}_0 := \{(a,w) \mapsto m(w) + a \tau(w): m \in L^2(P_0), \tau \in \mathcal{T}_0\}$ be the corresponding oracle model for the outcome regression. By Theorem \ref{theorem::boundedlinearEIF}, the efficient influence function $D_{0,P}$ for $\Psi_0$ at $P$ under the prespecified statistical model $\mathcal{M}$ is given by $o\mapsto \Pi_0 \tau_P(w) - E_0\{\Pi_0 \tau_P(W)\} + \alpha_P(w)\{a - \pi_P(w)\}\{y - \mu_P(w)\}$, where $\alpha_P := \argmin_{\alpha \in \mathcal{H}_0} E_P\left[\alpha(A,W)^2 - 2\{\alpha(1,W)-\alpha(0,W)\}\right]$ is the Riesz representer of the linear functional $\mu \mapsto E_P\{\mu(1,W) - \mu(0,W)\}$ \citep{chernoRegRiesz, chernozhukov2018auto}. We claim that $\alpha_P(A,W) = \gamma_{P, \mathcal{T}_0}(W)\{A-\pi_P(W)\}$  $P$--almost surely. To see this, we observe that $\argmin_{\alpha \in \mathcal{H}_0} E_P\left[\alpha(A,W)^2 - 2\{\alpha(1,W)-\alpha(0,W)\}\right]$ coincides with 
    \begin{equation}\argmin_{m \in L_2(P_0),\gamma \in \mathcal{T}_0} E_P\left[\{m(W) + A \gamma(W) \}^2 - 2\gamma(W)\right].\label{eq:proofthm2}\end{equation} Next, we expand the objective function in \eqref{eq:proofthm2} as
    $$(m,\gamma)\mapsto E_P\left\{m(W)^2 - 2\pi_P(W)m(W)\gamma(W) + A^2 \gamma(W) - 2\gamma(W)  \right\},$$
    and observe that $m$ and $\gamma$ are able to vary freely in the minimization problem. Holding $\gamma$ fixed, we find that the above is minimized by $m_{P,\gamma}:w\mapsto -\pi_P(w)\gamma(w)$. With the choice $m=m_{P,\gamma}$, we obtain the profiled objective function 
    $\gamma\mapsto E_P\left[\{A-\pi_P(W)\}^2 \gamma(W)^2 - 2\gamma(W)\right]$, which is exactly minimized  over $\gamma \in \mathcal{T}_0$ by $\gamma_{P, \mathcal{T}_0}$. Thus, plugging in these optimizers, we conclude that $\alpha_P(a,w) = \{a-\pi_P(w)\}\gamma_{P, \mathcal{T}_0}(w)$. Plugging this expression for $\alpha_P$ in $D_{0, P}$, we obtain the efficient influence function given in the theorem.\end{proof}

 \begin{proof}[Proof of part (i) of Theorem   \ref{example::theorem::RlearnerLimitDistORACLE}]

We have that
\begin{align*}
    \Pi_n \gamma_{0, \mathcal{T}_0}\ :=&\ \ \argmin_{\gamma \in \mathcal{T}_n} E_0\left[\{A-\pi_0(W)\}^2\gamma(W)^2 -  2 \gamma(W) \right]\\
    =&\ \ \argmin_{\gamma \in \mathcal{T}_n} E_0\left( \{A-\pi_0(W)\}^2 \left[ \gamma(W) -  \{A-\pi_0(W)\}^{-2}\right]\right).
\end{align*}  
It follows that $\Pi_n \gamma_{0, \mathcal{T}_0}$ is the overlap-weighted projection of $(a,w) \mapsto \{a-\pi_0( w)\}^{-2}$ onto $\mathcal{T}_n$. Since $\mathcal{T}_n \subseteq \mathcal{T}_{n,0}$, we also have that
\begin{align*}
    \Pi_n \gamma_{0, \mathcal{T}_0} := \argmin_{\gamma \in \mathcal{T}_n} E_0\left[ (A-\pi_0(W))^2 \left\{ \gamma(W) - \gamma_{0, \mathcal{T}_{n,0}}(W) \right\}\right],
\end{align*} 
 where we note that $\Pi_n \gamma_{0, \mathcal{T}_0}$ appears in the efficient influence function $D_{n,P_0}$ of $\Psi_n$ as indicated in Theorem \ref{example::theorem::RlearnerEIF}.

 Let $o \mapsto \widehat D_{n,0}(o) := \tau_n(w) - P_n \tau_n + \Pi_n\gamma_{0}(w)\{a-\pi_n(w)\}\{y - \mu_n(a,w)\}$ be an estimator of the efficient influence function $D_{0,P_0}$ of $\Psi_0$ provided in Theorem \ref{example::theorem::RlearnerEIF}. 
 By the first-order conditions characterizing the minimizer $\tau_n$,  we have that
\begin{align*}
     \widehat{\psi}_n \ &=\ \frac{1}{n}\sum_{i=1}^n \tau_n(W_i)\ =\  \frac{1}{n}\sum_{i=1}^n \tau_n(W_i)   + 
    \frac{1}{n}\sum_{i=1}^n \Pi_n \gamma_{0, \mathcal{T}_0}(W_i)\left\{A-\pi_n(W_i)\right\} \left\{Y_i -\mu_n(A_i,W_i) \right\} \\
    & =\  \widehat{\psi}_n + P_n \widehat D_{n,0}\ .
\end{align*}
As a consequence, we have the bias expansion
\begin{align*}
    \widehat{\psi}_n - \Psi_n(P_0)\ &=\  \widehat{\psi}_n + P_n \widehat D_{n,0}  - \Psi_n(P_0)\\
    &=\ (P_n-P_0)D_{0,P_{0}}  + (P_n-P_0)(\widehat D_{n,0} - D_{0,P_{0}}) + R_{n,0}
\end{align*} 
with
$R_{n,0} :=  \widehat{\psi}_n - \Psi_n(P_0) + P_0 \widehat D_{n,0}$. 

We first show that $ (P_n-P_0)( \widehat D_{n,0}- D_{0,P_{0}} ) = o_p(n^{-1/2})$. By \ref{cond::CATE::DonskerMLE} and preservation of the Donsker property under Lipschitz transformations  \citep{vanderVaartWellner}, $\widehat D_{n,0}- D_{0,P_{0}}$ falls in a Donsker class with probability tending to one. Hence, it suffices to show that $\|  \widehat D_{n,0}- D_{0,P_{0}} \|_{P_0} = o_p(1)$. To this end, we define $\alpha_n$ and $\alpha_0$ pointwise as $\alpha_n(a,w):= \Pi_n\gamma_{0, \mathcal{T}_0}(w) \{a- \pi_n(w)\}$ and $\alpha_0(a,w):= \gamma_{0, \mathcal{T}_0}(w) \{a- \pi_0(w)\}$. Then, we can write $\widehat D_{n,0}(o) = \tau_n(w) - P_n \tau_n + \alpha_n(a,w)\{y - \mu_n(a,w)\}$ and $D_{0,P_0}(o) = \tau_0(w) - P_0 \tau_0 + \alpha_0(a,w)\{y - \mu_0(a,w)\}$. To show that $\|\widehat D_{n,0} - D_{0,P_0}\|_{P_0} = o_p(1)$, we show that $\|\tau_n - \tau_0 - P_n \tau_n + P_0 \tau_0 \|_{P_0} = o_P(1)$ and $\|\alpha_n(\mathcal{I}_Y - \mu_n) -\alpha_0(\mathcal{I}_Y - \mu_0) \|_{P_0} = o_p(1)$ with $\mathcal{I}_Y: o \mapsto y$. By \ref{cond::CATE::consistentNuis}, \ref{cond::CATE::consistentWorking}, and the triangle inequality, we have that $\|\tau_n -   \tau_0 \|_{P_0} \leq \|\tau_n -   \Pi_n \tau_0 \|_{P_0} + \|\Pi_n \tau_0 -   \tau_0 \|_{P_0} = o_p(1)$. Moreover, $P_n \tau_n - P_0 \tau_0 = (P_n - P_0) \tau_n + P_0(\tau_n - \tau_0) = o_p(1)$ since $P_0(\tau_n - \tau_0) \leq \|\tau_n - \tau_0 \|_{P_0}  = o_p(1)$ and $(P_n - P_0) \tau_n = \mathcal{O}_p(n^{-1/2})$ given that $\tau_n$ falls in a Donsker class by \ref{cond::CATE::DonskerMLE}. Hence, $\|\tau_n - \tau_0 - P_n \tau_n + P_0 \tau_0 \|_{P_0} = o_P(1)$ by the triangle inequality. Next, we note that 
\begin{align*}
    \alpha_n(\mathcal{I}_Y - \mu_n) -\alpha_0(\mathcal{I}_Y - \mu_0)   = \alpha_n(\mu_0 - \mu_n)  + (\alpha_n - \alpha_0)(\mathcal{I}_Y - \mu_0)\ .
\end{align*} 
By \ref{cond::CATE::DonskerMLE}, $\mathcal{I}_Y - \mu_0$ and $\alpha_n$ are uniformly bounded so that, by the triangle inequality, the norm of the right-hand side is upper bounded by $\|\alpha_n - \alpha_0 \|_{P_0} + \|\mu_n - \mu_0\|_{P_0}$ up to a constant. We first show that $\|\alpha_n - \alpha_0 \|_{P_0} = o_p(1)$. We note that
\begin{align*}
    \alpha_n - \alpha_0\ =\ \Pi_n \gamma_{0, \mathcal{T}_0}(\mathcal{I}_A - \pi_n) - \gamma_{0, \mathcal{T}_0}(\mathcal{I}_A - \pi_0)\  =\  \Pi_n \gamma_{0, \mathcal{T}_0}(\pi_0 - \pi_n) - (\Pi_n \gamma_n - \gamma_{0, \mathcal{T}_0})(\mathcal{I}_A- \pi_0)
\end{align*} 
with $\mathcal{I}_A: o \mapsto a$, and that $\|(\Pi_n \gamma_n - \gamma_{0, \mathcal{T}_0})(\mathcal{I}_A- \pi_0)\|_{P_0} = \|\Pi_n \gamma_n - \gamma_{0, \mathcal{T}_0}\|_{w_0 P_0} = o_p(1)$ by \ref{cond::CATE::consistentNuis}. Moreover, since $\Pi_n \gamma_{0, \mathcal{T}_0}$ is bounded with probability tending to one by \ref{cond::CATE::DonskerMLE}, we have that $\| \Pi_n \gamma_{0, \mathcal{T}_0}(\pi_0 - \pi_n)  \|_{P_0} =\mathcal{O}_p\left( \|\pi_0 - \pi_n  \|_{P_0}\right) = o_p(1)$ by \ref{cond::CATE::consistentNuis}. We now show that $\|\mu_n - \mu_0\|_{P_0} = o_p(1)$. We note that, by the triangle inequality, \begin{align*}
    \|\mu_n - \mu_0\|_{P_0}\ &\leq\ \|m_n - m_0\|_{P_0} + \| (\mathcal{I}_A-\pi_n)\tau_n +  (\mathcal{I}_A - \pi_0)\tau_0 \|_{P_0}\\
    &\leq\ \|m_n - m_0\|_{P_0} +\| (\mathcal{I}_A-\pi_n)(\tau_n-\tau_0) +  (\pi_n - \pi_0)\tau_0 \|_{P_0}\\
    &=\ \mathcal{O}_p\left(\|m_n - m_0\|_{P_0} +\|\tau_n - \tau_0\|_{P_0} + \| \pi_n - \pi_0\|_{P_0}\right)\ ,
\end{align*} where we use that $\mathcal{I}_A$, $\pi_n$ and $\tau_0$ are bounded with probability tending to one by \ref{cond::CATE::DonskerMLE}.  Hence, by \ref{cond::CATE::consistentNuis}, we have that $\|\mu_n - \mu_0\|_{P_0}= o_p(1)$, and thus,  $\|\widehat D_{n,0} - D_{0,P_0}\|_{P_0} = o_p(1)$, as desired. 

It remains to show that $  R_{n,0} = o_p(n^{-1/2})$. First, we observe that
\begin{align*}
   R_{n,0}\ & =\  \widehat{\psi}_n - \Psi_n(P_{0}) + P_0 \widehat D_{n,0} \\
    &=\  E_0\left[\Pi_n \gamma_{0, \mathcal{T}_0}(W)\{A-\pi_n(W)\}\{Y - \mu_n(A,W) \} \right]   + E_0\{\tau_n(W) - \Pi_n \tau_0(W)\}\\
     &=\  E_0\left[\Pi_n \gamma_{0, \mathcal{T}_0}(W)\{A-\pi_n(W)\}\{\mu_0(A,W) - \mu_n(A,W) \} \right]   + E_0\{\tau_n(W) - \Pi_n \tau_0(W)\}\ ,\end{align*}
where we used the law of iterated expectations. Next, substituting $\mu_n := m_n + (\mathcal{I}_A - \pi_n)\tau_n$ and $\mu_0 := m_0 + (\mathcal{I}_A - \pi_0)\tau_0$, we find that $R_{n,0}=\text{(I)}+\text{(II)}+\text{(III)}$ with \begin{align*}
    \text{(I)}\ &:=\ E_0\left[\Pi_n \gamma_{0, \mathcal{T}_0}(W)\{A-\pi_n(W)\}\{m_0(W) - m_n(W)\}   \right]\\ 
    \text{(II)}\ &:=\ E_0\left(\Pi_n \gamma_{0, \mathcal{T}_0}(W)\{A-\pi_n(W)\}\left[ \{A-\pi_0(W)\}\tau_0(W) - \{A - \pi_n(W)\}\tau_n(W) \right] \right)\\ 
    \text{(III)}\ &:=\ E_0\{\tau_n(W) - \Pi_n \tau_0(W)\}\ .
\end{align*}
First, we note that $\text{(I)}=E_0\left[\Pi_n \gamma_{0, \mathcal{T}_0}(W)\{\pi_0(W)-\pi_n(W)\}\{m_0(W) - m_n(W)\}\right]$ by the law of iterated expectations,  and so,
since $\Pi_n \gamma_{0, \mathcal{T}_0}$ is bounded by \ref{cond::CATE::DRterm}, the Cauchy-Schwarz inequality implies that $\text{(I)}$ is of order $\mathcal{O}_p \left(\|\pi_n - \pi_0\|_{P_0}\|m_n - m_0\|_{P_0}\right)$. Next, we can write $\text{(II)}=\text{(IIa)}+\text{(IIb)}$ with \begin{align*}
    \text{(IIa)}\ &:=\ E_0\left[\Pi_n \gamma_{0, \mathcal{T}_0}(W)\{A-\pi_0(W)\}[  \{A-\pi_0(W)\}  \tau_0(W) - \{A - \pi_n(W)\}\tau_n(W) ] \right]\\
    \text{(IIb)}\ &:=\ E_0\left[\Pi_n \gamma_{0, \mathcal{T}_0}(W)\{\pi_0(W)-\pi_n(W)\}[  \{A-\pi_0(W)\}  \tau_0(W) - \{A - \pi_n(W)\}\tau_n(W) ] \right].
\end{align*}On one hand, we can write
 \begin{align*} 
 \text{(IIa)}\ &=\  E_0\left[\Pi_n \gamma_{0, \mathcal{T}_0}(W)\{A-\pi_0(W)\}^2  \tau_0(W)\right]  - E_0\left[\Pi_n \gamma_{0, \mathcal{T}_0}(W)\{A-\pi_0(W)\}\{A - \pi_n(W)\}\tau_n(W) ] \right]  \\
 &=\  E_0\left[\Pi_n \gamma_{0, \mathcal{T}_0}(W)\{A-\pi_0(W)\}^2  \Pi_n \tau_0(W)\right]  - E_0\left[\Pi_n \gamma_{0, \mathcal{T}_0}(W)\{A-\pi_0(W)\}\{A - \pi_n(W)\}\tau_n(W) ] \right]\\
 &=\ E_0\left[\Pi_n \gamma_{0, \mathcal{T}_0}(W)\{A-\pi_0(W)\}^2 \{\Pi_n\tau_0(W)  - \tau_n(W)\}\right]\\
 &\hspace{.5in}- E_0\left[\Pi_n \gamma_{0, \mathcal{T}_0}(W)\{A-\pi_0(W)\}\{\pi_0(W) - \pi_n(W)\}\Pi_n\tau_0(W)\right]\\
 &=\ E_0\left[\Pi_n \gamma_{0, \mathcal{T}_0}(W)\{A-\pi_0(W)\}^2 \{\Pi_n\tau_0(W)  - \tau_n(W)\}\right]  ,
 \end{align*}where, in particular, we have used the definition of the overlap-weighted projection $\Pi_n$ to replace $\tau_0$ by $\Pi_n \tau_0$,
On the other hand, using that $E_0\left\{A - \pi_0(W)\,|\, W\right\} = 0$ almost surely, we can write
 \begin{align*}
 \text{(IIb)}\ &=\  E_0\left[\Pi_n \gamma_{0, \mathcal{T}_0}(W)\{\pi_0(W)-\pi_n(W)\}[  \{A-\pi_0(W)\}  \tau_0(W) - \{A - \pi_n(W)\}\tau_n(W) ] \right]\\
 &=\ -E_0\left[\Pi_n \gamma_{0, \mathcal{T}_0}(W)\{\pi_0(W)-\pi_n(W)\}^2\tau_n(W) \right].
 \end{align*}
Hence, by the Cauchy-Schwarz inequality, we find that $\text{(II)}$ can be written as
 \begin{align*}
   &E_0\left[\Pi_n \gamma_{0, \mathcal{T}_0}(W)\{A-\pi_0(W)\}^2 \{\Pi_n\tau_0(W)  - \tau_n(W)\}\right]-E_0\left[\Pi_n \gamma_{0, \mathcal{T}_0}(W)\{\pi_0(W)-\pi_n(W)\}^2\tau_n(W) \right]\\
    &=\ E_0\left[\Pi_n \gamma_{0, \mathcal{T}_0}(W)\{A-\pi_0(W)\}^2 \{\Pi_n\tau_0(W)  - \tau_n(W)\}\right]+\mathcal{O}_p\left(\| \pi_n-  \pi_0 \|_{P_0}^2\right).
 \end{align*}
Combining \text{(III)} with our bounds for \text{(I)} and \text{(II)}, we finally find that 
\begin{align*}
     R_{n,0}\ &=\ E_0\left[\Pi_n \gamma_{0, \mathcal{T}_0}(W)\{A-\pi_0(W)\}^2 \{\Pi_n\tau_0(W)  - \tau_n(W)\}\right] + E_0\{\tau_n(W) - \Pi_n \tau_0(W)\}\\
     &\qquad +\mathcal{O}_p\left(\norm{\pi_n-\pi_0}_{P_0}\norm{m_n- m_0}_{P_0}\right) + \mathcal{O}_p\left(\| \pi_n-  \pi_0 \|_{P_0}^2 \right).
\end{align*}
Since $\tau_n -\Pi_n \tau_0 \in \mathcal{T}_n$ by \ref{cond::CATE::nestedModel}, and in view of the proof of Theorem \ref{example::theorem::RlearnerEIF}, we have that
\begin{align*}
    E_0\left\{\tau_n(W) - \Pi_n \tau_0(W)\right\}\ &=\ E_0\left[\Pi_n \gamma_{0, \mathcal{T}_0}(W)\{A-\pi_0(W)\}^2\{\tau_n(W) - \Pi_n \tau_0(W) \}\right]\\
    &=\  -E_0\left[\Pi_n \gamma_{0, \mathcal{T}_0}(W)\{A-\pi_0(W)\}^2\{\Pi_n \tau_0(W) - \tau_n(W) \}\right].
\end{align*}
Thus, by Condition \ref{cond::CATE::DRterm}, we conclude that $R_{n,0} = \mathcal{O}_p\left(\norm{\pi_0-\pi_n}_{P_0}\norm{m_n- m_0}_{P_0}+\norm{\pi_0-\pi_n}_{P_0}^2\right)$ is of order $o_p(n^{-1/2})$.\end{proof}

 \begin{proof}[Proof of part (ii)-(iii) of Theorem   \ref{example::theorem::RlearnerLimitDistORACLE}]
     
Under the stated conditions, Theorem \ref{example::theorem::RlearnerLimitDist} implies that the ADML estimator is asymptotically linear for $\Psi_n(P_0)$ with $\widehat{\psi}_n - \Psi_n(P_0) = (P_n - P_0)D_{0,P_0} + o_p(n^{-1/2})$, where $D_{0,P_0}$ is the efficient influence function of the oracle parameter $\Psi_0$. We will verify  that $\Psi_n(P_0) - \Psi_0(P_0)$ is $o_p(n^{-1/2})$ under the stated conditions. The result then follows from the proof of Theorem \ref{theorem::oracleEff}.

 Let $(a,w) \mapsto \mu_{n,0}(a,w) := m_0(w) + \{a - \pi_0(w)\}\Pi_n \tau_0(w)$ be an oracle approximation of $\mu_0$ compatible with $\Pi_n \tau_0$. By the Riesz representation theorem, we have that $\Psi_n(P_0)=E_0\left\{\Pi_n \tau_0(W)\right\}=E_0\left\{\mu_{n,0}(1,W) - \mu_{n,0}(0,W)\right\}=E_0\left\{\alpha_{n,0}(W) \mu_{n,0}(A,W)\right\}$,
where $\alpha_{n,0}(a,w) := \gamma_{0, \mathcal{T}_{n,0}}(w)\{a - \pi_0(w)\} \in \mathcal{H}_{n,0}$ is the regression Riesz representer implied by $\mathcal{T}_{n,0}$ (see the proof of Theorem \ref{example::theorem::RlearnerEIF}). We can write
\begin{align*}
    \Psi_n(P_0)\ &=\ E_0\left[\gamma_{0, \mathcal{T}_{n,0}}(W)\{A - \pi_0(W)\}\mu_{n,0}(A,W)\right] \\
    &=\  E_0\left[\gamma_{0, \mathcal{T}_{n,0}}(W)\{A - \pi_0(W)\}[m_0(W) + \{A - \pi_0(W)\}\Pi_n \tau_0(W)]\right]\\
    & =\  E_0\left[\gamma_{0, \mathcal{T}_{n,0}}(W)\{A - \pi_0(W)\}^2 \Pi_n \tau_0(W)\right].
\end{align*}
Similarly, since $\tau_0 \in \mathcal{T}_{n,0}$, we have that $ \Psi_0(P_0) =E_0\left[\gamma_{0, \mathcal{T}_{n,0}}(W)\{A - \pi_0(W)\}^2 \tau_{0}(W)\right] $. Therefore, we can write 
$\Psi_n(P_0) - \Psi_0(P_0)  =   E_0\left[\gamma_{0, \mathcal{T}_{n,0}}(W)\{A - \pi_0(W)\}^2 \{\Pi_n \tau_0(W) - \tau_0(W)\}\right]$.
Since $\Pi_n \tau_0$ is the $\pi_0(1-\pi_0)$--weighted projection of $\tau_0$ onto $\mathcal{T}_n$, the orthogonality condition \[E_0\left[\gamma(W)\{A - \pi_0(W)\}^2 \{\Pi_n \tau_0(W) - \tau_0(W)\}\right] = 0\] holds for each $\gamma \in \mathcal{T}_n$. In particular, for the choice $\gamma = \Pi_n \gamma_{0, \mathcal{T}_0} $, we find that
\begin{align*}
    \Psi_n(P_0) - \Psi_0(P_0)\ &=\ E_0\left[\{\gamma_{0, \mathcal{T}_{n,0}}(W) - \Pi_n \gamma_{0, \mathcal{T}_{n,0}} (W)\}\{A - \pi_0(W)\}^2 \{\Pi_n \tau_0(W) - \tau_0(W)\}\right]\\
    &=\ \mathcal{O}_p\left(\norm{\Pi_n \gamma_{0, \mathcal{T}_0}  - \gamma_{0, \mathcal{T}_{n,0}}}_{w_0P_0}\norm{\Pi_n \tau_0 - \tau_0}_{w_0P_0}\right)
\end{align*}
by the Cauchy-Schwarz inequality. Thus, as desired, $\Psi_n(P_0)-\Psi_0(P_0)$ is of order $o_p(n^{-1/2})$ by \ref{cond::CATE::critBias}.

We have shown that $  \widehat{\psi}_n - \Psi_0(P_0) = (P_n - P_0) D_{0,P_0} + o_p(n^{-1/2}) $, so that $\widehat{\psi}_n$ is an asymptotically linear estimator with influence function $D_{0,P_0}$ equal to the efficient influence function of $\Psi_0$ under $\mathcal{M}_{\mathrm{np}}$. Thus, $\widehat{\psi}_n$ is efficient for $\Psi_0(P)$. Moreover, since efficient estimators are necessarily regular \citep{vanderVaartWellner}, we also have that $\widehat{\psi}_n$ is regular for $\Psi_0$. Finally, if the conditional variance of the outcome is almost surely constant, then $D_{0,P_0}$ is also the efficient influence function of $\Psi$ relative to the oracle model $\mathcal{M}_0$ \citep{chernoRegRiesz}. The result then follows.\end{proof}

\begin{proof}[Proof of Corollary \ref{cor::CATEinf}]
    This is a consequence of Theorem \ref{theorem::oracleRegularity} and regularity of the ADML estimator for $\Psi_0$.\end{proof}

\end{document}